\DeclareMathOperator*{\argmin}{arg\,min}
\newcommand{\myscale}{0.4}
\newcommand{\tvfud}{\mu}
\newcommand{\tieset}{F}
\renewcommand{\Pr}{\mathrm{P}}
\newtheorem{lemma}{Lemma}
\newtheorem{theorem}{Theorem}
\newtheorem{proposition}{Proposition}
\newtheorem{observation}{Observation}
\newtheorem{remark}{Remark}
\title{Facility Location Games Beyond Single-Peakedness: the Entrance Fee Model}
\author{
  Mengfan Ma\\
  School of Computer Science \\and Engineering, University of \\Electronic Science and Technology of China\\
  \texttt{mengfanma1@gmail.com} \\
  \And 
  Mingyu Xiao\\
  School of Computer Science \\and Engineering, University of \\Electronic Science and Technology of China\\
  \texttt{myxiao@uestc.edu.cn}\\
  \And
  Tian Bai \\
  School of Computer Science \\and Engineering, University of \\Electronic Science and Technology of China\\
  \texttt{tian.bai.cs@outlook.com} \\
  \And
  Bakh Khoussainov\\
  School of Computer Science \\and Engineering, University of \\Electronic Science and Technology of China\\
  \texttt{bmk@uestc.edu.cn}\\
}
\begin{document}

\maketitle

\begin{abstract}
      The facility location game has been studied extensively in mechanism design. In the classical model, each agent's cost is solely determined by her distance to the nearest facility. In this paper, we introduce a novel model where each facility charges an entrance fee. Thus, the cost of each agent is determined by both the distance to the facility and the entrance fee of the facility. In our model, the entrance fee function is allowed to be an arbitrary function, causing agents' preferences may no longer be single-peaked anymore: This departure from the classical model introduces additional challenges. We systematically delve into the intricacies of the model, designing strategyproof mechanisms with favorable approximation ratios. Additionally, we complement these ratios with nearly-tight impossibility results. Specifically, for one-facility and two-facility games, we provide upper and lower bounds for the approximation ratios given by deterministic and randomized mechanisms with respect to utilitarian and egalitarian objectives. 
\end{abstract}

\section{Introduction}
In the one-dimensional facility location problem, agents are located on the real line, and a planner's goal is to build one or more facilities on the line to serve the agents. The cost of an agent is her distance to the nearest facility. The problem asks to locate facilities to minimize the total cost of all agents (the utilitarian objective) or the maximum cost among all agents (the egalitarian objective). It is well-known that both optimization problems can be solved in polynomial time \cite{love1976adynamic}. 
Over the past decade \cite{procaccia2009approximate,lu2010asymptotically,alon2010strategyproof,cai2016facility,aziz2020facility,wang2023multi}, these problems have undergone intensive study from the perspectives of mechanism design and game theory.
A key conversion in the models is that now each agent becomes strategic and may misreport her position to decrease her cost. These new problems are called the \emph{facility location games}, which require to design \emph{mechanisms} that elicit the true positions of agents and output facility locations to (approximately) minimize the total or maximum cost.

\citet{moulin1980strategy} provides a complete characterization of strategyproof mechanisms for the facility location game on line where agents have single-peaked preferences, with no consideration for the optimization of the objective function.
The seminal work of Procaccia et al.\! \cite{procaccia2009approximate} initiates the study of approximate mechanism design without money for the facility location game.
They study strategyproof mechanisms for one-facility and two-facility games through the lens of \emph{approximation ratio} of the objective, which provides a way to quantify the fundamental tension between truthfulness and optimality.
Since then, the facility location game has become one of the main grounds for approximate mechanism design without money and has attracted numerous follow-up research~\cite{lu2009tighter,lu2010asymptotically, alon2010strategyproof, fotakis2014power}.
In particular, \citet{fotakis2014power} show that no deterministic strategyproof mechanism can achieve a bounded approximation ratio when there are more than two facilities.
The current status of the classical facility location games on the real line with one or two facilities is summarized in Table \ref{tb_class_results}. We can observe that all the bounds are tight or asymptotically tight.

After the introduction of the classical model, many variants have been studied flourishingly to accommodate more practical scenarios.
The recent survey by \citet{Chan2021mechanism} depicts state of the art. Here, we mention some of the models: obnoxious facility games where every agent wants to stay away from the facility \cite{cheng2013strategy, feigenbaum2015strategyproof};
heterogeneous facility games where the acceptable set of facilities for each agent could be different \cite{serafino2015truthful, li2019strategyproof,deligkas2022heterogeneous,Kanellopoulos2022ondiscrete};
the games with different objectives or purposes \cite{cai2016facility, zhou2022strategyproof};
the game with capacitated facilities \cite{aziz2020facility,aziz2020capacity, walsh2022strategy}; the game with strategic facilities \cite{li2021budgeted}; the game where agents can hide their locations \cite{hossain2020thesurprising}; extensions to trees, cycle and networks \cite{schummer2002strategy, alon2010strategyproof};
and so on.

\smallskip
\noindent
\textbf{Facility location games with entrance fees.}
In all the above models, the cost of an agent is measured by her distance to the closest facility. This cost can be considered as the \emph{travel fee}. In many real-life scenarios, except for the travel fee, the agent may also need to pay the facility a service or entrance fee, such as tickets for swimming pools and museums.
The entrance fees may differ for facilities in different locations. An immediate example is building a facility downtown would be more expensive than in the suburbs.
The entrance fee of the facility is decided by the building cost and thus also decided by the location where the facility is built. Another example is that the environment around the facility may {{have}} an impact on the entrance fee. For example, the entrance fee of a facility in a popular scenic spot can be higher than the entrance fee of the same facility in a desolate place.
Various non-geographical settings can be accommodated within the location-dependent entrance fee model. Two illustrative examples include:
(1) In a scenario where voters exhibit a single-peaked preference over potential candidates, after the election, the winning candidates implement their own tax policy on the voters. This tax can be conceptualized as an entrance fee, with the act of selecting the winning candidates akin to choosing a facility location. 
(2) Consider the context of roommates residing in a college dormitory faced with the task of determining the air conditioner's temperature. The associated electricity bill, contingent upon the temperature setting, is subsequently divided equally among the roommates. Different temperature settings may result in varying division of the bill, representing an entrance fee determined by the chosen temperature.

\begin{table*} [t]
    \small
    \centering
    \begin{tabular}
        {|c|c|c|c|c|c|}
    \hline
    \multicolumn{2}{|c|}{\multirow{2}[4]{*}{}} & \multicolumn{2}{c|}{{Total cost}} & \multicolumn{2}{c|}{{Maximum cost}} \\
    \cline{3-6}\multicolumn{2}{|c|}{} & \multicolumn{1}{c|}{Upper bounds} & \multicolumn{1}{c|}{Lower bounds} & \multicolumn{1}{c|}{Upper bounds} & \multicolumn{1}{c|}{Lower bounds} \\
    \hline
    \multicolumn{1}{|c|}{\multirow{2}[4]{*}
        {\thead{{Deterministic} \\ {Mechanisms}}}
    }
    &
        \thead{{One-facility}}
    &
        \thead{$1$ \cite{procaccia2009approximate}}
    &
		\thead{$1$ \cite{procaccia2009approximate}}
    &
    \multirow{2}[4]{*}{
        \thead{$2$ \cite{procaccia2009approximate}}
    }
    &
    \multirow{2}[4]{*}
    {
        \thead{$2$ \cite{procaccia2009approximate}}
    } \\
    \cline{2-4}
    &
        \thead{{Two-facility}}
    &
        \thead{\(n-2\) \cite{procaccia2009approximate}}
    &
        \thead{\(n-2\) \cite{fotakis2014power}}
    &
    &  \\
    \cline{1-4}\cline{5-6}\multicolumn{1}{|m{60.001pt}<{\centering}|}{\multirow{2}[4]{*}
    {
        \thead{{Randomized} \\ {Mechanisms}}
    }}
    &
        \thead{{One-facility}}
    &
        \thead{\(1\) \cite{procaccia2009approximate}}
    &
        \thead{$1$ \cite{procaccia2009approximate}}
    &
        \thead{$\frac{3}{2}$ \cite{procaccia2009approximate}}
    &
    \multirow{2}[4]{*}
    {
        \thead{$\frac{3}{2}$ \cite{procaccia2009approximate}}
    } \\
    \cline{2-5}
    &
        \thead{{Two-facility}}
    &
        \thead{\(4\) \cite{lu2010asymptotically}}
    &
        \thead{\(1.045\) \cite{lu2009tighter}}
    &
        \thead{$\frac{5}{3}$ \cite{procaccia2009approximate}}
    &  \\
    \hline
    \end{tabular}
    \caption{Upper and lower bounds for the approximation ratios of strategyproof mechanisms in the classical model.}
    \label{tb_class_results}
\end{table*}

\begin{table*} [t]
    \small
    \centering
    \begin{tabular}
		{|c|c|c|c|c|c|}
    \hline
    \multicolumn{2}{|c|}{\multirow{2}[4]{*}{}} & \multicolumn{2}{c|}{Total cost} & \multicolumn{2}{c|}{Maximum cost} \\
    \cline{3-6}\multicolumn{2}{|c|}{} & \multicolumn{1}{c|}{Upper bounds} & \multicolumn{1}{c|}{Lower bounds} & \multicolumn{1}{c|}{Upper bounds} & \multicolumn{1}{c|}{Lower bounds} \\
    \hline
    \multicolumn{1}{|c|}{\multirow{2}[4]{*}
          {
            \thead{Deterministic \\ Mechanisms}
          }
    }
    &
    \thead{One-facility}
    &
    \thead{$3$}
    &
    \thead{$3$}
    &
    \multirow{4}[8]{*}
    {
    \thead{$3$}
    }
    &
    \multirow{2}[4]{*}
    {
    \thead{$3$}
    } \\
    \cline{2-4}
    &
    \thead{Two-facility}
    &
    \thead{\(n-1\)}
    &
    \thead{\(n-2\)}
    &
    &  \\
    \cline{1-4}\cline{6-6}\multicolumn{1}{|c|}{\multirow{2}[4]{*}
    {
      \thead{Randomized \\ Mechanisms}
    }}
    &
    \thead{One-facility}
    &
    \thead{$3 - \frac{2}{n}$}
    &
    \thead{$2$}
    &
    &
    \thead{$2$}\\
    \cline{2-4} \cline{6-6}
    &
    \thead{Two-facility}
    &
    \thead{\(n-1\)}
    &
    \thead{\(2\)}
    &
    &
    \thead{$2$} \\
    \hline
    \end{tabular}%
    \caption{Simplified upper and lower bounds for the approximation ratios of strategyproof mechanisms in our model.}
    \label{tb_our_results_simple}
\end{table*}

\begin{table*} [t]
    \centering
    \resizebox{\textwidth}{!}{
    \begin{tabular}
		{|m{35.001pt}<{\centering}|m{2.602em}<{\centering}|m{6.003em}<{\centering}|m{11.504em}<{\centering}|m{9.605em}<{\centering}|m{13.606em}<{\centering}|}
    \hline
    \multicolumn{2}{|c|}{\multirow{2}[4]{*}{}} & \multicolumn{2}{c|}{{Total cost}} & \multicolumn{2}{c|}{{Maximum cost}} \\
    \cline{3-6}\multicolumn{2}{|c|}{} & \multicolumn{1}{c|}{Upper bounds} & \multicolumn{1}{c|}{Lower bounds} & \multicolumn{1}{c|}{Upper bounds} & \multicolumn{1}{c|}{Lower bounds} \\
    \hline
    \multicolumn{1}{|m{35.001pt}<{\centering}|}{\multirow{2}[4]{*}
          {
            \thead{{Deter-}\\ {ministic} \\ \\ {Mecha-} \\ {nisms}}
          }
    }
    &
    \thead{{One-}\\ {facility}}
    &
    \thead{${3 - \frac{4}{r_e+1}}$ \\
    ({ {Thm. \ref{thm_1f_tc_apx}}})}
    &
    \thead{$3 - \frac{16}{r_{e} + 5 + \sqrt{r_{e}^{2} + 10r_{e} - 7}}$\\
    ({ {Thm. \ref{thm_1f_tc_d_lb_3re}}})}
    &
    \multirow{4}[8]{*}
    {
    \thead{\\ \\ \\ \\ \\ \\ \\ \\ \\
    $2$, \hspace{2.3em}if $r_e\le 2$ \\
    $3-\frac{2}{r_e}$, if $r_e > 2$\\ \\
    ({Thm. \ref{thm_1f_mc_apx} and Prop. \ref{prop_2f_mc_apx}})
    }
    }
    &
    \multirow{2}[4]{*}
    {
          \thead{\\ \\ $2$, \hspace{9.78em}if $r_e\!\le 6$ \\
          $3-\frac{28}{\sqrt{r_e^{2}+20r_e-12}+r_e+10}$, if $r_e \!> 6$\\
		  \\
           ({{ Thms. \ref{thm_1f_mc_lb_d_2} and \ref{thm_1f_mc_d_lb_3re}}})\\
           ({ {Props. \ref{prop_2f_mc_lb_2} and \ref{prop_2f_mc_d_lb_3re}}})}
    } \\
    \cline{2-4}
    &
    \thead{{Two-}\\ {facility}}
    &
    \thead{\(n-1\)  \\
    ({ {Prop. \ref{prop_2f_tc_apx}}})}
    &
    \thead{\\ \(n-2\text{,\hspace{62.5pt}if } r_e\!=\!1\)
	\\
	\(3\! -\! \frac{24}{r_{e} + 7 + \sqrt{r_{e}^{2} \!+ 14r_{e} + 1}} \text{, if } r_e\!>\!1\)\\ \\
	(\cite{fotakis2014power})\\
	({Prop.} \ref{prop_2f_tc_d_lb_3re})
	}
    &
    &  \\
    \cline{1-4}\cline{6-6}\multicolumn{1}{|m{35.001pt}<{\centering}|}{\multirow{2}[4]{*}
    {
      \thead{{Rand-}\\ {omized} \\ \\ {Mecha-} \\ {nisms}}
    }}
    &
    \thead{{One-}\\ {facility}}
    &
    \thead{$3 - \frac{2}{n}$  \\
    ({ {Thm. \ref{thm_1f_tc_r_ub}}})}
    &
    \thead{$\frac{\sqrt{2} + \!1}{2} \!-\! \frac{1}{(\!4+\!2\sqrt{2}) r_{e} \!-\! 2}$, if  \(r_e\!<\! +\infty\) \\
    $2$, \hspace{7.39em}if \(r_e\!=\!+\infty\) \\
    ({Thms. \ref{thm_1f_tc_r_lb_2} and \ref{thm_1f_tc_r_lb_re}})}
    &
    &
	\thead{\\
		\(\frac{24\sqrt{2}+27}{47} - \frac{4}{(60\sqrt{2} + 97)r_{e} - (54\sqrt{2} + 92)}\)\vspace{3pt}, \\ \hspace{90pt} if \(r_e< +\infty\) \vspace{3pt}\\
		$2$, \hspace{77pt} if \(r_e=+\infty\) \\
		({ {Thms. \ref{thm_1f_mc_r_lb_re} and \ref{thm_1f_mc_r_lb_2}}})} \\
    \cline{2-4} \cline{6-6}
    &
    \thead{{Two-}\\ {facility}}
    &
    \thead{\(n-1\) \\
           ({ {Prop. \ref{prop_2f_tc_apx}}})}
    &
    \thead{\\ \(1.045\), \hspace{14 pt}if \(r_e=1\)\\
	\(2\), \hspace{19pt}if \(r_e=+\infty\) \vspace{4pt}\\
    (\cite{lu2010asymptotically})\\
	({Prop.} \ref{prop_2f_tc_r_lb_2})}
    &
    &
	\thead{\\
		$2$ if \(r_e=+\infty\) \\
		({ {Prop. \ref{prop_2f_mc_r_lb}}})} \\
    \hline
    \end{tabular}%
    }
    \caption{Upper and lower bounds for the approximation ratios of strategyproof mechanisms in our model, parameterized by \(r_e\).}
    \label{tb_our_results}
\end{table*}

All of the above motivate us to initiate the study of the \emph{facility location games with entrance fees}.
In our setting, the planner needs to locate a given number of facilities on the real line $\mathbb R$ to serve agents on the line.
Each facility, once located, has an entrance fee determined by its location. The cost of an agent is the sum of the travel fee (distance to the facility) and the entrance fee of the facility. Each agent will use one facility at a minimum cost.
\footnote{In this paper, we make the assumption that facilities are homogeneous in functionality, and agents have a unit-demand for the services provided by these facilities. The consideration of heterogeneous facilities falls outside the scope of this paper, and we direct interested readers to \cite{li2019strategyproof,Kanellopoulos2022ondiscrete,deligkas2022heterogeneous}.}
The position of each agent is private information. We want to design strategyproof mechanisms that guarantee that the agents report their true positions and locate the facilities based on the reports such that either the total or the maximum cost approximates the optimal value of the corresponding optimization problem as closely as possible. Figure \ref{fig_ex_ef} illustrates the concept of the entrance fee function.

\begin{figure} [h]
      \centering
      \includegraphics[width=\myscale\linewidth]{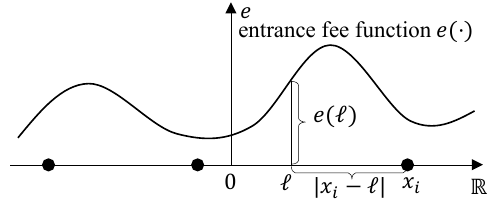}
      \caption{The concept of the entrance fee function. The three black dots are agents. The vertical axis stands for the value of the entrance fee. The curve depicts an entrance fee function \(e: \mathbb{R}\rightarrow \mathbb{R}_{\ge 0}\!\cup\! \{+\infty\}\). If the facility is located at \(\ell\), the entrance fee is \(e(\ell)\) and the cost of agent \(i\) is \(\tvfud|\ell-x_i|+e(\ell)\), where \(\tvfud\) is the cost per unit-of-distance.} 
      \label{fig_ex_ef}
\end{figure}

\smallskip
\noindent
\textbf{Contribution.}
Our primary conceptual contribution lies in introducing the concept of an \emph{entrance fee function} to facility location games. In our model, the entrance fee function is arbitrary, forming a dynamic part of the input. This captures a broader range of real-life scenarios where facilities incur location-dependent service fees for their customers. Furthermore, the notion of entrance fees can be applied to various previous variants of facility location problems, expanding the scope of research in this area.

However, the arbitrariness of the entrance fee function introduces new challenges in designing strategyproof mechanisms. Agent preferences may no longer adhere to single-peakedness \cite{moulin1980strategy, Barbera1993generalized}, and standard mechanisms for the classical model cannot be directly extended to our setting while preserving strategyproofness. To address this, we consider the optimal location for each agent, minimizing her cost if a facility is built at this optimal location. We design mechanisms over the space of these optimal locations, showing that they satisfy a crucial monotonicity property for strategyproofness. Our results carry an important theoretical implication:

\begin{center}
    \begin{tabular}{l}
        \emph{Classical mechanisms can be extended to a preference domain beyond single-peakedness} \\
        \emph{while maintaining strategyproofness.} 
    \end{tabular}
\end{center}

To analyze the approximation ratios of our mechanisms, we introduce the notion of the max-min ratio of the entrance fee function \(e\), denoted by \(r_e\). Intuitively, \(r_e\) quantifies the "turbulence" of \(e(\cdot)\), defined as the ratio of the maximum value to the minimum value of \(e(\cdot)\). Our analysis leads to approximation ratios parameterized by \(r_e\), as summarized in \Cref{tb_our_results}. According to \Cref{tb_our_results}, we can get a simplified summary without the parameter \(r_e\), as in \Cref{tb_our_results_simple}.
As one might expect, the approximation ratio increases with \(r_e\). What may be surprising is that the "turbulence" of \(e(\cdot)\) has a limited impact on the approximation ratio of our mechanisms, compared to their classical counterparts.
Notably, for the one-facility game, the approximation ratio for the total cost of the extended \textsc{Median} mechanism only deteriorates from \(1\) to \(3\) as \(r_e\) approaches \(+\infty\). More evidence can be found by comparing \Cref{tb_class_results,tb_our_results}.
When the entrance fee is consistently 0, our model reduces to the classical model. Therefore, our mechanisms must also encompass those of the classical model: By letting \(r_e=1\), one can verify that the ratios in \Cref{tb_our_results} match those in \Cref{tb_class_results}.

Moreover, we complement the proposed mechanisms with tight or nearly tight lower bounds, also parameterized by \(r_e\). While lower bounds for the classical model are applicable in our model, given that the classical model is a special case of ours, we establish new and often improved bounds in most cases. The techniques employed for deriving these lower bounds are novel and distinct from those used in the standard model due to the presence of entrance fee functions.

The rest of the paper is organized as follows. Section \ref{sect_model} presents the formal definitions of our model. In Section \ref{sect_stct_prop}, we derive some useful {{technical}} properties.
Sections \ref{sect_1f_tc} and \ref{sect_1f_mc} present our main results for the one-facility game with total and maximum cost objectives, respectively.
In Section \ref{sect_2f}, we extend our results {{from}} one-facility games to two-facility games.
In Section \ref{sect_con}, we draw conclusions and point out some possible directions for future work.

\section{Model} \label{sect_model}

Let $N= \{1,\dots, n\}$ be a set of \emph{agents}  on real line $\mathbb R$. Let $x_i$ be the position of agent $i$.
The \emph{agent position profile} is {{the vector}} \(\mathbf{x}=(x_1, \dots, x_n)\).
We assume agents are ordered such that $x_1 \le \dots \le x_n$.
We need to build  $m$ \emph{facilities} on $\mathbb R$ to serve the agents. {{If}} we put facility $j$ at location $\ell_j$, then this gives us the \emph{facility location profile} \(\bm{\ell}=(\ell_1, \dots, \ell_m)\).
A singleton $\{i\}$ may be simply written as $i$.

If an agent \(i\) is served by a facility \(j\), the agent incurs a travel fee of \(\tvfud \cdot |x_i - \ell_j|\), where \(\tvfud > 0\) represents the \emph{cost per unit-of-distance}. Additionally, each facility imposes an \emph{entrance fee}, determined by its location. Formally, an \emph{entrance fee function} \(e: \mathbb{R}\rightarrow \mathbb{R}_{\ge 0} \cup \{+\infty\}\) is defined\footnote{Note that, in our definition, a location's entrance fee can be \(+\infty\). This is justified in scenarios where such a location exists: either the planner will never build a facility at it, or no agents will select it. We also establish the relation \(x < +\infty\) for any \(x \in \mathbb{R}_{\ge 0}\).}. The entrance fee of the facility located at \(\ell\) is denoted by \(e(\ell)\). Thus, when an agent \(i\) selects facility \(j\), the total cost incurred by the agent is the sum: \(\tvfud |x_i - \ell_j| + e(\ell_j)\). 
Hereafter, we normalize \(\tvfud\) to be \(1\) for simplicity, and it is important to note that all our results remain applicable for any other positive value of \(\tvfud\).\footnote{In this paper, we make the assumption that the travel fee is a linear function of the distance. While it is undoubtedly intriguing to explore more complex functions, such considerations are deferred to future work.}

For an entrance fee function \(e\), let \(e_{\mathrm{max}}=\max_{x\in \mathbb R} e(x) \) and \(e_{\mathrm{min}}=\min_{x\in \mathbb R} e(x) \).
The \emph{max-min ratio} of \(e\), denoted by $r_e$, is defined as \(r_e=1\) if \(e_{\mathrm{min}}=e_{\mathrm{max}}\);  \(r_e=+\infty\) if \(e_{\mathrm{min}}=0\) and \(e_{\mathrm{max}}>0\); \(r_e=e_{\mathrm{max}} /e_{\mathrm{min}}\) otherwise.

Each agent always selects a  facility that minimizes the sum of her travel fee and entrance fee.
So, we define the \emph{cost} of agent $i$ for a given facility location profile \(\bm{\ell}\) as $\mathrm{cost}(x_i, \bm{\ell}):=\min_{\ell\in \bm{\ell}}\bigl(|x_i-\ell|+e(\ell)\bigr)$. If there is more than one facility that minimizes the cost, we use the following tie-breaking rule:
Let \(\tieset\) be the set of facility locations with equal cost. We choose the facility with the smallest entrance fee. If there are multiple facilities with equal smallest entrance fees, we select the rightmost one.
\footnote{In fact, our results hold under any tie-breaking rule that selects a location from \(\tieset\) in a deterministic and consistent way.}

We consider two classical objectives: the \emph{utilitarian objective} and \emph{egalitarian objective}.
For the utilitarian objective, we want to minimize the total cost of all agents that we denote by:
$TC(\mathbf{x}, \bm{\ell})=\sum_{i=1}^{n}\mathrm{cost}(x_i, \bm{\ell}).$
For the egalitarian objective, we want to minimize the maximum cost of all agents, which is denoted by $MC(\mathbf{x}, \bm{\ell})=max_{i\in N} \bigl(\mathrm{cost}(x_i, \bm{\ell})\bigr)$. 

\begin{remark}\label{remark}
    To keep our exposition general, we assume that there is an oracle that computes $e(x)$ for a given \(x\in \mathbb{R}\), and for any integers \(a\) and \(b\), finds the minimum of $a\cdot e(x)+bx$ in a given interval.
\end{remark}

In the setting of mechanism design, each agent \(i\)'s location \(x_i\) is private. An agent reports a location \(x_i'\) that may differ from her true position. 
A \emph{deterministic mechanism} is a function  \(f(\cdot ,\cdot): \mathcal E \times \mathbb{R}^n\rightarrow \mathbb{R}^m\), where $\mathcal E$ is the set of all entrance fee functions $e: \mathbb R \rightarrow \mathbb R_{\ge 0}$. 
A \emph{randomized mechanism} is a function \(f(\cdot ,\cdot): \mathcal E \times \mathbb{R}^n\rightarrow \Delta(\mathbb{R}^m)\), where $\mathcal E$ is the set of all entrance fee functions $e: \mathbb R \rightarrow \mathbb R_{\ge 0}$ and \(\Delta(\mathbb{R}^m)\) is the set of all probability distributions over \(\mathbb{R}^m\). For a given $e$ and a randomized mechanism \(f(\cdot,\cdot)\), the cost of an agent \(i\in N\) is defined as the expected cost of \(i\), i.e., \(\mathrm{cost}(x_i,f(e,\mathbf{x})):=\mathbb{E}_{\bm{\ell}\sim f(e,\mathbf{x})}[\mathrm{cost}(x_i,\bm{\ell})]\). For any agent position profile \(\mathbf{x}\in \mathbb{R}^n\) the total cost is defined as \(TC(\mathbf{x},f(e,\mathbf{x}))=\mathbb{E}_{\ell\sim f(e,\mathbf{x})}[TC(\mathbf{x},\ell)]\), 
and the maximum cost is defined as \(MC(\mathbf{x},f(e,\mathbf{x}))=\mathbb{E}_{\ell\sim f(e,\mathbf{x})}[MC(\mathbf{x},\ell)]\).

Not only do we consider general mechanisms \(f(\cdot,\cdot)\) taking the entrance fee function as a part of the input, but we also consider mechanisms \(f(e,\cdot)\) for a given entrance fee function $e$.  For the entrance fee function \(e(\cdot)=0\), mechanism \(f(e,\cdot)\) is for the classical model without an entrance fee.
Let \(\alpha\ge 1\) and \(\mathcal{E}(\alpha):=\{e\in \mathcal{E}|r_e=\alpha\}\). 
We may also consider mechanisms under the constraint that the the max-min ratio of the entrance fee function is fixed at a given \(\alpha\), i.e., 
\(f(\cdot ,\cdot): \mathcal{E}(\alpha) \times \mathbb{R}^n\rightarrow \mathbb{R}^m\) or \(f(\cdot ,\cdot): \mathcal{E}(\alpha) \times \mathbb{R}^n\rightarrow \Delta(\mathbb{R}^m)\).

The number of facilities \(m\) and the entrance fee function \(e\) are publicly known.
Given that an agent might misreport her location to decrease her cost, it is necessary to design \emph{(group) strategyproof} mechanisms. Formally, a mechanism \(f(\cdot, \cdot)\) is \emph{group strategyproof} if for any entrance fee function $e$, any profile \(\mathbf{x}\), any coalition \(S\subseteq N\) with any sub-position profile \(\mathbf{x}_S^{\prime}\in \mathbb{R}^{|S|}\),
there exists an agent \(i \in S\) such that \(\mathrm{cost}(x_i, f(e,\mathbf{x}))\le \mathrm{cost}(x_i, f(e,\mathbf{x}'))\), where $ \mathbf{x}'= (\mathbf{x}_S^{\prime}, \mathbf{x}_{-S})$ is obtained from $\mathbf{x}$ by replacing the position profile of all agents in $S$ with $\mathbf{x}_S^{\prime}$. Additionally, if $S$ only contains one agent, we say \(f(\cdot,\cdot)\) is  \emph{strategyproof}.

(Group) strategyproof mechanisms may not be able to achieve the optimal value for one of the two objectives. Thus we use \emph{approximation ratio} to evaluate the performance of the mechanism.
For an entrance fee function $e$ and position profile \(\mathbf{x}\in \mathbb{R}^n\),
let \(OPT_{tc}(e,\mathbf{x})\)
and \(OPT_{mc}(e,\mathbf{x})\)
be the optimal total cost and maximum cost
of the optimization problems, respectively.
For an entrance fee function $e$ and position profile \(\mathbf{x}\), the approximation ratio of  \(f(e,\mathbf{x})\) is defined as
\begin{align*}
    \gamma(f(e,\mathbf{x})):=\frac{TC(\mathbf{x},f(e,\mathbf{x}))}{OPT_{tc}(e,\mathbf{x})},
\end{align*}
the approximation ratio of  \(f(e,\cdot)\) is defined as  \(\gamma(f(e,\cdot)):=\sup_{\mathbf{x}}\gamma(f(e,\mathbf{x}))\), and the approximation ratio of \(f(\cdot,\cdot)\) is defined as \(\gamma(f(\cdot,\cdot)):=\sup_{e\in \mathcal{E}} \gamma(f(e,\cdot))\).
The approximation ratio for the maximum cost is defined in the same way by replacing $\frac{TC(\mathbf{x},f(e,\mathbf{x}))}{OPT_{tc}(e,\mathbf{x})}$ with $\frac{MC(\mathbf{x},f(e,\mathbf{x}))}{OPT_{mc}(e,\mathbf{x})}$.

\section{Structural Properties} \label{sect_stct_prop}
In this section we present some useful insights regarding the structure of agents' optimal location and the solution of the optimization problems. 
As we shall see in the later sections, these properties are useful in designing and analyzing strategyproof mechanisms.
Additionally, we show that the optimization problems can be solved in polynomial time.

\subsection{Monotone and Local Properties}
Given an entrance fee function \(e\!:\!\mathbb{R}\!\rightarrow\! \mathbb{R}_{\ge 0}\) and an agent's position $x \in \mathbb R$, let $x^*$ be the facility location that minimizes the cost of the agent, i.e., \(x^* := \argmin_{\ell\in \mathbb{R}} (|x-\ell|+e(\ell))\).
If multiple locations minimize the cost, we use the tie-breaking rule in \Cref{sect_model}.
We call \(x^*\) the \emph{optimal location for} \(x\), and \(C_i:= \mathrm{cost}(x_i, x_i^*)\) the \emph{optimal cost} for the agent \(i\).

To find the optimal location \(x^*\) for $x$, the definition of \(x^*\) asks us to search the whole space of reals. Observing that if the facility is placed at \(x\), we know the cost for the agent at \(x\) is \(e(x)\). Thus the optimal cost of \(x\) is at most \(e(x)\). Then the global search for \(x^*\) can be reduced to a local search in the neighborhood of \(x\): \(x^* = \argmin_{\ell: |\ell-x|\le e(x)} |\ell-x|+ e(\ell).\)
Then by \Cref{remark}, \(x^*\) can be obtained in a constant time. Next, we derive an important property of \(x^*\).

\begin{lemma} [\textbf{Monotonicity}] \label{lm_xs_monotone}
    For any \(x_i,x_j \in \mathbb{R}\), let \(x_i^*\) and \(x_j^*\) be the optimal location for \(x_i\) and \(x_j\), respectively. Then we have \(x_i^* \leq x_j^*\) if and only if \(x_i\le x_j\).
\end{lemma}
\begin{proof}
    By the definition of \(x_i^*\) and \(x_j^*\) we have
    \begin{align}
        \mathrm{cost}(x_i,x_i^*) - \mathrm{cost}(x_i,x_j^*) & = |x_i-x_i^*|-|x_i-x_j^*|+ e(x_i^*)-e(x_j^*) \le 0, \label{bound_6} \\
        \mathrm{cost}(x_j,x_j^*) - \mathrm{cost}(x_j,x_i^*) & = |x_j-x_j^*|-|x_j-x_i^*|+ e(x_j^*)-e(x_i^*) \le 0. \label{bound_5}
    \end{align}
\noindent
By adding \Cref{bound_6,bound_5}, we get
    \begin{align}
        |x_i-x_i^*|+|x_j-x_j^*| \le |x_i-x_j^*|+|x_j-x_i^*|. \label{bound_7}
    \end{align}
\noindent
Assume for contradiction that \(x_i < x_j\) but \(x_i^* > x_j^*\).  If  \([x_i,x_j]\cap [x_j^*,x_i^*]\ne \emptyset\), then \Cref{bound_7} does not hold.
If  \([x_i,x_j]\cap [x_j^*,x_i^*]=\emptyset\), then \Cref{bound_7} turns into equality. By \Cref{bound_6,bound_5}, we get  \(\mathrm{cost}(x_j,x_j^*) = \mathrm{cost}(x_j,x_i^*)\) and \(\mathrm{cost}(x_i,x_i^*) = \mathrm{cost}(x_i,x_j^*)\).  Then by the tie-breaking rule, we have $x_i^*=x_j^*$. This is a contradiction.

\end{proof}

\noindent
Let  $I[x]$ be the closed interval between \(x\) and \(x^*\). We next depict the relationship between different $I[x]$.

\begin{lemma}\label{lm_Ix_relate}
    For any \(x_i,x_j \in \mathbb{R}\), either \(I[x_i] \cap I[x_j] = \emptyset\) or \(x_i^*=x_j^*\).
\end{lemma}
\begin{proof}
    Suppose w.l.o.g. that \(x_i <  x_j\) and $x_i\le x_i^*$.
    Then \(x_i^* \leq x_j^*\) by \Cref{lm_xs_monotone}.
    Suppose for contradiction that \(I[x_i] \cap I[x_j] \ne \emptyset\) and \(x_i^*\ne x_j^*\).
    Then we have $x_i < x_j \leq x_i^* < x_j^*$.  By
    \begin{align*}
        \mathrm{cost}(x_i, x_i^*)
        = x_j-x_i + \mathrm{cost}(x_j, x_i^*) 
        \ge x_j-x_i + \mathrm{cost}(x_j, x_j^*)
        = \mathrm{cost} (x_i, x_j^*),
    \end{align*}
    we have $\mathrm{cost}(x_i, x_i^*)\geq \mathrm{cost} (x_i, x_j^*)$. By the definition of \(x_i^*\), we know that $\mathrm{cost}(x_i, x_i^*) = \mathrm{cost} (x_i, x_j^*)$. Then by the tie-breaking rule, we have $x_i^*=x_j^*$, a contradiction.

\end{proof}

\noindent
By Lemmas \ref{lm_xs_monotone} and \ref{lm_Ix_relate}, we can prove the following property for three positions.

\begin{lemma}\label{lm_3_monotone}
If \(x_i \le x_j \le x_k\), then \(\mathrm{cost}(x_i,x_j^*)  \le \mathrm{cost}(x_i,x_k^*)\) and  \(\mathrm{cost}(x_k,x_j^*)\le \mathrm{cost}(x_k,x_i^*)\).
\end{lemma}
\begin{proof}
    By \Cref{lm_xs_monotone}, we get $x_i^*\leq x_j^* \leq x_k^*$. If $x_i^*=x_j^*$ or $x_j^*=x_k^*$, then this lemma is clearly true.
    Thus, we assume that $x_i^* < x_j^* < x_k^*$.
    Then by Lemma \ref{lm_Ix_relate}, we have that \(I[x_i], I[x_j]\) and \(I[x_k]\) are pairwise disjoint. Thus it holds that \(x_i< x_j < x_k^*\).
    Then by \(\mathrm{cost}(x_j,x_k^*)\ge \mathrm{cost}(x_j,x_j^*)\) and the triangle inequality, we have
    \begin{align*}
        \mathrm{cost}(x_i,x_k^*)
        &= |x_i-x_j|+|x_j-x_k^*|+e(x_k^*)\\
        &\ge |x_i-x_j|+|x_j-x_j^*|+e(x_j^*)\\
        &\ge |x_i-x_j^*|+e(x_j^*) =\mathrm{cost}(x_i,x_j^*).
    \end{align*}
    Similarly, we can prove \(\mathrm{cost}(x_k,x_j^*)\le \mathrm{cost}(x_k,x_i^*)\).

\end{proof}

\noindent
Next, we capture the property that one location is preferred by each agent over another location: we say location \(\ell_1\) \emph{dominates} location \(\ell_2\) if \(\mathrm{cost}(x_i,\ell_1)\!\le \mathrm{cost}(x_i,\ell_2)\) for all \(i\in N\).

\begin{lemma} \label{lm_xs_dominate_Ix}
    For any position \(x\in \mathbb{R}\) we have
    \(x^*\) dominates all locations in \(I[x]\).
\end{lemma}
\begin{proof}
    Assume w.l.o.g. that \(x \le x^*\). Let \(\ell\in I[x]\) and \(i\in N\).
    For each agent $i$, there are three cases. 

    \noindent
   \textbf{Case 1}: $x_i\leq \ell$. Note that $e(\ell)\geq x^*-\ell + e(x^*)$ as  $\mathrm{cost} (x, \ell)\geq \mathrm{cost} (x, x^*)$.
    Therefore, we have
      \begin{align*}
      \mathrm{cost}(x_i, \ell) =\ell-x_i+ e(\ell)
      \geq (\ell-x_i)+(x^*-\ell)+ e(x^*) 
      =\mathrm{cost} (x_i, x^*).
      \end{align*}
     \noindent
    \textbf{Case 2}: $\ell < x_i < x^*$.   We have the following calculations:
   \begin{align*}
    \mathrm{cost}(x_i, \ell) 
    = x_i-\ell + e(\ell)
    \geq x_i -\ell+ x^*-\ell+e(x^*)
    \geq x^*-x_i+e(x^*)
    =\mathrm{cost} (x_i, x^*).
   \end{align*}
         \noindent
    \textbf{Case 3}: $x^*\leq x_i$. Similarly we can prove  $\mathrm{cost}(x_i, \ell)\leq \mathrm{cost} (x_i, x^*)$.
\end{proof}

\subsection{Solving the Optimization Problems}
In this subsection, we delve into the algorithmic perspective of facility location problems with entrance fees, demonstrating that the optimization problems can be efficiently solved in polynomial time.
\footnote{While the main focus of this paper is on mechanism design, our algorithmic results may hold independent significance within the realm of exactly solving one-dimensional facility location problems (e.g., \cite{love1976adynamic,amaral08anexact,brimberg2001capacitated,chen2014new,chen2015efficient}).}

By \Cref{lm_xs_monotone},  we have \(x_1^* \le x_n^*\). For \(m=1\), let \(\ell_{tc}\) be the location that minimizes the total cost, i.e.,
\(\ell_{tc}= \argmin_{\ell\in \mathbb{R}}TC(\mathbf{x},\ell)\). If there are multiple solutions, then we use the tie-breaking rule in \Cref{sect_model}.
Define \(\ell_{mc}\) as the location that minimizes the maximum cost similarly.

\begin{lemma} \label{lm_ltc_bounded}
    \(\ell_{tc}\) and \( \ell_{mc} \textrm{ are within the interval } [x_1^*,x_n^*]\).
\end{lemma}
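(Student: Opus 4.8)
The plan is to prove that \emph{every} minimizer of either objective that the tie‑breaking rule can output lies in $[x_1^*,x_n^*]$, by ruling out that $\ell_{tc}$ or $\ell_{mc}$ sits strictly left of $x_1^*$ or strictly right of $x_n^*$. Write $\ell^\circ$ for either $\ell_{tc}$ or $\ell_{mc}$; by the obvious left--right symmetry it suffices to derive a contradiction from $\ell^\circ<x_1^*$ (here $x_1,x_n$ are the leftmost and rightmost agents, so $x_1\le x_i\le x_n$ for all $i$ and $x_1^*\le x_n^*$ by the Monotonicity Lemma). The workhorse is a strengthening of Lemma \ref{lm_x*_dominate_I[x]}: \emph{$x_1^*$ dominates every location $\ell\le x_1^*$} (not merely those in $I[x_1]$), and symmetrically $x_n^*$ dominates every $\ell\ge x_n^*$. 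To see this, fix $i$ and use the identity
\[
\mathrm{cost}(x_i,\ell)-\mathrm{cost}(x_i,x_1^*)=\bigl(\mathrm{cost}(x_1,\ell)-\mathrm{cost}(x_1,x_1^*)\bigr)+\bigl(h(x_i)-h(x_1)\bigr),\qquad h(t):=|t-\ell|-|t-x_1^*|.
\]
The first parenthesis is $\ge 0$ by optimality of $x_1^*$ for agent $1$; and when $\ell\le x_1^*$ the piecewise‑linear function $h$ is non‑decreasing (its slope is $0$ or $2$ everywhere), so the second parenthesis is $\ge 0$ because $x_i\ge x_1$. Hence $\mathrm{cost}(x_i,\ell)\ge\mathrm{cost}(x_i,x_1^*)$ for all $i$. (The statement for $x_n^*$ is identical with $x_1,x_1^*$ replaced by $x_n,x_n^*$ and ``non‑decreasing'' by ``non‑increasing''.)

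Given this, $TC(\bm x,x_1^*)\le TC(\bm x,\ell^\circ)$ and $MC(\bm x,x_1^*)\le MC(\bm x,\ell^\circ)$, so $x_1^*$ is also an optimal solution for whichever objective $\ell^\circ$ optimizes. Write $\mathrm{cost}(x_i,\ell^\circ)=\mathrm{cost}(x_i,x_1^*)+A+B_i$ with $A:=\mathrm{cost}(x_1,\ell^\circ)-\mathrm{cost}(x_1,x_1^*)\ge 0$ and $B_i:=h(x_i)-h(x_1)\ge 0$, $B_1=0$. Since $\ell^\circ$ and $x_1^*$ attain the same objective value, $A=0$: for the total cost this follows by summing over $i$, and for the maximum cost from $MC(\bm x,\ell^\circ)\ge MC(\bm x,x_1^*)+A$. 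Thus $\mathrm{cost}(x_1,\ell^\circ)=\mathrm{cost}(x_1,x_1^*)$, i.e.\ $\ell^\circ$ also minimizes agent $1$'s cost. Now Definition \ref{def_tie_breaking} is invoked twice: it picks $x_1^*$ over $\ell^\circ$ among agent $1$'s cost‑minimizing locations, and since $x_1^*>\ell^\circ$ this yields $e(x_1^*)\le e(\ell^\circ)$; it also picks $\ell^\circ$ over $x_1^*$ among the objective's optimal solutions, and since $\ell^\circ<x_1^*$ this forces the \emph{strict} inequality $e(\ell^\circ)<e(x_1^*)$ --- contradiction. The case $\ell^\circ>x_n^*$ is symmetric, with the contradiction $e(x_n^*)<e(\ell^\circ)$ against $e(\ell^\circ)\le e(x_n^*)$. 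Hence $\ell_{tc},\ell_{mc}\in[x_1^*,x_n^*]$.

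I expect the main obstacle to be the tie‑breaking bookkeeping rather than the geometry. It is easy to show that \emph{some} optimal location lies in $[x_1^*,x_n^*]$, but the lemma asks for the particular tie‑broken points $\ell_{tc},\ell_{mc}$; reconciling the tie‑break that defines $x_1^*$ (and $x_n^*$) with the tie‑break that defines $\ell_{tc},\ell_{mc}$ is the delicate part, and it is precisely the ``rightmost'' clause of Definition \ref{def_tie_breaking} that makes the two resulting inequalities on $e(\cdot)$ incompatible. A smaller point to be careful about is that Lemma \ref{lm_x*_dominate_I[x]} as stated only covers locations lying between $x$ and $x^*$, so the domination step genuinely needs the extension above (equivalently, a short separate argument for $\ell\notin[x_1,x_n]$).
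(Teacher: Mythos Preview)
Your proof is correct and follows the same overall strategy as the paper's: show that $x_1^*$ dominates any $\ell<x_1^*$, conclude that $x_1^*$ is also optimal for the objective, and then invoke the tie-breaking rule to reach a contradiction. Your execution is in fact cleaner and more complete than the paper's: the paper splits into the two cases $x_1^*<x_1$ and $x_1^*\ge x_1$ and appeals to Lemma~\ref{lm_x*_dominate_I[x]} in the second, whereas your identity with $h(t)=|t-\ell|-|t-x_1^*|$ handles both cases at once and makes the domination step fully explicit; moreover, the paper is terse about how the tie-breaking actually yields the contradiction (its Case~1 simply asserts ``a contradiction'' from $TC(\bm x,\ell_{tc})\ge TC(\bm x,x_1^*)$), while your double application of Definition~\ref{def_tie_breaking}---once to the argmin defining $x_1^*$ and once to that defining $\ell^\circ$---pins down the incompatible inequalities $e(x_1^*)\le e(\ell^\circ)$ and $e(\ell^\circ)<e(x_1^*)$.
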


\noindent
We want to design effective algorithms that find the optimal facility location profile for either objective when the position profile of agents is given.

\begin{lemma}\label{lm_1f_alg}
    When only one facility exists, the two optimization problems can be solved in \(O(n)\) time.
\end{lemma}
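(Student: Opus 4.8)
The plan is to exploit the structural results already established, in particular Lemma \ref{lm_ltc_bounded}, which guarantees that both $\ell_{tc}$ and $\ell_{mc}$ lie in the interval $[x_1^*,x_n^*]$, together with the fact (following from Remark \ref{remark} and Lemma \ref{lm_1f_lmed}) that each $x_i^*$ can be computed in constant time. First I would compute $x_1^*,\dots,x_n^*$ in $O(n)$ time; this also gives the optimal costs $C_i=\mathrm{cost}(x_i,x_i^*)$. By the Monotonicity Lemma the values $x_i^*$ come out in the same order as the $x_i$, so sorting is free (or, if the $x_i$ are given unsorted, costs $O(n\log n)$, but the $O(n)$ claim presumably assumes a sorted input, as is standard in this line of work). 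The key reduction is that on the interval $[x_1^*,x_n^*]$ each agent's cost function $\ell\mapsto\mathrm{cost}(x_i,\ell)$ has a simple piecewise description: by the domination Lemma \ref{lm_x*_dominate_I[x]} and Lemma \ref{L-triple}, once $\ell$ ranges over $[x_1^*,x_n^*]$, agent $i$ either uses a facility at $\ell$ directly (cost $|x_i-\ell|+e(\ell)$) or, when $\ell$ moves away from $x_i^*$ on the far side, her cost is governed by $x_i^*$; in all relevant cases the cost is an affine function of the "profile" with slope $\pm 1$ plus the common term $e(\ell)$.

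For the egalitarian objective this makes $\ell_{mc}$ easy to pin down: I would argue that the agent attaining the maximum cost is always one of the two extreme agents $1$ or $n$ (for any candidate $\ell\in[x_1^*,x_n^*]$, Lemma \ref{L-triple} shows $\mathrm{cost}(x_i,\ell)\le\max\{\mathrm{cost}(x_1,\ell),\mathrm{cost}(x_n,\ell)\}$), so $MC(\bm x,\ell)=\max\{\mathrm{cost}(x_1,\ell),\mathrm{cost}(x_n,\ell)\}$. Minimizing this over $\ell\in[x_1^*,x_n^*]$ is a one-dimensional problem in two competing terms; using the oracle of Remark \ref{remark} to minimize expressions of the form $ae(\ell)+b\ell$ on a constant number of subintervals (obtained by splitting at $x_1,x_n$ and where the two max-branches cross), this takes $O(1)$ oracle calls after the $O(n)$ preprocessing. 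For the utilitarian objective I would partition $[x_1^*,x_n^*]$ at the breakpoints — the points $x_i$ and $x_i^*$ — into $O(n)$ subintervals; on each subinterval $TC(\bm x,\ell)=\big(\sum_i \sigma_i\big)\ell + n\,e(\ell) + (\text{const})$ for fixed signs $\sigma_i\in\{-1,0,+1\}$ determined by which side of $x_i$ (resp. $x_i^*$) the interval lies on. Sweeping the breakpoints left to right, the coefficient $\sum_i\sigma_i$ and the constant update in $O(1)$ each step, so over all $O(n)$ subintervals I make $O(n)$ oracle calls to minimize $a\,e(\ell)+b\ell$, and take the best; finally compare against the boundary behaviour and invoke Lemma \ref{lm_ltc_bounded} to conclude no better $\ell$ exists outside $[x_1^*,x_n^*]$. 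Apply the tie-breaking rule of Definition \ref{def_tie_breaking} to select among minimizers.

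The main obstacle is verifying that the per-agent cost function really is piecewise affine-plus-$e(\ell)$ with only $O(n)$ total breakpoints over the whole ensemble, and correctly identifying those breakpoints: the subtlety is that for $\ell\in[x_1^*,x_n^*]$ agent $i$ does not necessarily pay $|x_i-\ell|+e(\ell)$ — she may prefer to "walk back" to a point closer to $x_i^*$ — so one must show, using Lemmas \ref{lm_x*_dominate_I[x]} and \ref{L-triple}, that $\mathrm{cost}(x_i,\ell)$ equals either $|x_i-\ell|+e(\ell)$ or $C_i + \mathrm{dist}(\ell, I[x_i])$-type expressions, each of which is affine in $\ell$ with slope in $\{-1,0,1\}$ on each side of $x_i^*$, and that gluing these across all $i$ yields only $O(n)$ pieces. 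Once this piecewise structure and its breakpoints are nailed down, the sweep and the oracle calls are routine; I would present the cost-function analysis carefully and then state the sweep algorithm compactly.
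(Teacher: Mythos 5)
Your proposal is correct and follows essentially the same route as the paper: for the total cost, split the line at the agent positions into $O(n)$ intervals, on each of which $TC(\bm{x},\ell)=n e(\ell)+(2k-n)\ell+\text{const}$, and make one oracle call (in the sense of Remark \ref{remark}) per interval, using Lemma \ref{lm_ltc_bounded} to confine the search; for the maximum cost, observe that only agents $1$ and $n$ can attain the maximum, so a constant number of oracle calls suffices. The one thing to note is that your ``main obstacle'' is illusory: with a single facility the cost of agent $i$ is \emph{by definition} $|x_i-\ell|+e(\ell)$ (there is no other facility to ``walk back'' to, and $x_i^*$ is only a hypothetical best location, not an option available once $\ell$ is fixed), so the piecewise affine-plus-$n e(\ell)$ structure with breakpoints only at the $x_i$ is immediate and no appeal to Lemmas \ref{lm_x*_dominate_I[x]} or \ref{L-triple} is needed; dropping that worry, your sweep reduces exactly to the paper's argument.
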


\noindent
When \(m>1\), we have the following lemma that describes the structure of a solution.
\begin{lemma} \label{lm_continuous}
    For a given facility location profile, if two agents select the same facility, then any agent located between these two agents also selects the same facility.
\end{lemma}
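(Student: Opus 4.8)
The plan is to argue by contradiction using a single interchange/convexity-style inequality. Suppose agents $a$ and $b$ with $x_a < x_b$ both select the same facility, say the one located at $\ell_j$, but some agent $c$ with $x_a < x_c < x_b$ selects a different facility, located at $\ell_k \neq \ell_j$. First I would record what "selects" means under the cost definition $\mathrm{cost}(x_i,\bm\ell)=\min_{j}\{|x_i-\ell_j|+e(\ell_j)\}$: it says $|x_a-\ell_j|+e(\ell_j) \le |x_a-\ell_k|+e(\ell_k)$, the symmetric inequality for $b$, and the reverse inequality $|x_c-\ell_k|+e(\ell_k) \le |x_c-\ell_j|+e(\ell_j)$ for $c$ (strict, or an equality resolved against $\ell_j$ by the tie-breaking rule in Definition \ref{def_tie_breaking}).

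The key step is to add the two inequalities coming from $a$ and $b$ and compare with the one coming from $c$. Writing $d(y):=|y-\ell_j|-|y-\ell_k|$, the agent-$a$ and agent-$b$ conditions give $d(x_a)\le e(\ell_k)-e(\ell_j)$ and $d(x_b)\le e(\ell_k)-e(\ell_j)$, while the agent-$c$ condition gives $d(x_c)\ge e(\ell_k)-e(\ell_j)$. So it suffices to show that $d(x_c)\le \max\{d(x_a),d(x_b)\}$ whenever $x_a< x_c< x_b$; this would force $d(x_c)=e(\ell_k)-e(\ell_j)$, i.e. agent $c$ is actually indifferent between the two facilities, and then the tie-breaking rule — which always prefers the smaller entrance fee, hence is consistent across all agents — makes $c$ select the same facility as $a$ and $b$, a contradiction. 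The inequality $d(x_c)\le\max\{d(x_a),d(x_b)\}$ is precisely the statement that $y\mapsto |y-\ell_j|-|y-\ell_k|$ is monotone on $[\ell_k\wedge\ell_j,\ \ell_k\vee\ell_j]$ and constant outside it: on a closed interval $[x_a,x_b]$ it attains its maximum at an endpoint. I would verify this by splitting into the cases $\ell_j<\ell_k$ and $\ell_j>\ell_k$ and observing that $d$ equals $\pm(\ell_k-\ell_j)$ outside $[\min,\max]$ and is affine with slope $\pm 2$ in between.

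The main obstacle is not the algebra but getting the tie-breaking bookkeeping exactly right: the lemma must remain true when $c$ is genuinely indifferent, and the argument above relies on the fact that Definition \ref{def_tie_breaking} breaks ties by a rule (smallest entrance fee, then rightmost) that depends only on the two facilities and not on the agent, so that an indifferent $c$ makes the same choice as an indifferent $a$ or $b$ would. I would make this explicit: since $a$ and $b$ chose $\ell_j$ over $\ell_k$, either $\ell_j$ is strictly cheaper for them (so $e(\ell_j)<e(\ell_k)$, a facility-level fact) or it was a tie resolved toward $\ell_j$ by the rightmost rule; in either case the same resolution applies to the indifferent agent $c$, contradicting that $c$ chose $\ell_k$. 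With this observation in place the proof is a two-line interchange argument plus the elementary monotonicity of $d$.
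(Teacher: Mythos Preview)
Your approach is correct and genuinely different from the paper's. You argue directly via the monotonicity of $d(y)=|y-\ell_j|-|y-\ell_k|$ on $\mathbb{R}$: since $d$ is monotone, $d(x_c)\le\max\{d(x_a),d(x_b)\}\le e(\ell_k)-e(\ell_j)$, which together with $d(x_c)\ge e(\ell_k)-e(\ell_j)$ forces equality and hence indifference of $c$; then the agent-independent tie-breaking yields the contradiction. The paper instead reduces to the Monotonicity Lemma (Lemma~\ref{lm_x*_monotone}) by constructing an auxiliary entrance fee function $e'$ that equals $e$ on the facility locations and is prohibitively large elsewhere, so that under $e'$ each agent's optimal location $x_i^*(e')$ coincides with the facility she actually selects; then $x_p^*(e')=x_r^*(e')=\ell_1$ together with $x_p<x_q<x_r$ and monotonicity of $x\mapsto x^*$ forces $x_q^*(e')=\ell_1$. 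The paper's route is slick because the tie-breaking is already baked into the definition of $x^*$ and into Lemma~\ref{lm_x*_monotone}, so no separate case analysis is needed; your route is more elementary and self-contained, at the cost of having to handle ties by hand.

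One small correction to your tie-breaking paragraph: the parenthetical ``(so $e(\ell_j)<e(\ell_k)$, a facility-level fact)'' does not follow from strict preference of $a$ and $b$ for $\ell_j$. The clean way to finish is to note that your chain of inequalities actually gives $d(x_c)=\max\{d(x_a),d(x_b)\}=e(\ell_k)-e(\ell_j)$, so at least one of $a,b$ is itself indifferent between $\ell_j$ and $\ell_k$; for that agent the tie-breaking rule selected $\ell_j$, and since the rule depends only on the facilities it must also select $\ell_j$ for the indifferent agent $c$, contradicting $c$'s choice of $\ell_k$. With that fix your argument is complete.
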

\noindent
The proofs of \Cref{lm_ltc_bounded,lm_1f_alg,lm_continuous} can be found in the appendix. Equipped with these lemmas,  we show in next proposition that both optimization problem can be solved in polynomial time.

\begin{proposition}
    A solution minimizing the total cost or maximum cost can be found in \(O(n^3m)\) time.
\end{proposition}
\begin{proof}
    We show that both two optimization problems can be solved by using dynamic programming.
    Let \(N_j\) be the set of customers of facility \(j\), \(\ell_j\) be the location of facility \(j\), then a solution is denoted by \(\{(\ell_j,N_j)\}_{j=1}^{m}\) where \(\bigcup_{j=1}^{n} N_j = N\) and the sets \(N_j\) are pairwise disjoint.
    Recall that \(x_1\le \dots \le x_n\), we say a subset of agents \(N_j\subseteq N\) is \emph{continuous} if  \(N_i=\{x_{j_1},x_{j_1+1}, \dots, x_{j_2}\}\) for some \(1\le j_1 \le j_2 \le n\).
    By Lemma 8, we have that \(N_j\) is continuous. Denote by \([i,j]\) the continuous set of agents \(\{i, i+1, \dots, j\}\).

    First, we consider the problem of minimizing the total cost.
    Denote by \(OPT_{tc}(i,j,k)\) the minimum total cost when agents \([1,j]\) are divided into \(k\) partitions, and \(i\le j\) are in the \(k\)th partition. Thus the optimal total cost to our problem is \(OPT_{tc}(n,n,m)\).
    Denote by \(v_{tc}(i,j)\) the optimal total cost incurred by agents in \([i,j]\) when they are served by a single facility. By Lemma \ref{lm_1f_alg}, \(v_{tc}(i,j)\) can be computed in \(O(n)\) time.

    We will make use of an \(n\times n \times m\) array \(M\), whose entries are initially set to empty.
    We invoke Algorithm \ref{algo_dp_tc} to compute \(OPT_{tc}(n,n,m)\) and recover the solution from the values stored in \(M\).

    \begin{algorithm}[t] 
        \caption{\(OPT_{tc}(i,j,k)\)} 
        \begin{algorithmic}[1] 
        \IF {\(i=0\) or \(j=0\) or \(k=0\)}
        \STATE Return {\(0\)}.
        \ELSIF{\(M(i,j,k)\) not empty}
        \STATE Return {\(M(i,j,k)\)}.
        \ELSE
        \STATE \(M(i,j,k)\gets \min\bigl(OPT_{tc}(i-1,j,k), OPT_{tc}(i-1,i-1,k-1)+v_{tc}(i,j)\bigr)\) \\
                Return {\(M(i,j,k)\)}
        \ENDIF
        \end{algorithmic}
    \end{algorithm} \label{algo_dp_tc}

    We argue for correctness. It is easy to see that the running-time bound of Algorithm \ref{algo_dp_tc} is \(O(mn^3)\). The base case in line \(1\) of Algorithm \ref{algo_dp_tc} is clear. The total cost is zero if there are no agents or no facilities. Suppose we want to compute \(OPT_{tc}(i,j,k)\) and we have already computed \(OPT_{tc}(i',j',k')\) where \(i' < i\) or \(j' < j\) or \(k' < k\). We distinguish between two cases depending on whether or not agent \(i-1\) is still in the \(k\)th partition.
    If agent \(i-1\) is in the \(k\)th partition, \(OPT_{tc}(i,j,k)=OPT_{tc}(i-1,j,k)\). Otherwise we know that \(i-1\) is in right boundary the \((k-1)\)th partition and agents in \([i,j]\) are served by a single facility, thus \(OPT_{tc}(i,j,k)=OPT_{tc}(i-1,i-1,k-1)+v_{tc}(i,j)\).
    Actually, we derive the following transition function:
    \begin{align*}
        OPT_{tc}(i,j,k)&=\min\bigl(OPT_{tc}(i-1,j,k), OPT_{tc}(i-1, i-1,k-1)+v_{tc}(i,j)\bigr).
    \end{align*}
    It is easy to see that Algorithm~\ref{algo_dp_tc} can be implemented in \(O(mn^3)\) time by using memorization.

    Let \(OPT_{mc}(i,j,k)\) be the minimum maximum cost of agents \([1,j]\) under the conditions that agents \([1,j]\) are divided into \(k\) partitions, and agent \(i\le j\) is in the \(k\)th partition. Let \(v_{mc}(i,j)\) be the optimal maximum cost of agents \([i,j]\) when they are served by a single facility.
    In a slightly different way, we derive the transition function for the maximum cost as follows:
    \begin{align*}
        OPT_{mc}(i,j,k)=\min\Bigl(OPT_{mc}(i-1,j,k),
        \min\bigl(OPT_{mc}(i-1,i-1,k-1), v_{mc}(i,j)\bigr)\Bigr).
    \end{align*}
    The algorithm to compute  \(OPT_{mc}(i,j,k)\) is similar to Algorithm \ref{algo_dp_tc}.
\end{proof}

\section{Utilitarian Version with One Facility} \label{sect_1f_tc}
In this section, we study the one-facility game with the objective of minimizing the total cost.
We will first investigate upper and lower bounds for deterministic mechanisms and then focus on randomized mechanisms.

\subsection{Deterministic Mechanisms}

Recall that it is assumed that \(x_1\le \dots \le x_n\), and for any \(i\in N\), we denote by \(m_i(\cdot,\cdot)\) the mechanism that outputs the optimal location $x_i^*$ for agent whose location is \(x_i\).

\begin{theorem} 
The mechanism \(m_i(\cdot,\cdot)\) is group strategyproof. 
\end{theorem}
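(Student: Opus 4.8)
The plan is to recognize $m_i$ as a generalized order-statistic mechanism and then adapt the classical ``move the facility'' argument. By the Monotonicity Lemma the operator $x\mapsto x^{*}$ is order preserving, so sorting the reported positions and then taking optimal locations produces the same multiset as taking optimal locations and then sorting; hence $m_i(e,\bm x)$ is precisely the $i$-th smallest element of the multiset $\{x_1^{*},\dots,x_n^{*}\}$, i.e. $(x_{(i)})^{*}$ where $x_{(i)}$ is the $i$-th smallest report. This is the analogue, lifted through the optimal-location map, of the classical $i$-th percentile mechanism.

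Fix $e$, a true profile with $x_1\le\cdots\le x_n$, a coalition $S$, and a deviation to $\bm x'=(\bm x_S',\bm x_{-S})$; write $y:=m_i(e,\bm x)=x_i^{*}$ and $y':=m_i(e,\bm x')=(x_{(i)}')^{*}$, where $x_{(i)}'$ is the $i$-th smallest report under $\bm x'$. If $y'=y$ the costs of all members of $S$ are unchanged, so we may assume $y'>y$ (the case $y'<y$ being symmetric). First I would show $\{1,\dots,i\}\cap S\ne\emptyset$: otherwise the agents ranked $1,\dots,i$ in the true order all report truthfully, so $(x_k')^{*}=x_k^{*}\le x_i^{*}=y$ for every $k\le i$, meaning at least $i$ of the reports under $\bm x'$ have optimal location $\le y$, whence $y'=(x_{(i)}')^{*}\le y$, a contradiction. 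Pick any $j\in\{1,\dots,i\}\cap S$; since the true profile is sorted, $x_j\le x_i$. Moreover $y'>y$ means $(x_{(i)}')^{*}>x_i^{*}$, and the Monotonicity Lemma then forces $x_{(i)}'>x_i$ (if $x_{(i)}'\le x_i$ then $(x_{(i)}')^{*}\le x_i^{*}$). Applying Lemma \ref{L-triple} to the triple $x_j\le x_i\le x_{(i)}'$ gives $\mathrm{cost}(x_j,x_i^{*})\le\mathrm{cost}(x_j,(x_{(i)}')^{*})$, i.e. $\mathrm{cost}(x_j,y)\le\mathrm{cost}(x_j,y')$, so agent $j\in S$ is not strictly better off under the deviation. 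For $y'<y$ one argues symmetrically: $\{i,\dots,n\}\cap S\ne\emptyset$, pick $j$ there with $x_j\ge x_i$, note $x_{(i)}'<x_i$, and invoke the second half of Lemma \ref{L-triple}.

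The step I expect to be the crux is exactly why the classical argument must be modified: since $e$ is an arbitrary nonnegative function, the single-agent cost $\ell\mapsto|x_j-\ell|+e(\ell)$ need not be unimodal, so ``the facility moved farther from $x_j$, hence $j$'s cost went up'' is false in general. The resolution is the structural fact used above --- the mechanism always outputs the optimal location $z^{*}$ of some reported position $z$ --- which lets Lemma \ref{L-triple} supply the needed monotonicity of agent $j$'s cost along the chain $x_j^{*},x_i^{*},z^{*}$ induced by $x_j\le x_i\le z$. All equality and tie-breaking subtleties (in the Monotonicity Lemma and in Lemma \ref{L-triple}) only surface when $y'=y$, which is the trivial case, so no extra bookkeeping is required.
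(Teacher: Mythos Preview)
Your proof is correct and follows essentially the same approach as the paper: identify a coalition member whose true position lies on the ``wrong side'' of $x_i$ relative to the direction the output moved, and invoke Lemma~\ref{L-triple} to conclude that this agent does not benefit. Your counting argument (``$\{1,\dots,i\}\cap S\ne\emptyset$ when $y'>y$'') is a slightly cleaner packaging of the paper's informal ``some agent must cross $x_i$'', but the structure and the decisive use of Lemma~\ref{L-triple} are identical.
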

\begin{proof}
    Obviously, any agent at \(x_i\) has no incentive to misreport. To modify the output of \(m_i(\cdot,\cdot)\), at least one agent located in \([x_1,x_i)\) has to misreport a location strictly greater than \(x_i\) or at least one agent in \((x_i,x_n]\) has to misreport a location strictly smaller than \(x_i\). We will prove that the agent will not benefit from misreporting in either case. 

    We only consider the case that agent \(j\) with \(x_j< x_i\) misreports \(x_j'> x_i\). The other case is proved similarly.
    Let \(x_i'\) be the \(i\)th order statistic in \((x_j',\mathbf{x}_{-j})\). 
	Then we have $x_j< x_i \leq x_i'$.
    Let \(\mathbf{x}'=(x_j',\mathbf{x}_{-j})\) be the misreported profile and $x_i^{\prime *} =m_i(e,\mathbf{x}')$ be the output of the misreported profile.
    By Lemma \ref{lm_3_monotone}, we get \(\mathrm{cost}(x_j,x_i^*)\le \mathrm{cost}(x_j,x_i^{\prime *})\). Thus agent \(j\) has no incentive to misreport. 
\end{proof}

Let \(m_{med}(\cdot, \cdot)\) be the mechanism that always outputs the optimal location $x_{med}^*$ for the median agent \(med=\lceil n/2 \rceil\).
Let  \(\mathbf{x}\in \mathbb{R}^n\), \(\ell\in \mathbb{R}\) and \(e_{\ell}\in \mathbb{R}_{\ge 0}\).
To ease our analysis of the approximation ratio of \(m_{med}(\cdot, \cdot)\) for the total cost,  we introduce the notion of \emph{virtual total cost} $VT(\mathbf{x},\ell,e_{\ell}):=\sum_{i\in N}|x_i-\ell |+n\cdot e_{\ell}$, which is the total cost of \(\mathbf{x}\) when a \emph{virtual facility} is located at \(\ell\) with entrance fee \(e_{\ell}\).
It is possible \(e_{\ell}\ne e(\ell)\). When \(e_{\ell}=e(\ell)\), we have  \(VT(\mathbf{x},  \ell, e_{\ell})= TC(\mathbf{x},\ell)\).
The \emph{virtual cost} of agent $i$ with respect to the virtual facility is \(\mathrm{vcost}(x_i, \ell, e_{\ell}):=|x_i-\ell|+e_{\ell}\). Clearly,  \(\mathrm{vcost}(x_i,\ell, e(\ell))=\mathrm{cost}(x_i,\ell)\) when \(e_{\ell}=e(\ell)\).

\begin{observation} \label{obs_VT}
    Let \(\mathbf{x}\in \mathbb{R}^n\), \(\ell_1,\ell_2\in \mathbb{R}\) and \(e_1, e_2 \in \mathbb R_{\ge 0}\) such that $e_1< e_2$.
    If  $\ell_2\leq \ell_1\leq x_{med}$ or $x_{med}\leq \ell_1\leq \ell_2$, then
\(VT(\mathbf{x},\ell_1,e_1)< VT(\mathbf{x},\ell_1,e_2) \le VT(\mathbf{x},\ell_2,e_2)\).
\end{observation}

\noindent
Next, we show that the total cost approximation ratio of  \(m_{med}(\cdot,\cdot)\) can be bounded by a constant.

\begin{theorem} \label{thm_1f_tc_apx}
For each entrance fee function $e$ with max-min ratio \(r_e\), the approximation ratio of \(m_{med}(e,\cdot)\) for the total cost is at most \(3-\frac{4}{r_e+1}\).
Hence, the approximation ratio of $m_{med}(\cdot,\cdot)$ \  for the total cost is at most \ \(3\).
\end{theorem}

\begin{proof}
    Assume w.l.o.g. that \(x_{med} \le x_{med}^*\). Let $C_{med}$ be the optimal cost of the median agent.
    For any agent location \(x_i\), if \(x_i\!\le\! x_{med}\), we have \(\mathrm{cost}(x_i,x_{med}^*)\!=\!\mathrm{vcost}(x_i,x_{med},C_{med})\). If \(x_i\!>\! x_{med}\), by Lemma \ref{lm_xs_dominate_Ix} we have \(\mathrm{cost}(x_i,x_{med}^*)\!\le\! \mathrm{vcost}(x_i,x_{med},C_{med})\). Thus
    \begin{align}
        TC(\mathbf{x}, x_{med}^*) \le VT(\mathbf{x},x_{med},C_{med}). \label{bound_m_med}
    \end{align}
    Next we prove \(e(\ell_{tc})\le C_{med}\). Suppose for contradiction that \(e(\ell_{tc})> C_{med}\). Then by \Cref{obs_VT} and \Cref{bound_m_med}, we have \(TC(\mathbf{x},\ell_{tc}) > VT(\mathbf{x},x_{med}, C_{med}) \ge TC(\mathbf{x},x_{med}^*)\).
    This is a contradiction.
    Hence, there is a location \(\ell_{tc}'\) between \(x_{med}\) and \(\ell_{tc}\) such that \(\mathrm{vcost}(x_{med}, \ell_{tc}', e(\ell_{tc}))= C_{med}\). Then we have \(|x_{med}-\ell_{tc}'|=C_{med}-e(\ell_{tc})\) . Let \(\Delta= |C_{med}-e(\ell_{tc})|\).
    By Observation \ref{obs_VT}, we have \(TC(\mathbf{x}, \ell_{tc})\ge VT(\mathbf{x},\ell_{tc}',e(\ell_{tc}))\).
    Together with \Cref{bound_m_med}, we have
    \begin{align}
        \gamma(m_{med}(e,\mathbf{x})) &\le \frac{VT(\mathbf{x},x_{med},C_{med})}{VT(\mathbf{x},\ell_{tc}',e(\ell_{tc}))}. \label{bound_apx}
    \end{align}
    We will identify a position profile \(\mathbf{x}'\in \mathbb{R}^n\) for each \(\mathbf{x}\) such that
    the bound given by \Cref{bound_apx} for \(\mathbf{x}\) is not greater than that for \(\mathbf{x}'\).
    Let \(n_1=|\{i\in N|x_i\le x_{med} < \ell_{tc}' \text{ or } x_i \ge x_{med}\ge \ell_{tc}'\}|\) and \(n_2=n-n_1\). Thus \(n_1\ge n_2\).
    Let \(\mathbf{x}'\) be the profile such that \(n_1\) agents are in \(x_{med}\) and \(n_2\) agents are in \(\ell_{tc}'\).
    By \Cref{bound_apx},
    \begin{align}
        \gamma(m_{med}(e,\mathbf{x}))
        \le \frac{VT(\mathbf{x}',x_{med}, C_{med})}{VT(\mathbf{x}',\ell_{tc}',e(\ell_{tc}))}
        = \frac{(C_{med}+\Delta)\cdot \frac{n_2}{n_1}+C_{med}}{e(\ell_{tc})\cdot \frac{n_2}{n_1}+C_{med}}. \label{tc_apx_n1n2}
      \end{align}
      Since \(C_{med}+\Delta \ge e(\ell_{tc})\) and \(n_2/n_1 \le 1\), we know that \Cref{tc_apx_n1n2} is maximized when \(n_1 = n_2\). Then since \(e_{\mathrm{min}}\le e(\ell_{tc}) \le C_{med} \le e_{\mathrm{max}}\), we have
      \(\gamma(m_{med}(e,\mathbf{x}))\le \frac{2C_{med}+\Delta}{C_{med}+e(\ell_{tc})} \le 3 - \frac{4}{r_e+1}.\)
        This proves the first claim of the theorem. The second claim follows immediately.

\end{proof}

The example in Figure \ref{fig_tc_tight} shows that the approximation ratio of \(3 - \frac{4}{r_e+1}\) in Theorem \ref{thm_1f_tc_apx} is tight for our mechanism, for all even \(n\). Let \(L=e_{\mathrm{max}}-e_{\mathrm{min}}+\epsilon\), where \(e_{\mathrm{max}}\ge e_{\mathrm{min}}\ge 0\) and \(\epsilon\) is a small positive.
We consider the following entrance fee function: \(e(L) = e_{\mathrm{min}}\) and \(e(\ell) = e_{\mathrm{max}}\) for any \(\ell\ne L\).
Consider the position profile \(\mathbf{x}=(0,L)\) with two agents, the optimal location is \(e_{\mathrm{min}}\) with the total cost \(2e_{\mathrm{min}}+L\),
and the facility location returned by \(m_{med}(\cdot,\cdot)\) is $0$, with the total cost \(2e_{\mathrm{max}}+L\).

\begin{figure} [t]
    \centering
    \includegraphics[width=\myscale\linewidth]{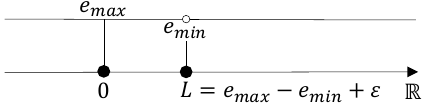}
    \caption{A tight example for the approximation ratio of mechanism $m_{med}(\cdot,\cdot)$. The black dots are agents. The upper horizontal line represents the entrance fee function \(e\). The height of the vertical line segment represents the entrance fee of the location.} 
    \label{fig_tc_tight}
\end{figure} 

Next, we establish a lower bound of \(3\) for the total cost approximation ratio against all deterministic strategyproof mechanisms. Thus we can assert that no deterministic strategyproof mechanism can do better than \(m_{med}(\cdot, \cdot)\).

\begin{theorem} \label{thm_1f_tc_d_lb_3re}
    For any given \(\alpha\ge 1\), there exists an entrance fee function \(e\) with max-min ratio \(r_e=\alpha\) such that no deterministic strategyproof mechanism \(f(e,\cdot):\mathbb{R}^n\rightarrow \mathbb{R}\) can achieve an approximation ratio less than \(3 - \frac{16}{r_{e} + 5 + \sqrt{r_{e}^{2} + 10r_{e} - 7}}\) for the total cost.
    Hence, no deterministic strategyproof mechanism \(f(\cdot, \cdot):\mathcal{E} \times \mathbb R^n \rightarrow \mathbb R\) can achieve a total cost approximation ratio less than $3$.
\end{theorem}

\begin{proof}
    Let \(d>0\) and \(D = d + 1 + (2d + 1)^{-1}\).
    Consider the following entrance fee function:
    \begin{equation*}
            e(\ell) = \left\{
            \begin{aligned}
                &d, &&\text{if } \ell \in \{-1, 1\} \\
                &D, &&\text{otherwise}
            \end{aligned}
            \right.,
    \end{equation*}
    with $r_{e} = \frac{D}{d}$.
    Let $f(e,\cdot)$ be a deterministic strategyproof mechanism.
    Let $\mathbf{x}_{1} = (-1, \epsilon)$, $\mathbf{x}_{2} = (-\epsilon, 1)$ and $\mathbf{x}_{3} = (-1, 1)$ be three agent position profiles, where $\epsilon$ is a small positive.
    We illustrate this example in Figure \ref{fig_tc_lb_d3}.

\begin{figure} [t]
      \centering
      \includegraphics[width=\myscale\linewidth]{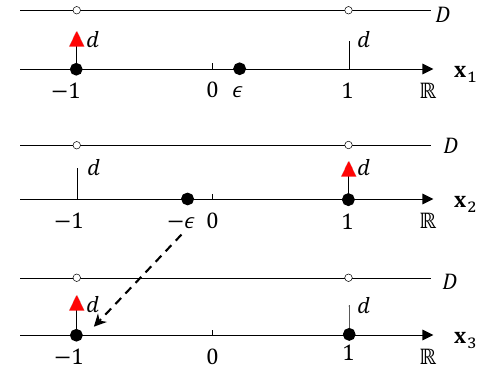}
      \caption{Definitions of \(e(\cdot),\mathbf{x}_1,\mathbf{x}_2 \textrm{ and } \mathbf{x}_3\). Red triangles are facilities, and the dashed line denotes the deviation of the agent.} 
      \label{fig_tc_lb_d3}
\end{figure}

    First, we consider profile $\mathbf{x}_{1}$, we have
    \begin{equation*}
          TC(\mathbf{x}_{1}, \ell) = \left\{
          \begin{aligned}
            &2d + 1 + \epsilon, &&\text{if } \ell = -1 \\
            &2d + 3 - \epsilon, &&\text{if } \ell = 1 \\
            &|\ell + 1| + |\ell - \epsilon| + 2D, &&\text{otherwise} \\
          \end{aligned}
          \right..
    \end{equation*}
    When \(\ell\ne \pm 1\), we have \(TC(\mathbf{x}_1,\ell)= |\ell + 1| + |\ell - \epsilon| + 2D\ge 2d+3\). Thus \(OPT_{tc}(e,\mathbf{x}_1)=2d + 1 + \epsilon\).
    Assume, for contradiction, that
    \begin{align} \label{tc_asp_apx}
        \gamma(f(e, \cdot)) < \frac{2d + 3 - \epsilon}{2d + 1 + \epsilon}.
    \end{align}
    Then we have \(TC(\mathbf{x}_1,f(e,\mathbf{x}_1))\!\le\! \gamma(f(e, \cdot))\!\cdot\!OPT_{tc}(e,\mathbf{x}_1)\!<\!2d+3-\epsilon\).
    Thus \(TC(\mathbf{x}_1,f(e,\mathbf{x}_1)) \textrm{ must be }2d+1+\epsilon\). Then we have $f(e,\mathbf{x}_{1}) = -1$ and $\mathrm{cost}(\epsilon,f(e,\mathbf{x}_{1})) = d + 1 + \epsilon$.
    By the symmetry between profiles $\mathbf{x}_{1}$ and $\mathbf{x}_{2}$, we have $f(e, \mathbf{x}_{2}) = 1$ and $\mathrm{cost}(-\epsilon, f(e,\mathbf{x}_{2}))=d + 1 + \epsilon$.

    Now consider profile $\mathbf{x}_{3}$.
    We have
    \begin{equation*}
            TC(\mathbf{x}_{3}, \ell) = \left\{
            \begin{aligned}
                &2d + 2, &&\text{if } \ell = \pm 1 \\
                &|\ell + 1| + |\ell - 1| + 2D, &&\text{otherwise} \\
            \end{aligned}
            \right..
    \end{equation*}
    Thus, the optimal total cost for \(\mathbf{x}_3\) is \(OPT_{tc}(e,\mathbf{x}_3)=2d+2\) when the facility is located at \(-1\) or \(1\).
    By our assumption \Cref{tc_asp_apx} and the definition of \(D\), we have
    \[\gamma(f(e, x_{3})) < \frac{2d + 3}{2d + 1}=\frac{2D+2}{2d+2}.\]
    Since\(|\ell + 1| + |\ell - 1| + 2D \ge 2D+2\), we know that \(TC(\mathbf{x}_3,f(e,\mathbf{x}_3))\) must be \(2d+2\).
    Thus $f(e, \mathbf{x}_{3}) = \pm 1.$
    If $f(e,\mathbf{x}_{3}) = -1$, then the agent at $-\epsilon$ in $\mathbf{x}_{2}$ can deviate to $-1$ and decrease her cost from $d + 1 + \epsilon$ to $d + 1 - \epsilon$, a contradiction.
    If $f(e,\mathbf{x}_{3}) = 1$, then the agent at $\epsilon$ in $\mathbf{x}_{1}$ can deviate to $1$ and decrease her cost from $d + 1 + \epsilon$ to $d + 1 - \epsilon$, a contradiction.

    Due to the arbitrariness of $\epsilon > 0$,
    we have \(\gamma(f(e, \cdot))\ge \frac{2d + 3}{2d + 1}\).
    Observing that $r_{e} = 1 + \frac{2d + 2}{2d^2 + d}$, we can get the desired result after simple computation.
\end{proof}

\subsection{Randomized Mechanisms}

For any input \(e\) and \(\mathbf{x}\), our randomized mechanism, denoted by \(r(\cdot, \cdot)\), outputs each \(x_i^*\) with probability \(\frac{1}{n}\) for \(i\in N\). Recall that for any \(n\), the approximation ratio of the deterministic mechanism \(m_{med}(\cdot,\cdot)\) is exactly \(3\). Next we show that the randomized mechanism \(r(\cdot, \cdot)\) can achieve a better approximation ratio of \(3-\frac{2}{n}\).

\begin{theorem} \label{thm_1f_tc_r_ub}
    The mechanism \(r(\cdot,\cdot)\) is strategyproof and achieves an approximation ratio of \(3-\frac{2}{n}\) for the total cost, where \(n\) is the number of agents.
    The approximation ratio is tight for this mechanism.
\end{theorem}
\begin{proof}
    We first prove the strategyproofness of the mechanism \(r(\cdot,\cdot)\).
    The expected cost for agent \(i\) is \(\mathrm{cost}(x_i, r(e,\mathbf{x}))=\frac{1}{n}\sum_{j\in N}\mathrm{cost}(x_i,x_j^*)\).
    Suppose agent \(i\) deviates to \(x_i'\). Let \(x_i'^*\) be the optimal location for \(x_i'\). By the definition we have \(\mathrm{cost}(x_i,x_i^*)\le \mathrm{cost}(x_i,x_i'^*)\). Since the optimal location for agent \(j\ne i\) does not change, we have \(\mathrm{cost}(x_i, r(e,\mathbf{x}))\le \mathrm{cost}(x_i,r(e,\mathbf{x}'))\), where \(\mathbf{x}'=(x_i',\mathbf{x}_{-i})\). Thus, agent $i$ can not reduce her cost by misreporting.

    Now we calculate the approximation ratio of \(r(\cdot, \cdot)\). Given \(e\) and \(\mathbf{x}\), the optimal total cost \(OPT_{tc}\) is
    \begin{align}
        TC(\mathbf{x},\ell_{tc})&=\sum_{i\in N}\mathrm{cost}(x_i,\ell_{tc})
        \ge \sum_{i\in N} C_i. \label{tc_opt}
    \end{align}
    The last inequality is due to \(\mathrm{cost}(x_i,\ell_{tc})\ge \mathrm{cost}(x_i,x_i^*)=C_i\) for any \(i\in N\). The total cost of \(r(\cdot, \cdot)\) is
    \begin{align}
        TC(\mathbf{x}, r(e,\mathbf{x})) \nonumber
        &=\frac{1}{n}\sum_{i\in N}\sum_{j\in N}\mathrm{cost}(x_j, x_i^*) \nonumber 
        =\frac{1}{n}\sum_{i\in N}C_i+\frac{1}{n}\sum_{(i,j)\in N^2:i\ne j}\mathrm{cost}(x_i,x_j^*) \nonumber\\
        &\le \frac{1}{n}OPT_{tc}+\frac{1}{n}\sum_{(i,j)\in N^2:i\ne j}\mathrm{cost}(x_i,x_j^*),  \label{tc_r_1}
    \end{align}
    where the last inequality holds because of \Cref{tc_opt}.
    Then, we only need to bound the second term of \Cref{tc_r_1}.
    For any \(i,j\in N\), by the triangle inequality, we have \(\mathrm{cost}(x_i,x_j^*)\le|x_i-x_j|+|x_j-x_j^*|+e(x_j^*)=|x_i-x_j|+C_j\). Thus,
    \begin{align}
        \sum_{(i,j)\in N^2:i\ne j}\mathrm{cost}(x_i,x_j^*) 
        &=\sum_{k=1}^{n-1}\sum_{|i-j|=k}\mathrm{cost}(x_i,x_j^*)
        =\sum_{k=1}^{n-1}\sum_{i-j=k}(\mathrm{cost}(x_i,x_j^*)+\mathrm{cost}(x_j,x_i^*))\nonumber \\
        &\le \sum_{k=1}^{n-1}\sum_{i-j=k}(C_i+C_j+2|x_i-x_j|) \nonumber\\
        &=\sum_{k=1}^{n-1}\sum_{i-j=k}(C_i+C_j)+2\sum_{k=1}^{n-1}\sum_{i-j=k}|x_i-x_j|. \label{tc_r_2}
    \end{align}
    For the first term of \Cref{tc_r_2}, we have
    \begin{align}
        \sum_{k=1}^{n-1}\sum_{i-j=k}(C_i+C_j) =(n-1)\sum_{i\in N}C_i
        \le (n-1)OPT_{tc}. \label{tc_r_3}
    \end{align}
    Since \(OPT_{tc}\ge \sum_{i\in N}|x_i-x_{med}|\), for the second term of \Cref{tc_r_2}, we have
    \begin{align}
        2 \sum_{k=1}^{n-1}\sum_{i-j=k}|x_i-x_j| &\le 2 \sum_{k=1}^{n-1}\sum_{i\in N} |x_i-x_{med}| 
        \le 2(n-1)OPT_{tc}. \label{tc_r_4}
    \end{align}
    \noindent
    By applying \Cref{tc_r_3,tc_r_4} to \Cref{tc_r_2}, and then applying \Cref{tc_r_2} to \Cref{tc_r_1}, we have \( TC(\mathbf{x},r(e,\mathbf{x})) \le (3-\frac{2}{n})OPT_{tc}\).

\begin{figure} [http]
      \centering
      \includegraphics[width=\myscale\linewidth]{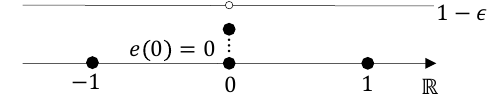}
      \caption{A tight example for the approximation ratio in Theorem \ref{thm_1f_tc_r_ub}. The black dots are agents.} 
      \label{fig_tc_r_tight}
\end{figure}
    To finish our proof, we use the example of \(n\) agents in Figure \ref{fig_tc_r_tight} to show the upper bound of \(3-\frac{2}{n}\) is tight for the mechanism \(r(\cdot,\cdot)\). Consider the agent position profile \(\mathbf{x}=(-1, 0,\dots,0, 1)\) and the entrance fee function \(e(\cdot)\), where \(e(\ell)=0\) when \(\ell = 0\) and \(e(\ell)=1-\epsilon\) when \(\ell\ne 0\). The optimal location is \(0\) with a total cost of \(2\), and the expect total cost of mechanism \(r(\cdot, \cdot)\) is \(\frac{2}{n}\!\cdot\!(n(1-\epsilon)+n)+\frac{n-2}{n}\!\cdot\! 2=6-\frac{4}{n}-2\epsilon\).
\end{proof}

Next, we establish a lower bound for the approximation ratio against all randomized strategyproof mechanisms.

\begin{theorem} \label{thm_1f_tc_r_lb_2}
    There exists an entrance fee function \(e\) with \(r_e=+\infty\) such that no randomized strategyproof mechanism \(f(e, \cdot): \mathbb{R}^n\rightarrow \Delta(\mathbb{R})\) can achieve a total cost approximation ratio less than \(2\). Hence, no randomized strategyproof mechanism \(f(\cdot, \cdot):\mathcal{E} \times \mathbb R^n \rightarrow \Delta(\mathbb R)\) can achieve a total cost approximation ratio less than $2$.
\end{theorem}

\begin{proof}
    Consider the following entrance fee function:
    \begin{equation*}
        e(\ell) =
        \begin{cases}
          \begin{aligned}
              &0, &&\text{if } \ell \in \{ -1, 1\} \\
              &+\infty, &&\text{otherwise} \\
            \end{aligned}
        \end{cases}.
    \end{equation*}
    Let \(f(e,\cdot):\mathbb{R}^n\rightarrow \Delta(\mathbb{R})\) be a randomized strategyproof mechanism.
    To achieve a bounded  total cost approximation ratio, for any profile \(\mathbf{x}\in \mathbb{R}^n\), it holds that \(\mathrm{Pr}[f(e,\mathbf{x})=-1] + \mathrm{Pr}[f(e,\mathbf{x})=1]=1.\)
    Let \(\mathbf{x}_1=(-1,\epsilon)\), \(\mathbf{x}_2=(-\epsilon,1)\) and \(\mathbf{x}_3=(-1,1)\) be three agent position profiles, where \(\epsilon \in (0, 1)\) is a small positive.
    Let \(\mathrm{Pr}[f(e,\mathbf{x}_1)=1]=p\) and \(\mathrm{Pr}[f(e,\mathbf{x}_2)=-1]=p'\).
    We illustrate the example in Figure \ref{fig_tc_lb_r3}.
\begin{figure} [t]
      \centering
      \includegraphics[width=\myscale\linewidth]{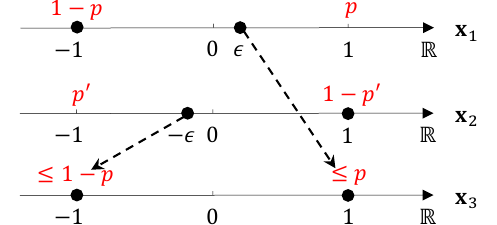}
      \caption{Example for randomized lower bound \(2\) for the total cost. The black dots are agents. The red letters denote the probabilities to locate the facility by the mechanism \(f(e,\cdot)\). The dashed lines represent the deviations of the agents.} 
      \label{fig_tc_lb_r3}
\end{figure}
    \(TC(\mathbf{x}_1,-1)=1+\epsilon\) is the optimal total cost and \(TC(\mathbf{x}_1,1)=3-\epsilon\). Thus, the approximation ratio of \(f\) for \(\mathbf{x}_1\) is
    \begin{equation} \label{tc_bound_1}
        \gamma(f(e,\mathbf{x}_1))=(1-p)\cdot 1+p\cdot \frac{3-\epsilon}{1+\epsilon}.
    \end{equation}

    Now consider the case that the agent at \(\epsilon\) in profile \(\mathbf{x}_1\) deviates to \(1\). Since this agent prefers location \(1\) to location \(-1\), by the strategyproofness of \(f(e,\cdot)\), we must have \(\mathrm{Pr}[f(e,\mathbf{x}_3)=1] \le p\). Then consider the case that the agent at \(-\epsilon\) in \(\mathbf{x}_2\) deviates to \(-1\), again by the strategyproofness of \(f(e,\cdot)\), we must have \(\mathrm{Pr}[f(e,\mathbf{x}_3)=-1] \le p'\). Thus \(p+p' \ge \mathrm{Pr}[f(e,\mathbf{x}_3)=-1]+\mathrm{Pr}[f(e,\mathbf{x}_3)=1]=1.\) W.l.o.g., we assume \(p\ge 1/2\).
    Then by \Cref{tc_bound_1}, we have \(\gamma(f(e,\cdot))\ge \gamma(f(e,\mathbf{x}_1))\ge \frac{2}{1+\epsilon}.\)
    By the arbitrariness of \(\epsilon>0\), we know that the total cost approximation ratio of any randomized strategyproof mechanism is at least \(2\).

\end{proof}

Thus, for \(n=2\), our randomized mechanism \(r(\cdot,\cdot)\) achieves the best possible approximation ratio.
Note that the max-min ratio in the above proof is \(+\infty\). Thus we can safely assume that the probability distribution of the mechanism is discrete. However, when the max-min ratio is arbitrarily given, this assumption does not hold, and we have to deal with continuous distributions.
In the following theorem, we show that we can discretize continuous distributions using the first mean-value theorem for integrals and establish a lower bound against all the entrance fee functions of a given max-min ratio.

\begin{theorem} \label{thm_1f_tc_r_lb_re}
    For any \(\alpha \!\ge\! 1\), there exists an entrance fee function \(e\) with max-min ratio \(r_e=\alpha\) such that no randomized strategyproof mechanism \(f(e,\cdot)\!:\!\mathbb{R}^n\!\rightarrow\! \Delta(\mathbb{R})\) can achieve a total cost approximation ratio less than \(\frac{\sqrt{2} + 1}{2} - \frac{1}{(4 + 2\sqrt{2}) r_{e} - 2}\).
\end{theorem}
\begin{proof}
    Consider the following entrance fee function:
    \begin{equation*}
          e(\ell) =
          \begin{cases}
            \begin{aligned}
                &d, &&\text{if } \ell \in \{ -1, 1\} \\
                &d + 1-a, &&\text{otherwise} \\
              \end{aligned}
          \end{cases}
    \end{equation*}
    where \(a=\sqrt{2}-1<\frac{1}{2}\).
    Let $r_{e} = \frac{d+1-a}{d}=1+\frac{1-a}{d}$ and \(\gamma^{*} = \frac{\sqrt{2} + 1}{2} - \frac{1}{(4 + \sqrt{2})r_{e} - 2}\).
    For the sake of contradiction, suppose that $f(e,\cdot) \colon \mathbb{R}^n\rightarrow \Delta(\mathbb{R})$ is a randomized strategyproof mechanism and its total cost approximation ratio $\gamma(f(e, \cdot))< \gamma^*$.
    Consider three agent position profiles $\mathbf{x}_{1} = (-1, a)$, $\mathbf{x}_{2} = (-a, 1)$ and $\mathbf{x}_{3} = (-1, 1)$. We illustrate the example in Figure \ref{fig_tc_lb_r3_re}.
    We get \(OPT_{tc}(e,\mathbf{x}_1)=TC(\mathbf{x}_1,-1)=2d+1+a\), \(OPT_{tc}(e,\mathbf{x}_2)=TC(\mathbf{x}_2,1)=2d+1+a\), and \(OPT_{tc}(e,\mathbf{x}_3)=TC(\mathbf{x}_3,\pm 1)=2d+2\).
    Then, we have
    \begin{equation*} \label{B-MC-GAMMA}
        \left\{
        \begin{aligned}
          \gamma(f(e,\mathbf{x}_{1}))
          &= \frac{\mathrm{cost}(-1, f(e, \mathbf{x}_{1})) + \mathrm{cost}(a, f(e, \mathbf{x}_{1}))}{2d + 1+a}; \\
          \gamma(f(e,\mathbf{x}_{2}))
          &= \frac{\mathrm{cost}(-a, f(e, \mathbf{x}_{2})) + \mathrm{cost}(1, f(e, \mathbf{x}_{2}))}{2d + 1+a}; \\
          \gamma(f(e,\mathbf{x}_{3}))
          &= \frac{\mathrm{cost}(-1, f(e, \mathbf{x}_{3})) + \mathrm{cost}(1, f(e, \mathbf{x}_{3}))}{2d + 2}.
        \end{aligned}
        \right.
    \end{equation*}

      \begin{figure} [t]
            \centering
            \includegraphics[width=\myscale\linewidth]{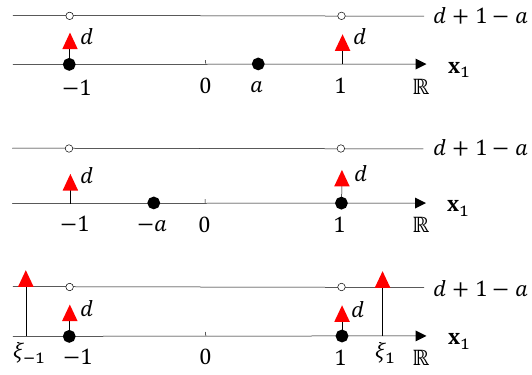}
            \caption{The definitions of \(e(\cdot),\mathbf{x}_1,\mathbf{x}_2, \mathbf{x}_3, \xi_{-1} \textrm{ and } \xi_{1}\). For \(i\in \{1,2,3\}\), the red triangles on \(\mathbf{x}_i\) denote the possible facility locations of \(\hat{f}(e,\mathbf{x}_i)\).} 
            \label{fig_tc_lb_r3_re}
      \end{figure}
    \noindent
    Since $f(\cdot, \cdot)$ is a strategyproof mechanism with approximation ratio less than $\gamma^{*}$, it holds that
    \begin{equation} \label{B-SP-MC}
        \left\{
        \begin{aligned}
          &\gamma(f(e,\mathbf{x}_{i})) < \gamma^{*},~\forall i \in \{ 1, 2, 3 \}; \\
          &\mathrm{cost}(a, f(e, \mathbf{x}_{1})) \leq \mathrm{cost}(a, f(e, \mathbf{x}_{3})); \\
          &\mathrm{cost}(-a, f(e, \mathbf{x}_{2})) \leq \mathrm{cost}(-a, f(e, \mathbf{x}_{3})) .\\
        \end{aligned}
        \right.
    \end{equation}

    Next, we construct a randomized mechanism $\hat{f}(e, \cdot)$ base on $f(e, \cdot)$.
    For any agent position profile \(\mathbf{x}\notin \{\mathbf{x}_1,\mathbf{x}_2,\mathbf{x}_3\}\), let \(\hat{f}(e,\mathbf{x})=f(e,\mathbf{x})\) .
    For \(i\in \{1,2,3\}\), \(\hat{f}(e,\mathbf{x}_i)\) is a discrete probability distribution over finitely many locations, and meanwhile, \(\hat{f}(e,\mathbf{x}_i)\) satisfies \Cref{B-SP-MC} as \(f(e,\mathbf{x}_i)\).
    Intuitively, \(\hat{f}(e,\mathbf{x}_1)\) ``concentrates'' the probability of \(f(e,\mathbf{x}_1)\) over \(\mathbb{R}\setminus \{-1\}\) in \(\{1\}\) and \(\hat{f}(e,\mathbf{x}_2)\) ``concentrates'' the probability of \(f(e,\mathbf{x}_2)\) over \(\mathbb{R}\setminus \{1\}\) in \(\{-1\}\).
    Formally, for $\mathbf{x}_{1}$ and $\mathbf{x}_{2}$, define
    \begin{equation*}
        \Pr[\hat{f}(e, \mathbf{x}_{1}) = \ell] = \left\{
        \begin{aligned}
          &\Pr[f(e, \mathbf{x}_{1}) = -1], && \text{if}~\ell = -1 \\
          &1 - \Pr[f(e, \mathbf{x}_{1}) = -1], && \text{if}~\ell = 1 \\
          &0, && \text{otherwise}
        \end{aligned}
        \right.,
    \end{equation*}
    and
    \begin{equation*}
        \Pr[\hat{f}(e, \mathbf{x}_{2}) = \ell] = \left\{
        \begin{aligned}
          &\Pr[f(e, \mathbf{x}_{2}) = 1], && \text{if}~\ell = 1 \\
          &1 - \Pr[f(e, \mathbf{x}_{2}) = 1], && \text{if}~\ell = -1 \\
          &0, && \text{otherwise}
        \end{aligned}
        \right..
    \end{equation*}
    Thus, \(\hat{f}(e, \mathbf{x}_{1})\) and \(\hat{f}(e, \mathbf{x}_{2})\) are two discrete distributions over \(\{-1,1\}\).
    Observe that when \(\ell\ne -1\), \(TC(\mathbf{x}_1,\ell)\) and \(\mathrm{cost}(a,\ell)\) are both minimized at \(\ell=1\), and when \(\ell \ne 1\), \(TC(\mathbf{x}_2,\ell)\) and \(\mathrm{cost}(-a,\ell)\) are both minimized at \(\ell=-1\).    Thus, for \(i\in \{1,2\}\), \(\hat{f}(e,\mathbf{x}_i)\) performs better than \(f(e,\mathbf{x}_i)\).
    That is, it holds
    \begin{equation} \label{B-M-SP1}
        \left\{
        \begin{aligned}
          &\gamma(\hat{f}(e, \mathbf{x}_{i})) \leq \gamma(f(e,\mathbf{x}_{i})) < \gamma^{*}; \\
          &\mathrm{cost}(a, \hat{f}(e, \mathbf{x}_{1}))
          \leq \mathrm{cost}(a, f(e, \mathbf{x}_{1}))
          \leq \mathrm{cost}(a, f(e, \mathbf{x}_{3}));\\
          &\mathrm{cost}(-a, \hat{f}(e, \mathbf{x}_{2}))
          \leq \mathrm{cost}(-a, f(e, \mathbf{x}_{2}))
          \leq \mathrm{cost}(-a, f(e, \mathbf{x}_{3})). \\
        \end{aligned}
        \right.
    \end{equation}

    Let \(\delta\) be a small positive.
    To construct $\hat{f}(e,\mathbf{x}_{3})$, we first define an auxiliary distribution $g(e, \mathbf{x}_{3})$ that ``concentrates'' the probability of \(f(e,x_3)\) over \((-1,0)\) and \([0,1)\) in \(-1-\delta\) and \(1+\delta\), respectively. Formally, 
    \begin{equation*}
        \Pr[g(e, \mathbf{x}_{3}) = \ell] = \left\{
        \begin{aligned}
          &0, && \text{if}~\ell \in (-1, 1); \\
          &\Pr[f(e, \mathbf{x}_{3}) \in (-1, 0) \cup \{ -1 - \delta \} ], && \text{if}~\ell = -1 - \delta; \\
          &\Pr[f(e, \mathbf{x}_{3}) \in [0, 1) \cup \{ 1 + \delta \}], && \text{if}~\ell = 1 + \delta ;\\
          &\Pr[f(e, \mathbf{x}_{3}) = \ell], && \text{otherwise}.
        \end{aligned}
        \right.
    \end{equation*}
    Thus, \(g(e,\mathbf{x}_3)\) is a probability distribution over \((-\infty,-1]\cup [1,+\infty)\).
    \(g(e,\mathbf{x}_3)\) performs worse than \(f(e,\mathbf{x}_3)\). That is, for $i \in \{1,2\}$ it holds that
    \begin{equation} \label{B-A-SP1}
        \left\{
        \begin{aligned}
          &\mathrm{cost}(a, f(e, \mathbf{x}_{3}))
          \leq \mathrm{cost}(a, g(e, \mathbf{x}_{3})); \\
          &\mathrm{cost}(-a, f(e, \!\mathbf{x}_{3}))
          \leq \mathrm{cost}(-a, g(e, \!\mathbf{x}_{3})). \\
        \end{aligned}
        \right.
    \end{equation}
    For \(\ell\in(-1,0)\), we have \(TC(\mathbf{x}_3,-1-\delta)=TC(\mathbf{x}_3,\ell)+2\delta\). For \(\ell\in [0,1)\), we have \(TC(\mathbf{x}_3,1+\delta)=TC(\mathbf{x}_3,\ell)+2\delta\).
    Thus, for small enough $\delta$, it holds
    \begin{equation} \label{B-A-SP2}
        \gamma(f(e, \mathbf{x}_{3})) < \gamma(g(e,\mathbf{x}_{3})) < \gamma^{*}.
    \end{equation}
    Let $U_{-} = (-\infty, -1)$, and $U_{+} = (1, +\infty)$.
    According to the first mean value theorem for definite integrals, we know that for each $j \in \{-, +\}$, there exists $\xi_{j} \in U_{j}$ such that
    \begin{align*}
        \int_{U_{j}} \Pr[g(e, \textbf{x}_{3}) = \ell]\!\cdot\! |\ell| \,\mathrm{d} \ell
        = |\xi_{j}|\!\cdot\! \int_{U_{j}} \Pr[g(e, \textbf{x}_{3}) = \ell] \,\mathrm{d} \ell
        = |\xi_{j}|\cdot \Pr[g(e, \textbf{x}_{3}) \in U_{j}].
    \end{align*}
    Thus, for any $\tau \in \{-1, -a, a, 1\}$, we have
    \begin{equation} \label{B-FMVT-TC}
        \left\{
        \begin{aligned}
          \int_{U_{-}}\!\!\!\!\!\Pr[g(e, \textbf{x}_{3}) \!=\! \ell] \!\cdot\! |\ell \!-\! \tau| \mathrm{d} \ell
          &\!=\! |\xi_{-1} - \tau| \cdot \Pr[g(e, \textbf{x}_{3}) \in U_{-}]; \\
          \int_{U_{+}}\!\!\Pr[g(e, \textbf{x}_{3}) \!=\! \ell] \!\cdot\! |\ell \!-\! \tau| \mathrm{d} \ell
          &\!=\! |\xi_{1} - \tau| \cdot \Pr[g(e, \textbf{x}_{3}) \in U_{+}].
        \end{aligned}
        \right.
    \end{equation}
    Let
    \begin{equation*}
        \Pr[\hat{f}(e, \mathbf{x}_{3}) = \ell] = \left\{
        \begin{aligned}
          &\Pr[g(e, \mathbf{x}_{3}) = \ell] = \Pr[f(e, \mathbf{x}_{3}) = \ell], && \text{if}~\ell \pm 1; \\
          &\Pr[g(e, \mathbf{x}_{3}) \in U_{-}], && \text{if}~\ell = \xi_{-1}; \\
          &\Pr[g(e, \mathbf{x}_{3}) \in U_{+}], && \text{if}~\ell = \xi_{1}; \\
          &0, && \text{otherwise}. \\
        \end{aligned}
        \right.
    \end{equation*}
    Thus, \(\hat{f}(e, \mathbf{x}_{3})\) is a discrete distribution over \(\{-\xi_{-1},-1,1,\xi_{1}\}\).
    By \Cref{B-FMVT-TC}, for any $\tau \in \{-1, -a, a, 1\}$, it holds that \(\mathrm{cost}(\tau, \hat{f}(e, \textbf{x}_{3})) =  \mathrm{cost}(\tau, g(e, \textbf{x}_{3})).\)
    By \Cref{B-M-SP1,B-A-SP1,B-A-SP2}, we have
    \begin{equation} \label{B-M-SP2}
      \left\{
        \begin{aligned}
          &\gamma(\hat{f}(e, \textbf{x}_{3})) = \gamma(g(e, \textbf{x}_{3})) < \gamma^*;\\
          &\mathrm{cost}(a, \hat{f}(e, \mathbf{x}_{1})) \leq \mathrm{cost}(a, \hat{f}(e, \mathbf{x}_{3})); \\
          &\mathrm{cost}(-a, \hat{f}(e, \mathbf{x}_{2})) \leq \mathrm{cost}(-a, \hat{f}(e, \mathbf{x}_{3})). \\
        \end{aligned}
        \right.
    \end{equation}
    By \Cref{B-M-SP1,B-M-SP2}, we know that $\hat{f}(\cdot, \cdot)$ satisfies the conditions in \Cref{B-SP-MC} as \(f(\cdot,\cdot)\), i.e., 
       \begin{numcases}{}
            \gamma(\hat{f}(e,\mathbf{x}_{i})) < \gamma^{*},~\forall i \in \{ 1, 2, 3 \} \nonumber ; \\
            \mathrm{cost}(a, \hat{f}(e, \mathbf{x}_{1})) \leq \mathrm{cost}(a, \hat{f}(e, \mathbf{x}_{3}))  \label{c1_ge_c3};\\
            \mathrm{cost}(-a, \hat{f}(e, \mathbf{x}_{2})) \leq \mathrm{cost}(-a, \hat{f}(e, \mathbf{x}_{3})) \label{c2_ge_c3}.
        \end{numcases}    

    With the above setup, we proceed to derive a lower bound of \(\gamma(\hat{f}(e,\cdot))\).
    For simplicity, let $p_{i}(\ell) = \Pr[\hat{f}(e, \mathbf{x}_{i}) = \ell]$ for index $i \in \{ 1,2,3\}$.
    Thus, for the agent at \(a\) in \(\mathbf{x}_1\) and the agent at \(-a\) in \(\mathbf{x}_2\), we have
    \begin{equation*}
        \begin{aligned}
          \mathrm{cost}(a, \hat{f}(e, \mathbf{x}_{1})) &= p_{1}(-1)\cdot (d + 1+a) + p_{1}(1)\cdot (d + 1-a)
          = d + 1 + a\cdot (1 - 2p_{1}(1)); \\
          \mathrm{cost}(-a, \hat{f}(e, \mathbf{x}_{2})) &= p_{2}(-1)\cdot (d + 1-a) + p_{2}(1)\cdot (d + 1 + a)
          = d + 1 + a\cdot (1 - 2p_{2}(-1)).
        \end{aligned}
     \end{equation*}
     If the agent at \(a\) in \(\mathbf{x}_1\) deviates to \(1\), her cost would change to
    \begin{align*}
      &\mathrm{cost}(a, \hat{f}(e, \mathbf{x}_{3})) \\
      &=
      p_{3}(\xi_{-1})\cdot (d+1 - \xi_{-1}) +
      p_{3}(\xi_{1})\cdot (d+1-2a + \xi_{1}) + 
      p_{3}(-1)\cdot (d + 1 + a) + p_{3}(1)\cdot (d + 1 - a) \\
      &=
      d+1-\xi_{-1}\cdot p_3(\xi_{-1})+(\xi_1-2a)\cdot p_3(\xi_1)+a\cdot(p_3(-1)-p_3(1)).
    \end{align*}
    By \Cref{c1_ge_c3}, we have
    \begin{align} \label{c3_c1_ge_0}
        \begin{split}
            &\mathrm{cost}(a,\hat{f}(e,\mathbf{x}_3))-\mathrm{cost}(a,\hat{f}(e,\mathbf{x}_1))\\
            &=-\xi_{-1}\cdot p_3(\xi_{-1})+(\xi_1-2a)\cdot p_3(\xi_1)+a\cdot (p_3(-1)-p_3(1)-1+2p_1(1))
            \ge 0.
        \end{split}
    \end{align}
    Similarly, if the agent at \(-a\) in \(\mathbf{x}_2\) deviates to \(-1\), her cost would change to
    \begin{equation*}
      \begin{aligned}
        &\mathrm{cost}(-a, \hat{f}(e, \mathbf{x}_{3}))\\
        &= p_{3}(\xi_{-1})\cdot (d+1 - 2a - \xi_{-1}) + p_{3}(\xi_{1})\cdot (d+1 + \xi_{1}) + 
        p_{3}(-1)\cdot (d + 1 - a) + p_{3}(1)\cdot (d + 1 + a) \\
        &= d+1-(2a+\xi_{-1}+1)\cdot p_3(\xi_{-1})+\xi_{1}\cdot p_3(\xi_1)-a\cdot (p_3(-1)-p_3(1)).
      \end{aligned}
   \end{equation*}
   By \Cref{c2_ge_c3}, we have
   \begin{equation} \label{c3_c2_ge_0}
    \begin{aligned}
        &\mathrm{cost}(-a,\hat{f}(e,\mathbf{x}_3))-\mathrm{cost}(-a,\hat{f}(e,\mathbf{x}_2))\\
        &=-(\xi_{-1}+2a)\cdot p_3(\xi_{-1})+\xi_{1}\cdot p_3(\xi_1)-a\cdot (p_3(-1)-p_3(1)+1-2p_2(-1))
        \ge 0.
    \end{aligned}
    \end{equation}

    \noindent
    Next we consider the approximation ratio of \(\hat{f}(e, \cdot)\) for \(\mathbf{x}_1, \mathbf{x}_2\) and \(\mathbf{x}_3\). We have
    \begin{align}
        \gamma(\hat{f}(e,\mathbf{x}_1))&=p_1(-1)+p_1(1)\cdot \frac{2d+3-a}{2d+1+a} = 1 + p_1(1)\cdot \frac{2-2a}{2d+1+a} < \gamma^* \label{r1_rs}; \\
        \gamma(\hat{f}(e,\mathbf{x}_2))&=p_2(1)+p_2(-1)\cdot \frac{2d+3-a}{2d+1+a} = 1 + p_2(-1)\cdot \frac{2-2a}{2d+1+a}< \gamma^* \label{r2_rs}; \\
        \begin{split}
            \gamma(\hat{f}(e,\mathbf{x}_3))&=p_3(\xi_{-1}) \frac{d+1-a-\xi_{-1}}{d+1} \!+\! p_3(\xi_{1}) \frac{d+1-a+\xi_{1}}{d+1}+p_3(-1)+p_3(1)\\
            &=1+p_3(\xi_{-1})\cdot \frac{-a-\xi_{-1}}{d+1}+p_3(\xi_1)\cdot \frac{-a+\xi_1}{d+1} < \gamma^* \label{r3_rs}.
        \end{split}
    \end{align}
    Thus, summation of \Cref{r1_rs,r2_rs} gives us
    \begin{align}
        \gamma^*>1+\frac{(1-a)\cdot (p_1(1)+p_2(-1))}{2d+1+a} \label{rs_lb_1}.
    \end{align}
    Then, by \Cref{c3_c1_ge_0,r3_rs}, we can derive:
    \begin{align}
        \gamma^* > 1+a\cdot \frac{-p_3(\xi_{-1})+p_3(\xi_1)-p_3(-1)+p_3(1)+1-2p_1(1)}{d+1} \label{rs_lb_2_1}.
    \end{align}
    Similarly, by \Cref{c3_c2_ge_0,r3_rs}, we can obtain:
    \begin{align}
        \gamma^* > 1+a\cdot \frac{p_3(\xi_{-1})-p_3(\xi_{1})+p_3(-1)-p_3(1)+1-2p_2(-1)}{d+1} \label{rs_lb_2_2}.
    \end{align}
    Adding up \Cref{rs_lb_2_1,rs_lb_2_2}, we get
    \begin{align}
        \gamma^* > 1+ \frac{a\cdot (1-p_1(1)-p_2(-1))}{d+1} \label{rs_lb_2}.
    \end{align}
    Combining \Cref{rs_lb_1,rs_lb_2}, we establish a lower bound for \(\gamma^*\):
    \begin{align}
        \gamma^* > 1+\max\bigl(\frac{(1-a)\cdot (p_1(1)+p_2(-1))}{2d+1+a}, \frac{a\cdot (1-p_1(1)-p_2(-1))}{d+1}\bigr) \label{rs_lb_3}.
    \end{align}
    The right hand side of \Cref{rs_lb_3} is minimized when \(\frac{(1-a)\cdot  (p_1(1)+p_2(-1))}{2d+1+a} = \frac{a\cdot (1-p_1(1)-p_2(-1))}{d+1},\) 
    which  gives us \(p_1(1)+p_2(-1)=\frac{a(2d+1+a)}{ad+d+a^2+1}\in [0,1]\). Recall that \(r_e=1+\frac{1-a}{d}\) and \(a=\sqrt{2}-1\). Simple computation shows that the right hand side of \Cref{rs_lb_3} is minimized to \(\frac{\sqrt{2} + 1}{2} - \frac{1}{(4 + 2\sqrt{2}) r_{e} - 2}\), 
    contradicting the assumption that \(\gamma^* = \frac{\sqrt{2} + 1}{2} - \frac{1}{(4 + 2\sqrt{2}) r_{e} - 2}\) in the beginning of the proof.

\end{proof}

\section{Egalitarian Version with One Facility} \label{sect_1f_mc}
Now we study strategyproof mechanisms that approximate the maximum cost.
Our mechanism is \(m_1(\cdot, \cdot)\), which outputs \(x_1^*\) for a given function \(e\) and a location profile \(\mathbf{x}=(x_1,\dots, x_n)\).

\begin{theorem} \label{thm_1f_mc_apx}
    For each entrance fee function $e$ with max-min ratio \(r_e\),
    the approximation ratio of \(m_1(e,\cdot)\) \ for the maximum cost is
        \begin{align*}
            \gamma(m_1(e,\cdot)) \le
            \begin{cases}
                2, &\text{ if } r_e \le 2 \\
                3-\frac{2}{r_e}, &\text{ if } r_e > 2 \\
            \end{cases}.
        \end{align*}
    Hence, the approximation ratio of \(m_{1}(\cdot,\cdot)\) \  for the maximum cost is at most \ \(3\).
\end{theorem}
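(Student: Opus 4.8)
The plan is to compare \(MC(\bm{x},x_1^*)\) with \(OPT_{mc}(e,\bm{x})=MC(\bm{x},\ell_{mc})\) directly, leaning on the structural results of Section \ref{sect_stct_prop}. Two preliminary observations organize the argument. First, for a single facility at \(\ell\) we have \(MC(\bm{x},\ell)=\max\{|x_1-\ell|,|x_n-\ell|\}+e(\ell)\), so both the value returned by \(m_1\) and the optimum depend only on the two extreme agents \(x_1,x_n\); I would reason with these throughout. Second, by the Monotonicity Lemma together with Lemma \ref{lm_ltc_bounded} we have \(\ell_{mc}\in[x_1^*,x_n^*]\), hence \(\ell_{mc}\ge x_1^*\) — this single inequality is exactly what makes \(m_1\) a sensible mechanism here. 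Dispose of the easy case first: if \(x_1^*\ge\tfrac{x_1+x_n}{2}\), then the agent attaining the maximum under the facility \(x_1^*\) is \(x_1\) itself, so \(MC(\bm{x},x_1^*)=\mathrm{cost}(x_1,x_1^*)=C_1\); since agent \(1\) pays at least \(C_1\) under any location, \(C_1\le OPT_{mc}\) and the ratio is at most \(1\). So from now on assume \(x_1^*<\tfrac{x_1+x_n}{2}\) (in particular \(x_1^*<x_n\)), in which case \(MC(\bm{x},x_1^*)=\mathrm{cost}(x_n,x_1^*)=(x_n-x_1^*)+e(x_1^*)=:V\). Within this case one further splits on whether \(x_1\le x_1^*\) or \(x_1^*<x_1\); the next two paragraphs treat \(x_1\le x_1^*\), and the other sub-case (where \(V=(x_n-x_1)+C_1\)) is handled by the same template with the adjustments noted at the end.

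The core is a set of lower bounds on \(OPT_{mc}\). I would use three: (i) \(OPT_{mc}\ge C_1\ge e(x_1^*)\ge e_{min}\); (ii) the averaging bound \(OPT_{mc}\ge\tfrac12\big(\mathrm{cost}(x_1,\ell_{mc})+\mathrm{cost}(x_n,\ell_{mc})\big)\ge\tfrac{x_n-x_1}{2}+e_{min}\), where a one-line check handles the possibility \(\ell_{mc}\notin[x_1,x_n]\) (there the estimate only improves); and (iii) a sharper bound drawn from the tie-breaking rule. For (iii): since \(x_1^*\) is the tie-broken minimizer of \(\mathrm{cost}(x_1,\cdot)\), every location strictly to the right of \(x_1^*\) satisfies \(\mathrm{cost}(x_1,\ell)>C_1\), i.e. \(e(\ell)>C_1-|x_1-\ell|\). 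If \(\ell_{mc}=x_1^*\) the ratio is \(1\); otherwise \(\ell_{mc}>x_1^*\), and substituting \(\ell=\ell_{mc}\) into this inequality and using \(\mathrm{cost}(x_n,\ell_{mc})=(x_n-\ell_{mc})+e(\ell_{mc})\) gives \(\mathrm{cost}(x_n,\ell_{mc})\ge V-2(\ell_{mc}-x_1^*)\), while \(\mathrm{cost}(x_1,\ell_{mc})\ge(\ell_{mc}-x_1)+e_{min}\). As \(OPT_{mc}\) dominates both estimates for the actual position of \(\ell_{mc}\), minimizing over that position — the increasing estimate \((\ell_{mc}-x_1)+e_{min}\) and the decreasing one \(V-2(\ell_{mc}-x_1^*)\) balance — yields \(OPT_{mc}\ge\tfrac{V+2e_{min}}{3}\) after discarding a nonnegative term. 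The monotone comparisons in this step rest on the domination Lemma \ref{lm_x*_dominate_I[x]} and on Lemma \ref{L-triple}.

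To conclude, combine and split on \(r_e\). If \(r_e\le 2\), so \(e_{max}\le 2e_{min}\), then (i) and (ii) already suffice: \(V=(x_n-x_1)-(C_1-e(x_1^*))+e(x_1^*)\le(x_n-x_1)+e(x_1^*)\le(x_n-x_1)+2e_{min}\le 2\,OPT_{mc}\), so the ratio is at most \(2\). If \(r_e>2\), combine (iii) with the rewriting of (ii) as \(OPT_{mc}\ge\tfrac{V-e_{max}+2e_{min}}{2}\) (using \(e(x_1^*)\le e_{max}\) and \(x_n-x_1\ge x_n-x_1^*\)); then \(V/OPT_{mc}\) is at most the minimum of two explicit functions of \(V\) — one increasing toward \(3\), one decreasing toward \(2\) — so the worst case is where the two coincide, and plugging that value of \(V\) back in gives the stated ratio \(3-\tfrac{3}{r_e+1}\). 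Letting \(r_e\to\infty\) then gives the uniform bound \(3\) for \(m_1(\cdot,\cdot)\). The sub-case \(x_1^*<x_1\) follows the same template: there \(V=(x_n-x_1)+C_1\), and the tie-breaking step is applied to all \(\ell\ge 2x_1-x_1^*\) instead.

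I expect the sharp bound (iii) to be the main obstacle: one must feed \(\ell_{mc}\ge x_1^*\) and the tie-breaking rule into the averaging estimate in exactly the right way, and then carry the two-parameter optimization of the combined bounds through carefully enough to reach the claimed constant — the plain averaging bound (ii) is by itself too weak once \(r_e>2\). The remaining steps (the easy case, the reduction to two agents, and the final arithmetic) are routine bookkeeping.
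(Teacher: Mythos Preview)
Your overall shape matches the paper's: both arguments rest on the two lower bounds \(OPT_{mc}\ge C_1\) and \(OPT_{mc}\ge \tfrac{L}{2}+e_{min}\) (your (i) and (ii)) and an upper bound on \(MC(\bm x,x_1^*)\). But the step you flag as ``the main obstacle'' does not close, and the reason is structural. Your bound (iii), \(OPT_{mc}\ge \tfrac{V+2e_{min}}{3}\), is not a new inequality: in the sub-case \(x_1\le x_1^*\) one has \(V+2(x_1^*-x_1)+2e_{min}=L+C_1+2e_{min}\), so the sharper form of (iii) is exactly \(\tfrac{1}{3}C_1+\tfrac{2}{3}\bigl(\tfrac{L}{2}+e_{min}\bigr)\), a convex combination of (i) and (ii). Hence \(\max\{(\mathrm{i}),(\mathrm{ii}),(\mathrm{iii})\}=\max\{(\mathrm{i}),(\mathrm{ii})\}\) and (iii) contributes nothing. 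Correspondingly, your final ``balance the two explicit functions of \(V\)'' step does not land on the stated constant: equating \(\tfrac{3V}{V+2e_{min}}\) with \(\tfrac{2V}{V-e_{max}+2e_{min}}\) gives \(V=3e_{max}-2e_{min}\) and ratio \(3-\tfrac{2}{r_e}\), which is strictly larger than \(3-\tfrac{3}{r_e+1}\) for every \(r_e>2\).

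This is not merely a slip in your write-up; the paper's own proof has the same lacuna. Its Subcase~2.2 derives only \(\gamma\le\min\{3,\,4-\tfrac{8}{r_e+2}\}\) and then asserts the formula, but for instance at \(r_e=3\) this minimum is \(12/5\) while \(3-\tfrac{3}{r_e+1}=9/4\). Worse, the bounds (i)--(ii) alone cannot reach \(3-\tfrac{3}{r_e+1}\): take \(e_{min}=1\), \(e_{max}=6\), \(x_1=0\), \(x_n=10+\varepsilon\), and \(e(\ell)=6\) for \(\ell<5+\varepsilon/4\), \(e(\ell)=1\) otherwise. Then \(x_1^*=0\), \(MC(\bm x,x_1^*)=16+\varepsilon\), \(OPT_{mc}=6+\varepsilon/2\), and the ratio tends to \(8/3\approx 2.667>18/7=3-\tfrac{3}{7}\) as \(\varepsilon\to 0\). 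So either a genuinely stronger lower bound on \(OPT_{mc}\) (beyond any convex combination of (i) and (ii)) is needed, or the target constant itself must be revised to \(3-\tfrac{2}{r_e}\).
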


\begin{proof}
    Given \(\mathbf{x}\in \mathbb{R}^n\). Recall that \(\ell_{mc}\) is the location that achieves the optimal maximum cost. Then \(OPT_{mc}(e,\mathbf{x})=\max\{\mathrm{cost}(x_1,\ell_{mc}),\mathrm{cost}(x_n,\ell_{mc})\}\).
    Let \(t\in \{1,n\}\) be the agent such that \(\mathrm{cost}(x_t,\ell_{mc})=OPT_{mc}(e,\mathbf{x})\). Let \(\ell_{cen}=\frac{x_1+x_n}{2}\) be the center of \(\mathbf{x}\).
    Assume a virtual facility is located at \(\ell_{cen}\) with entrance fee \(E\) such that \(OPT_{mc}(e,\mathbf{x})=\mathrm{vcost}(x_t,\ell_{cen},E)=\frac{L}{2}+E\). Then we have \(E=|\ell_{cen}-\ell_{mc}|+e(\ell_{mc})\) and \(E\ge e(\ell_{mc}) \ge e_{\mathrm{min}}\). Let $L=|x_n-x_1|$.
    Since \(MC(\mathbf{x},f(e,\mathbf{x}))\le L+C_1\), we obtain that
    \begin{align}
        \gamma(m_1(e,\mathbf{x}))\le \frac{L+C_1}{\frac{L}{2}+E}=2+\frac{C_1-2E}{\frac{L}{2}+E}. \label{mc_1}
    \end{align}
    Besides, due to \(C_1\le \mathrm{cost}(x_1, \ell_{mc})\le \mathrm{cost}(x_t, \ell_{mc})\), we have \(L\ge 2C_1-2E\),  and \(E\ge C_1-\frac{L}{2}\) equivalently.

    If \(r_e\le 2\), because \(C_1\le e(x_1)\le e_{\mathrm{max}}\) and  \(E\ge e_{\mathrm{min}}\), we have \(\frac{C_1}{E}\le r_e \le 2\). Then, by \Cref{mc_1}, we get \(\gamma(m_1(e,\cdot))\le 2\).

    For the case that \(r_e> 2\), we divide all agent position profiles into two subcases based on  \(L\) and \(C_1\).

    \noindent
    \textbf{Subcase 1:} \(L< C_1\).
    Considering \(E\ge C_1-\frac{L}{2}\), by \Cref{mc_1}, we get \(\gamma(m_1(e,\mathbf{x}))\le {(L+C_1)}/{(\frac{L}{2}+C_1-\frac{L}{2})}\le 2\).



    \noindent
    \textbf{Subcase 2:} \(L \ge C_1\).
    If \(C_1< 2E\), we have \(\gamma(m_1(e,\mathbf{x}))<2\) by \Cref{mc_1}.
    If \(C_1\ge 2E\), then \(L>2C_1-2E\ge C_1\). Let \(L=2C_1-2E\) in \Cref{mc_1}, and then we get
    \(            \gamma(m_1(e,\mathbf{x}))
    \le 3-2E/C_1.\)
    Since \(\frac{C_1}{E} \le r_e\), we get \(\gamma(m_1(e,\mathbf{x})) \le 3-2/r_e.\)

    When \(r_e>2\), we have \((3-2/r_e) > 2\).
    Thus we have \(\gamma(m_1(e,\mathbf{x})) \le 3-2/r_e\) if \(r_e>2\).
    Together with the case that \(r_e\le 2\), we get the desired result.

\end{proof}
\noindent
We show in the appendix that the approximation ratio in this theorem is tight for the mechanism  \(m_1(e,\cdot)\).

Procaccia et al. \cite{procaccia2009approximate} proved that
when \(e(\cdot)=0\), no deterministic strategyproof mechanism can achieve a maximum cost approximation ratio smaller than \(2\).
We extend this bound to a wider range of entrance fee functions in next theorem.

\begin{theorem} \label{thm_1f_mc_lb_d_2}
    For any given \(\alpha\ge 1\), there exists an entrance fee function \(e\) with max-min ratio \(r_e=\alpha\) such that no deterministic strategyproof mechanism \(f(e,\cdot):\mathbb{R}^n\rightarrow \mathbb{R}\) can achieve an approximation ratio for the maximum cost less than \(2\). This also implies that no deterministic strategyproof mechanism \(f(\cdot,\cdot)\!:\!\mathcal{E}(\alpha)\times \mathbb{R}^n\!\rightarrow\! \mathbb{R}\) can achieve an approximation ratio for the maximum cost less than \(2\).
\end{theorem}
\begin{proof}
    Consider the following entrance fee function  $e$ with  max-min ration $\alpha\geq 1$:
    \begin{align*}
        e(\ell) =
        \begin{cases}
              1, \text{ if } \ell = 1-\alpha \\
              \alpha, \text{ otherwise}
        \end{cases}.
    \end{align*}
    Assume that \(f(\cdot,\cdot):\mathbb{R}^n\rightarrow \mathbb{R}\) is strategyproof.
    Assume for contradiction that, for any \(e \in \mathcal{E}(\alpha)\), \(\gamma(f(e,\cdot))\le 2-\epsilon\) for a certain \(\epsilon\in (0,1)\).
    Consider the \(2\)-agent position profile \(\mathbf{x}=(0, x)\) where \(x> 0\) is a variable.
\begin{figure} [t]
      \centering
      \includegraphics[width=\myscale\linewidth]{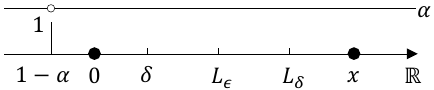}
      \caption{The Definitions of \(e, x, L_{\epsilon},\delta \text{ and } L_{\delta}\)} 
      \label{fig_mc_lb}
\end{figure}
    To ease the proof we also use \(f_x\) to represent \(f(e,(0,x))\).
    Let \(L_{\epsilon}\) be the length of \(\mathbf{x}\) such that \(\frac{L_{\epsilon}+\alpha}{\frac{L_{\epsilon}}{2} + \alpha} = 2-\epsilon,\) and then \(L_{\epsilon}=2(\frac{1}{\epsilon}-1)\alpha\).
    Thus, when \(x> L_{\epsilon}\) we have \(\frac{x+\alpha}{\frac{x}{2}+\alpha}>2-\epsilon\). Note that \(x+\alpha\) is the maximum cost when the facility is at \(0\text{ or } x\) and \(\frac{x}{2}+\alpha\) is the optimal maximum cost for \(\mathbf{x}\). Therefore,
    for any \(x>L_{\epsilon}\), to guarantee \(\gamma(f(e,\mathbf{x}))\le 2-\epsilon\), the facility must be placed in \((0, x)\). That is,  \(f_x\in (0,x)\) for any \(x>L_{\epsilon}\). The example is illustrated in Figure \ref{fig_mc_lb}.

    Next we prove \(f_{x}\in (0,L_{\epsilon}]\) for any \(x>L_{\epsilon}\).
    Suppose for contradiction that there is an \(x'>L_{\epsilon}\) such that \(f_{x'}> L_{\epsilon}\). We then consider the facility location profile \((0,f_{x'})\).  To guarantee \(\gamma(f(e,(0,f_{x'})))\le 2-\epsilon\), we have \(f((0,f_{x'}))\in (0, f_{x'})\). Therefore \(\mathrm{cost}(f_{x'},f(0,f_{x'}))>\alpha\). Then the second agent in profile \((0, f_{x'})\) can deviate to \(x'\) and then the facility will be located at \(f_{x'}\), and thus, her cost is reduced to \(\alpha\). This contradicts the strategyproofness of \(f(\cdot, \cdot)\).

    Furthermore, we prove that \(f_y = f_z\) for any distinct \(y,z > L_{\epsilon}\).
    That is, the facility is fixed in a location in \((0, L_{\epsilon})\) when \(x>L_{\epsilon}\).
    Assume w.l.o.g. that  \(y\ne z\). Suppose for contradiction that \(f_y\ne f_z\).
    If \(f_y> f_z\), then agent in \(z\) can benefit by misreporting \(y\);
    If \(f_y< f_z\), then agent in \(y\) can benefit by misreporting \(z\).
    Thus \(f_y=f_z\). Let \(\delta\in (0, L_{\epsilon}]\) be the value of \(f_{x}\) for all \(x>L_{\epsilon}\). Let \(L_{\delta}\) be the length of \(\mathbf{x}\) such that \(\frac{L_{\delta}-\delta+\alpha}{\frac{L_{\delta}}{2} + \alpha} = 2-\epsilon,\) 
    which solves to \(L_{\delta}=L_{\epsilon}+2\delta/\epsilon\). Thus we have \(\delta \le L_{\epsilon} < L_{\delta}\), as shown in Figure \ref{fig_mc_lb}.
    However, when \(x > L_{\delta}\), we have \(\gamma(f(e,\mathbf{x}))>2-\epsilon\), a contradiction.
  
\end{proof}

By Theorems \ref{thm_1f_mc_apx} and \ref{thm_1f_mc_lb_d_2}, we can conclude that for entrance fee functions with max-min ratio \(\alpha \le 2\), no strategyproof mechanism \(f(\cdot,\cdot)\!:\!\mathcal{E}(\alpha)\times \mathbb{R}^n\!\rightarrow\! \mathbb{R}\) can do better than \(m_1(\cdot,\cdot)\).
Next, when \(\alpha\ge 6\), we establish a tighter lower bound for the maximum cost approximation ratio against all strategyproof mechanisms \(f(\cdot, \cdot)\!:\!\mathcal{E}(\alpha) \times \mathbb R^n \!\rightarrow\! \mathbb R\).

\begin{theorem} \label{thm_1f_mc_d_lb_3re}
    For any \(\alpha \!\ge\! 1\), there exists an entrance fee function \(e\) with max-min ratio \(r_e=\alpha\) such that no deterministic strategyproof mechanism \(f(e,\cdot)\!:\!\mathbb{R}^n\!\rightarrow\! \mathbb{R}\) can achieve a maximum cost approximation ratio less than \(3-\frac{28}{\sqrt{r_e^{2}+20r_e-12}+r_e+10}.\) 
    This also implies that no deterministic strategyproof mechanism \(f(\cdot, \cdot)\!:\!\mathcal{E} \times \mathbb R^n \!\rightarrow\! \mathbb R\) can achieve a maximum cost approximation ratio less than $3$.
\end{theorem}
\begin{proof}
    Let \(d\) be a positive and \(D= d + 3 + 4(d + 1)^{-1}\).
    Consider the following entrance fee function
    \begin{equation*}
        e(\ell) = \left\{
        \begin{aligned}
            &d, &&\text{if } \ell \in \{-1, 1\} \\
            &D, &&\text{otherwise}
        \end{aligned}.
        \right.
    \end{equation*}
    Let $f(e,\cdot)$ be a deterministic strategyproof mechanism.
    Let $\mathbf{x}_{1} = (-\epsilon, 2)$, $\mathbf{x}_{2} = (-2, \epsilon)$ and $\mathbf{x}_{3} = (-2, 2)$ be three agent position profiles, where $\epsilon \in (0, 1)$ is a small positive number.
    We illustrate the example in Figure \ref{fig_mc_lb_d3}.
\begin{figure} [t]
      \centering
      \includegraphics[width=\myscale\linewidth]{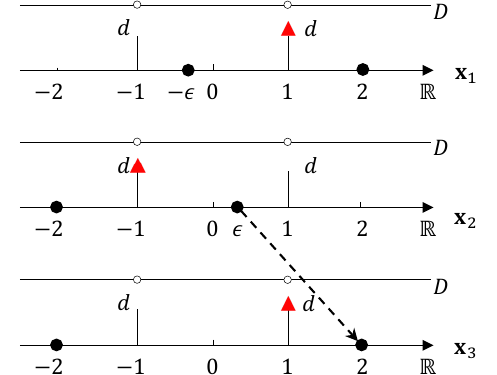}
      \caption{The definitions of \(e(\cdot),\mathbf{x}_1,\mathbf{x}_2 \textrm{ and }\mathbf{x}_3\). The red triangles denote facilities, and the dashed line represents the deviation of the agent.} 
      \label{fig_mc_lb_d3}
\end{figure}

    First we consider the profile $\mathbf{x}_{1}$, we have
    \begin{equation*}
          MC(\mathbf{x}_{1}, \ell) = \left\{
          \begin{aligned}
            &d + 3, &&\text{if } \ell = -1 \\
            &d + 1 + \epsilon, &&\text{if } \ell = 1 \\
            &D + \max\{|\ell - 2|, |\ell + \epsilon| \}, &&\text{otherwise} \\
          \end{aligned}
          \right..
    \end{equation*}
    Thus, \(OPT_{mc}(e,\mathbf{x}_1)=d+1+\epsilon\).
    Assume for contradiction that \(\gamma(f(e, \cdot)) < \frac{d + 3}{d + 1 + \epsilon}.\) 
    Thus, it holds that
    \begin{align*}
        MC(\mathbf{x}_1,f(e,\mathbf{x}_1)) &\le \gamma(f(e, \cdot))\cdot OPT_{mc}(e,\mathbf{x}_1) 
        < d+3 
        < D+ \max(|\ell - 2|, |\ell + \epsilon|).
    \end{align*}
    Then \(MC(\mathbf{x}_1,f(e,\mathbf{x}_1))\) must be \(d+1+\epsilon\). Thus, $f(e,\mathbf{x}_{1}) = 1$ and $\mathrm{cost}(-\epsilon, f(e, \mathbf{x}_{1})) = d + 1 + \epsilon$.
    By the symmetry between profiles $\mathbf{x}_{1}$ and $\mathbf{x}_{2}$, we can get $f(e, \mathbf{x}_{2}) = -1$ and $\mathrm{cost}(\epsilon, f(e, \mathbf{x}_{2})) = d + 1 + \epsilon$.

    Now consider profile $\mathbf{x}_{3}$. We get
    \begin{equation*}
          MC(\mathbf{x}_{3}, \ell) = \left\{
          \begin{aligned}
            &d + 3, &&\text{if } \ell =\pm 1 \\
            &D + \max(|\ell - 2|, |\ell + 2|), &&\text{otherwise} \\
          \end{aligned}
          \right..
    \end{equation*}
    By our assumption and the definition of \(D\), we have \(\gamma(f(e, \cdot)) < \frac{d + 3}{d + 1} = \frac{D + 2}{d + 3},\) 
    which implies that $f(e, \mathbf{x}_{3}) = \pm 1$.
    If $f(e,\mathbf{x}_{3}) = 1$, then the agent at $\epsilon$ in $\mathbf{x}_{2}$ can deviate to $2$ and decrease her cost from $d + 1 + \epsilon$ to $d + 1 - \epsilon$, a contradiction.
    If $f(e,\mathbf{x}_{3}) = -1$, then the agent at $-\epsilon$ in $\mathbf{x}_{1}$ can deviate to $-2$ and decrease her cost from $d + 1 + \epsilon$ to $d + 1 - \epsilon$, a contradiction.
    Due to the arbitrariness of $\epsilon > 0$, we know that the total cost approximation ratio is no less than $\frac{d+3}{d+1}$.
    Observing that $r_{e} = 1 + \frac{3d + 7}{d^2 + d}$, we can get the desired result after simple computation.

\end{proof}

\noindent
\textbf{Lower bound for randomized mechanisms.}
We also obtain the following lower bounds for randomized strategyproof mechanisms.

\begin{theorem} \label{thm_1f_mc_r_lb_2}
    There exists an entrance fee function \(e\) with \(r_e=+\infty\) such that no randomized strategyproof mechanism \(f(e, \cdot): \mathbb{R}^n \rightarrow \Delta(\mathbb{R})\) can achieve a maximum cost approximation ratio less than \(2\).
    Hence, no randomized strategyproof mechanism \(f(\cdot, \cdot):\mathcal{E} \times \mathbb R^n \rightarrow \Delta(\mathbb R)\) can achieve a maximum cost approximation ratio less than $2$.
\end{theorem}
\begin{proof}
    Consider the following entrance fee function
    \begin{equation*}
        e(\ell) =
        \begin{cases}
          \begin{aligned}
              &0, &&\text{if } \ell \in \{ -1, 1\} \\
              &+\infty, &&\text{otherwise} \\
            \end{aligned}
        \end{cases}.
    \end{equation*}
    Let $f(e, \cdot) \colon \mathbb{R}^n\rightarrow \Delta(\mathbb{R})$ be a randomized strategyproof mechanism.
    To achieve a bounded maximum cost approximation ratio, for any $\mathbf{x} \in \mathbb{R}^{n}$ we have \(\Pr[f(e,\mathbf{x}) = -1] + \Pr[f(e, \mathbf{x}) = 1] = 1.\) 
    Let $\mathbf{x}_{1} = (-\epsilon, 2)$, $\mathbf{x}_{2} = (-2, \epsilon)$ and $\mathbf{x}_{3} = (-2, 2)$ be three agent position profiles, where $\epsilon \in (0, 1)$ is a small positive number.
    Let $\Pr[f(e,\mathbf{x}_1) = - 1] = p$ and $\Pr[f(e,\mathbf{x}_2) = 1] = p'$.
    We illustrate the example in Figure \ref{fig_mc_lb_r2}.
    \begin{figure} [t]
        \centering
        \includegraphics[width=\myscale\linewidth]{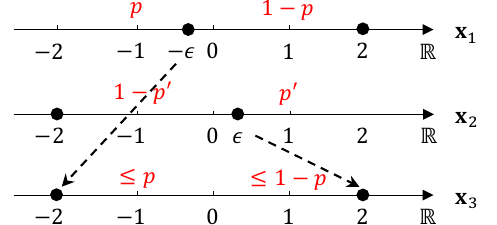}
        \caption{Example for randomized lower bound \(2\) for the maximum cost. The black dots are agents. The red letters denote the probabilities to locate the facility by the mechanism \(f(e,\cdot)\). The dashed lines represent the deviations of the agents.} 
        \label{fig_mc_lb_r2}
\end{figure}
    \(MC(\mathbf{x}_1,1)=1+\epsilon\) is the optimal maximum cost and \(MC(\mathbf{x}_1,-\epsilon)=2+\epsilon\).
    Thus the approximation ratio of $f$ for $\mathbf{x}_{1}$ is
    \begin{equation} \label{mc_bound_1}
        \gamma(f(e,\mathbf{x}_{1})) = (1-p)\cdot 1 + p\cdot \frac{3}{1 + \epsilon}.
    \end{equation}

    Now consider the case that the agent at $-\epsilon$ in $\mathbf{x}_{1}$ deviates to $- 2$.
    Since this agent prefers location $-1$ to location $1$, by the strategyproofness of $f(e,\cdot)$, we must have \(p \geq \Pr[f(e, \mathbf{x}_{3}) = -1]\).
    Then consider the case that the agent at $\epsilon$ in $\mathbf{x}_2$ deviates to $2$. Again by the strategyproofness of $f(e,\cdot)$, we must have \( p' \geq \Pr[f(e, \mathbf{x}_{3}) = 1]\).
    By $\Pr[f(e,\mathbf{x}_3)=-1]+\Pr[f(e,\mathbf{x}_3)=1]=1$, we have $ p+p'\ge 1$.
    W.l.o.g., we assume $p \ge \frac{1}{2}$.
    By \Cref{mc_bound_1}, we have \(\gamma(f(e,\cdot))\ge \gamma(f(e,\mathbf{x}_1)) \geq \frac{2}{1+\epsilon}.\) 
    By the arbitrariness of \(\epsilon>0\), we know that the maximum cost approximation ratio of any randomized strategyproof mechanism is at least \(2\).
  
\end{proof}

Using similar techniques in the proof of Theorem \ref{thm_1f_tc_r_lb_re}, we can establish a lower bound against all the entrance fee functions of a given max-min ratio in the next theorem. The proof is deferred to the appendix.

\begin{theorem} \label{thm_1f_mc_r_lb_re}
    For any \(\alpha \!\ge\! 1\), there exists an entrance fee function \(e\) with max-min ratio \(r_e=\alpha\) such that no randomized strategyproof mechanism \(f(e,\cdot)\!:\!\mathbb{R}^n\!\rightarrow\! \Delta(\mathbb{R})\) can achieve a maximum cost approximation ratio less than \(\frac{24\sqrt{2}+27}{47} - \frac{4}{(60\sqrt{2} + 97)r_{e} - (54\sqrt{2} + 92)}\).
\end{theorem}

\section{Mechanisms with Two Facilities} \label{sect_2f}
In this section, we investigate two-facility games. 
A mechanism is now a function \(f(\cdot,\cdot):\mathcal{E}\times \mathbb{R}^n\rightarrow \mathbb{R}^2\), that given \(e\) and $\mathbf{x}$, returns
the facility location profile \(\bm{\ell}\in \mathbb{R}^2\). Recall for \(\bm{\ell}=(\ell_1,\ell_2)\), the agent \(i\) selects the facility with the smallest sum of travel and entrance fees. Thus, \(\mathrm{cost}(x_i,\bm{\ell})=\min\bigl(|\ell_1 - x_i|+ e(\ell_1), |\ell_2 - x_i|+ e(\ell_2)\bigr)\).

\subsection{Utilitarian Version with Two Facilities}

Let \(i,j\in N\) with \(i\le j\).
Denote by \(m_{i,j}(\cdot,\cdot):\mathbb{R}^n\rightarrow \mathbb{R}^2\) the mechanism that puts one facility in \(x_i^*\) and the other facility in \(x_j^*\) for any input $e$ and $\mathbf{x}$.
Below we show that \(m_{i,j}(\cdot,\cdot)\) is group strategyproof.

\begin{proposition} \label{prop_2f_mij_gsp}
    The mechanism \(m_{i,j}(\cdot,\cdot)\) is group strategyproof.
\end{proposition}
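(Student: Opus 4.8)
The plan is to follow the one-facility proof of Theorem~\ref{Thm:GSP}: for any coalition deviation that changes the output, exhibit a single coalition member whose cost does not decrease, using only Lemma~\ref{L-triple} (an agent's cost is monotone in a facility's location once the agent is fixed to one side of it). Recall first that $x_t^*$ depends only on $x_t$ and $e$ (Lemma~\ref{lm_1f_lmed}), so on any reported profile the output of $m_{i,j}$ is $(x_i^*,x_j^*)$, where $x_i,x_j$ are the $i$-th and $j$-th order statistics of that profile, and $i\le j$ forces $x_i^*\le x_j^*$ by the Monotonicity Lemma. Let the coalition $S$ deviate from the truthful profile $\bm{x}$ (with $x_1\le\cdots\le x_n$) to $\bm{x}'=(\bm{x}_S',\bm{x}_{-S})$, and write $x_i',x_j'$ for the $i$-th and $j$-th order statistics of $\bm{x}'$. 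If $(x_i^{\prime *},x_j^{\prime *})=(x_i^*,x_j^*)$ all agents are indifferent; otherwise $x_i'\ne x_i$ or $x_j'\ne x_j$. Since the statement concerns every $m_{i,j}$ with $i\le j$, it suffices to handle $x_i'>x_i$ and $x_i'<x_i$: if instead only the $j$-th order statistic changes, the map $\ell\mapsto-\ell$ (which sends the family $\{m_{i,j}\}$ to itself, interchanging the smaller- and larger-index order statistics) reduces that to a case already treated.

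Case $x_i'>x_i$. The number of reported positions $\le x_i$ drops from at least $i$ in $\bm{x}$ to at most $i-1$ in $\bm{x}'$, so (as only members of $S$ change their reports) some $t\in S$ satisfies $x_t\le x_i<x_t'$. By $x_t\le x_i\le x_j$ and Lemma~\ref{L-triple}, $\mathrm{cost}(x_t,f(e,\bm{x}))=\min\{\mathrm{cost}(x_t,x_i^*),\mathrm{cost}(x_t,x_j^*)\}=\mathrm{cost}(x_t,x_i^*)$. In $\bm{x}'$ we have $x_t\le x_i<x_i'\le x_j'$, so Lemma~\ref{L-triple} applied to the triples $(x_t,x_i,x_i')$ and $(x_t,x_i,x_j')$ gives $\mathrm{cost}(x_t,x_i^*)\le\mathrm{cost}(x_t,x_i^{\prime *})$ and $\mathrm{cost}(x_t,x_i^*)\le\mathrm{cost}(x_t,x_j^{\prime *})$; hence $\mathrm{cost}(x_t,f(e,\bm{x}'))\ge\mathrm{cost}(x_t,x_i^*)=\mathrm{cost}(x_t,f(e,\bm{x}))$, so $t$ is the required witness.

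Case $x_i'<x_i$. If also $x_j'<x_j$ we are in the symmetric situation that the larger-index statistic decreased, so assume $x_j'\ge x_j$. Now the number of reported positions $<x_i$ rises from at most $i-1$ to at least $i$, so some $t\in S$ satisfies $x_t\ge x_i>x_t'$. If $x_t\le x_j$, the previous argument mirrors: $\mathrm{cost}(x_t,f(e,\bm{x}))=\min\{\mathrm{cost}(x_t,x_i^*),\mathrm{cost}(x_t,x_j^*)\}$, while $x_i'<x_i\le x_t$ gives $\mathrm{cost}(x_t,x_i^{\prime *})\ge\mathrm{cost}(x_t,x_i^*)$ and $x_t\le x_j\le x_j'$ gives $\mathrm{cost}(x_t,x_j^{\prime *})\ge\mathrm{cost}(x_t,x_j^*)$, whence $\mathrm{cost}(x_t,f(e,\bm{x}'))\ge\mathrm{cost}(x_t,f(e,\bm{x}))$.

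The hard part — the step I expect to be the main obstacle — is the remaining configuration of this last case, in which every member of $S$ reporting below $x_i$ while truly lying at or above $x_i$ in fact lies strictly to the right of $x_j$. Such an agent is served at $\bm{x}$ by the right facility $x_j^*$, and since the left facility only moves left ($x_i^{\prime *}\le x_i^*$) and the right only moves right ($x_j^{\prime *}\ge x_j^*$), it is neither harmed nor helped; a different witness is needed. Two routes I would try: (i) such an agent is also a ``down-crosser'' of $x_j$, so $x_j'\ge x_j$ forces, by the same counting applied at $x_j$, a compensating ``up-crosser'' of $x_j$ inside $S$ (or a block of reports exactly at $x_j$, handled by perturbation and tie-breaking), and an up-crosser of $x_j$ lying at or above $x_i$ is then a valid witness by the argument above; (ii) more robustly, let $c$ be the critical position between $x_i^*$ and $x_j^*$ (so $x_i\le c\le x_j$ by Lemma~\ref{L-triple}); then $m_{i,j}$ restricted to the agents with true position $\le c$ is exactly the one-facility mechanism $m_i$, and restricted to the rest is a one-facility mechanism $m_{j-p}$, so one runs the proof of Theorem~\ref{Thm:GSP} on whichever half hosts a profitably deviating member. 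The delicate point either way is that a coalition may move members across $c$, coupling the two one-facility subproblems — exactly what the generic cases above avoid and what a complete proof must control.
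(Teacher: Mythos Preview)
Your approach---find a coalition member who crosses one of the two thresholds $x_i,x_j$ and show via Lemma~\ref{L-triple} that she cannot gain---is exactly the paper's. The paper's own argument is a three-sentence sketch: it lists the three crossing types (left agent crosses $x_i$ rightward; middle agent leaves $[x_i,x_j]$; right agent crosses $x_j$ leftward) and asserts that ``by the Monotonicity Lemma'' the crosser's cost does not decrease. Read literally this only handles single deviations; for a coalition the type-(3) crosser you worry about can indeed benefit when other members simultaneously push $x_j'$ to the right, so the configuration you flag is a genuine gap that the paper simply does not confront.

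The gap closes with one observation that makes your route (i) work without chasing crossers. In the sub-case $x_i'\le x_i$ and $x_j'\ge x_j$ (with at least one strict), first note that \emph{any} $s\in S$ with $x_i\le x_s\le x_j$ is already a witness, since both facilities moved weakly away from $x_s$ and Lemma~\ref{L-triple} applies to each. If no such $s$ exists, then every agent with true position in $[x_i,x_j]$---and there are at least $j-i+1$ of them, namely those indexed $i,\dots,j$---is outside $S$ and reports truthfully in $[x_i,x_j]$. Now count: $x_i'<x_i$ requires at least $i$ reports strictly below $x_i$, and $x_j'>x_j$ requires at least $n-j+1$ reports strictly above $x_j$; these two sets of reports are disjoint and all come from the at most $(i-1)+(n-j)$ agents outside $[x_i,x_j]$, a contradiction. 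Hence one inequality is an equality, say $x_j'=x_j$. The same count then forces some $r\in S$ with $x_r>x_j$ to report below $x_i$, and $r$ is indifferent: the right facility is unchanged and the left only moved farther from $r$. This disposes of your hard case directly, without the critical-point decomposition of route (ii) and without perturbation for ties.
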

\begin{proof}
    Obviously, agents \(i\) and \(j\) have no incentive to misreport. Since \(x_i\le x_i\), to change the output of the mechanism, there are only three cases to consider: (1) one agent \(k\) on the left side of \(x_i\) misreports a location \(x_k'> x_i\); (2) one agent \(k\) between \(x_i\) and \(x_j\) misreports a location \(x_k'< x_i\) or \(x_k > x_j\); (3) one agent \(k\) on the right side of \(x_j\) misreports a location \(x_k'< x_j\). By \Cref{lm_xs_monotone} and Lemma \ref{lm_3_monotone}, it is easy to show that the cost of agent \(k\) will not decrease in each of the three cases.
\end{proof}

When \(e(\cdot)=0\), \citet{fotakis2014power} prove that any deterministic strategyproof mechanism with bounded approximation ratio (for either of the objectives) must put facilities on \(x_1\) and \(x_n\).
Thus we only consider the mechanism \(m_{1,n}(\cdot,\cdot)\). The following result shows that the total cost approximation ratio of \(m_{1,n}(\cdot,\cdot)\) matches the lower bound of \(n-2\) in the classical model.

\begin{proposition} \label{prop_2f_tc_apx}
    The approximation ratio of \(m_{1,n}(\cdot,\cdot)\) for the total cost is at most \(n-1\).
\end{proposition}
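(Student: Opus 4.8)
The plan is to follow the template of the classical $n-2$ bound of \cite{procaccia2009approximate}, carrying the entrance‑fee contributions $C_1$ and $C_n$ through every step. First I would record the basic bounds on the mechanism. Since $\mathrm{cost}(x_1,x_1^*)=C_1$, the triangle inequality gives, for every agent $i$,
\[
\mathrm{cost}(x_i,x_1^*)=|x_i-x_1^*|+e(x_1^*)\le |x_i-x_1|+\bigl(|x_1-x_1^*|+e(x_1^*)\bigr)=|x_i-x_1|+C_1,
\]
and symmetrically $\mathrm{cost}(x_i,x_n^*)\le |x_i-x_n|+C_n$. Hence $\mathrm{cost}\bigl(x_i,(x_1^*,x_n^*)\bigr)\le \min\{|x_i-x_1|+C_1,\ |x_i-x_n|+C_n\}$, while agents $1$ and $n$ pay exactly $C_1$ and $C_n$ (each is already that agent's minimum single‑facility cost). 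For the optimum, writing $\bm{\ell}^{opt}=(\ell_1,\ell_2)$ with $\ell_1\le\ell_2$, the definitions of $C_1,C_n$ give $C_1\le\mathrm{cost}(x_1,\bm{\ell}^{opt})$ and $C_n\le\mathrm{cost}(x_n,\bm{\ell}^{opt})$, so $C_1+C_n\le OPT_{tc}(e,\bm{x})$.

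The structural step uses Lemma~\ref{lm_continuous}: in $\bm{\ell}^{opt}$ the agents served by $\ell_1$ form a prefix $\{1,\dots,k\}$ and those served by $\ell_2$ a suffix $\{k+1,\dots,n\}$; moreover $\ell_1$ is an optimal single‑facility location for the block $\{1,\dots,k\}$ and $\ell_2$ for $\{k+1,\dots,n\}$, so Lemma~\ref{lm_ltc_bounded} pins them down as $\ell_1\in[x_1^*,x_k^*]$ and $\ell_2\in[x_{k+1}^*,x_n^*]$. Assuming first that $1\le k\le n-1$ (both facilities genuinely used), I would route every agent of the left block to $x_1^*$ and every agent of the right block to $x_n^*$, obtaining
\[
TC(\bm{x},(x_1^*,x_n^*))\le \sum_{i=1}^{k}\mathrm{cost}(x_i,x_1^*)+\sum_{i=k+1}^{n}\mathrm{cost}(x_i,x_n^*),
\]
and then bound each sum against the block's optimal cost. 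Using the per‑agent bound above together with the fact that the block optimum dominates the block diameter plus twice the facility's entrance fee — e.g.\ $OPT_L:=\sum_{i\le k}\mathrm{cost}(x_i,\ell_1)\ge \mathrm{cost}(x_1,\ell_1)+\mathrm{cost}(x_k,\ell_1)\ge (x_k-x_1)+2e(\ell_1)$, and $OPT_L\ge C_1$ — I would prove $\sum_{i\le k}\mathrm{cost}(x_i,x_1^*)\le (k-1)\,OPT_L$ and symmetrically $\sum_{i>k}\mathrm{cost}(x_i,x_n^*)\le (n-k-1)\,OPT_R$. Since $OPT_{tc}(e,\bm{x})=OPT_L+OPT_R$ and $\max\{k-1,\ n-k-1\}\le n-2$ whenever $1\le k\le n-1$, this yields the claim. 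The degenerate case $k\in\{0,n\}$ — where the optimum is effectively a one‑facility solution — I would handle separately, observing that then the optimal location may be taken equal to $x_1^*$ (resp.\ $x_n^*$), so the mechanism's pair of facilities already contains an optimal one.

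The main obstacle is the bookkeeping in $\sum_{i\le k}\mathrm{cost}(x_i,x_1^*)\le (k-1)\,OPT_L$: the inequality $\mathrm{cost}(x_i,x_1^*)\le |x_i-x_1|+C_1$ is lossy precisely when the optimal block‑facility $\ell_1$ lies far from the leftmost agent $x_1$ (so that $C_1$ may exceed $e(\ell_1)$), and in that regime one cannot simply compare $C_1$ with $e(\ell_1)$. Here I expect to need the location constraint $\ell_1\in[x_1^*,x_k^*]$ from Lemma~\ref{lm_ltc_bounded}, together with the monotonicity and domination properties, to argue that $x_1^*$ is close enough to $\ell_1$ (indeed $x_1^*$ dominates the relevant interval) that the gap between placing the block's facility at $x_1^*$ rather than at $\ell_1$ is absorbed by the $(k-1)$ slack. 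Separating the $k\in\{0,n\}$ case cleanly is the other delicate point, since the $(k-1)$‑type accounting would otherwise give only $n-1$.
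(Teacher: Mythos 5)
Your overall strategy (the classical $n-2$ charging over the optimum's two blocks) is genuinely different from the paper's, but the step you yourself flag as the main obstacle is not merely delicate — it is false. The inequality $\sum_{i\le k}\mathrm{cost}(x_i,x_1^*)\le (k-1)\,OPT_L$ fails already for $k=1$ (the right-hand side is $0$ while the left-hand side is $C_1>0$ whenever fees are positive), and it fails for $k=2$ with both facilities genuinely used: take the left block to be agents at $0$ and $10$ with $e(0)=e(10)=1$, $e(5)=0$ and $e$ huge elsewhere; then $x_1^*=0$, $\sum_{i\le 2}\mathrm{cost}(x_i,x_1^*)=1+11=12$, while $OPT_L=10$ (block facility at $5$). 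The domination machinery cannot absorb this: Lemma 5 only says $x_1^*$ dominates locations in $I[x_1]$, i.e.\ between $x_1$ and $x_1^*$, whereas the block optimum may sit at a cheap-fee point strictly inside the block — exactly what your bound $\mathrm{cost}(x_i,x_1^*)\le|x_i-x_1|+C_1$ cannot see. Indeed, adding a third agent far to the right (with fee $1$ at its position) turns this block into a full instance on which $m_{1,n}$ pays $13$ against an optimum of $11$, i.e.\ ratio $>1=n-2$ for $n=3$; so the classical per-block accounting provably does not transfer to entrance fees, and no local patch of your charging step can deliver the stated bound for all $n$. Your degenerate case $k\in\{0,n\}$ is also not handled correctly: there the two-facility optimum coincides with the one-facility optimum $\ell_{tc}$, which by Lemma \ref{lm_ltc_bounded} lies in $[x_1^*,x_n^*]$ but need not equal $x_1^*$ or $x_n^*$, so the mechanism's pair does not ``already contain an optimal one''; you would have to fall back on the constant-factor one-facility analysis instead, which does not fit the $(k-1)$ accounting.

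For comparison, the paper does not charge block by block at all: it fixes the optimal partition into a left and right block and performs a chain of instance modifications — lowering the fee at the two optimal sites to $e_{min}$, collapsing each block onto its facility, then moving the interior agents to the midpoint $(x_1+x_n)/2$ with fee $e_{min}$ there and $e_{max}$ elsewhere — asserting that each step can only increase the ratio, and finally evaluates the ratio on that canonical two-point profile. So the intended argument is a worst-case-instance reduction, not a charging argument. Note, however, that the small-$n$ / large-$r_e$ example above also puts pressure on that final computation (the displayed fraction is only at most $n-2$ under additional conditions on $e_{min},e_{max},n$), so this proposition is a place where the bound requires more care than either your sketch or the paper's proof supplies.
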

\begin{proof}
    For any entrance fee function \(e\) and any agent position profile \(\mathbf{x}\),
    we define an entrance fee function \(e'\) and an agent position profile \(\mathbf{x}'\) as follows:
    \begin{itemize}
        \item \(\mathbf{x}'\): there are \(n-2\) agents allocated at position \((x_1+x_n)/2\). The other two agents are at positions \(x_1\) and \(x_n\), respectively.
        \item \(e'\): \(e'(x)=e_{\mathrm{min}}\) if \(x=(x_1+x_n)/2\), and \(e'(x)=e_{\mathrm{max}}\) if \(x\ne (x_1+x_n)/2\).
    \end{itemize}

    Next, we show that \(\gamma(m_{1,n}(e,\mathbf{x})) \le \gamma(m_{1,n}(e',\mathbf{x}'))\). Let \((\ell_1, \ell_2)\) be the optimal facility location profile for the total cost. Assume that \(\ell_1\le \ell_2\). Let \(\mathcal{L}\) be the set of agents who select \(\ell_1\) and \(\mathcal{R}\) be the set of agents who select \(\ell_2\). Then,  by Lemma 8 we have \(\mathcal{L}\) and \(\mathcal{R}\) are continuous. Thus, \(e'\) and \(\mathbf{x}'\) can be constructed from \(e\) and \(\mathbf{x}\) in the following steps:
    \begin{enumerate}
        \item Set the entrance fee at \(\ell_1\) and \(\ell_2\) as \(e_{\mathrm{min}}\);
        \item Move agents in \(\mathcal{L}\setminus \{1\}\) to \(\ell_1\) and agents in \(\mathcal{R} \setminus \{1\}\) to \(\ell_2\);
        \item Set the entrance fee at \(\frac{x_1+x_n}{2}\) as \(e_{\mathrm{min}}\) and the entrance fees at \(x_1^*\) and \(x_n^*\) as \(e_{\mathrm{max}}\), and we get \(e'(\cdot)\).
        \item Move agents at \(\ell_1\) to  \(\frac{x_1+x_n}{2}\) and move agents at \(\ell_2\) to \(\frac{x_1+x_n}{2}\), we get \(\mathbf{x}'\).
    \end{enumerate}
    It is easy to see that in each above step, the approximation ratio \(\gamma(m_{1,n}(e,\mathbf{x}))\) will not decrease. Thus, when \(L(\mathbf{x}')=|x_n-x_1|\) is large enough, the optimal total cost for \(\mathbf{x}'\) and \(e'\) is \(L(\mathbf{x}')/2 + (n-1)e_{\mathrm{min}}+ e_{\mathrm{max}}\). \(TC(\mathbf{x}', m_{1,n}(e,\mathbf{x}'))\) is maximized when \(e_{max}<{L(\mathbf{x}')}/{2}+e_{min}\) and \(m_{1,n}(e,\mathbf{x}')=(x_1,x_n)\). Thus, we have
    \begin{align*}
        \gamma(m_{1,n}(e,\mathbf{x})) 
        &\le \frac{TC(\mathbf{x}', m_{1,n}(e,\mathbf{x}'))}{\frac{L(\mathbf{x}')}{2} + (n-1)e_{\mathrm{min}}+ e_{\mathrm{max}}} 
        = \frac{(n-2)\frac{L(\mathbf{x}')}{2}+n\cdot e_{\mathrm{max}}}{\frac{L(\mathbf{x}')}{2}+(n-1)e_{\mathrm{min}}+e_{\mathrm{max}}} \\
        &=\frac{(n-1)\frac{L(\mathbf{x}')}{2}+(n-1)\cdot e_{\mathrm{max}}+e_{\mathrm{max}}-\frac{L(\mathbf{x}')}{2}}{\frac{L(\mathbf{x}')}{2}+(n-1)e_{\mathrm{min}}+e_{\mathrm{max}}}\\
        &\le \frac{(n-1)\frac{L(\mathbf{x}')}{2}+(n-1)\cdot e_{\mathrm{max}}+e_{\mathrm{min}}}{\frac{L(\mathbf{x}')}{2}+(n-1)e_{\mathrm{min}}+e_{\mathrm{max}}} 
        \le n-1.
    \end{align*}

\end{proof}

\noindent
\textbf{Lower bounds for deterministic mechanisms.} The following proposition obtains a similar lower bound against all deterministic strategyproof mechanisms as in Theorem \ref{thm_1f_tc_d_lb_3re}. The basic idea of the proof is to duplicate the agent profiles in the proof of Theorem \ref{thm_1f_tc_d_lb_3re} at locations far enough from the original profiles and then apply the proof of Theorem \ref{thm_1f_tc_d_lb_3re} in a slightly modified way.

\begin{proposition} \label{prop_2f_tc_d_lb_3re}
    For any given \(\alpha> 1\), there exists an entrance fee function \(e\) with max-min ratio \(r_e=\alpha\) such that no deterministic strategyproof mechanism \(f(e,\cdot):\mathbb{R}^n\rightarrow \mathbb{R}^2\) can achieve an approximation ratio less than \(3 - \frac{24}{r_{e} + 7 + \sqrt{r_{e}^{2} + 14r_{e} + 1}}\) for the total cost.
    Hence, no deterministic strategyproof mechanism \(f(\cdot, \cdot):\mathcal{E} \times \mathbb R^n \rightarrow \mathbb R^2\) can achieve a total cost approximation ratio less than $3$.
\end{proposition}
\begin{proof}
    Let \(d>0\) and \(D = d + 2 + (2d+1)^{-1}\).
    Consider the following entrance fee function
    \begin{equation*}
            e(\ell) = \left\{
            \begin{aligned}
                &d, &&\text{if } \ell \in \{-1-L, 1-L,-1, 1\} \\
                &D, &&\text{otherwise}
            \end{aligned}
            \right.,
    \end{equation*}
    where \(L=10(D+1)\) is a large real number.
    Let $f(e,\cdot)\colon \mathbb{R}^n\rightarrow \mathbb{R}^2$ be a deterministic strategyproof mechanism.
    Let $\mathbf{x}_{1} = (-\epsilon-L, 1-L, -1, \epsilon)$, $\mathbf{x}_{2} = (-\epsilon-L, 1-L, -\epsilon, 1)$, and $\mathbf{x}_{3} = (-\epsilon-L, 1-L, -1, 1)$ be three agent position profiles where $\epsilon$ is a small positive. We illustrate this example in Figure \ref{fig_2f_tc_lb_d3}. Denote by \((\ell_1, \ell_2)\) the location profile of the two facilities and assume w.l.o.g. that \(\ell_1\le \ell_2\).

	\begin{figure} [t]
		\centering
		\includegraphics[width=\myscale\linewidth]{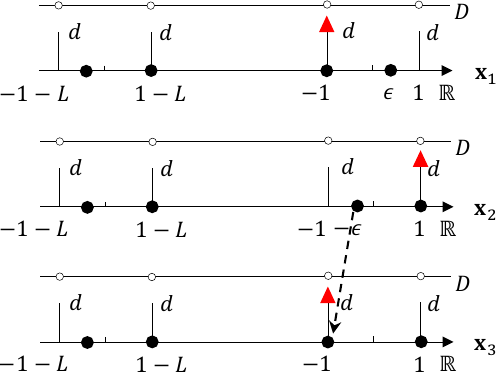}
		\caption{The definitions of \(e(\cdot),\mathbf{x}_1,\mathbf{x}_2 \textrm{ and } \mathbf{x}_3\). The black dots are agents. The red triangles denote facilities, and the dashed line represents the deviation of the agent.} 
		\label{fig_2f_tc_lb_d3}
	\end{figure}

    First, we consider profile $\mathbf{x}_{1}$. We have
    \begin{equation} \label{2f_tc_x1}
          TC(\mathbf{x}_{1}, (\ell_1,\ell_2)) \left\{
          \begin{aligned}
            &= 4d+2+2\epsilon, &&\text{if } (\ell_1,\ell_2) = (1-L, -1) \\
            &= 4d+4, &&\text{if } (\ell_1,\ell_2)\in \{(-1-L,-1),(1-L,  1)\} \\
            &= 4d+6-2\epsilon, &&\text{if } (\ell_1,\ell_2) = (-1-L, 1)\\
            &\ge 4D+2+2\epsilon, &&\text{otherwise} \\
          \end{aligned}
          \right.,
    \end{equation}
    where we have \(4D+2+2\epsilon> 4d +10+2\epsilon\) by our definition of \(D\).
    Thus, the optimal total cost for \(\mathbf{x}_1\) is \(4d+2+2\epsilon\), achieving by \((\ell_1, \ell_2)=(1-L,-1)\).
    Assume, for contradiction, that
    \begin{align} \label{2f_tc_asp_apx}
        \gamma(f(e, \cdot)) < \frac{4d + 6 - 2\epsilon}{4d + 2 + 2\epsilon}
        = \frac{2d+3-\epsilon}{2d+1+\epsilon}
    \end{align}
    Together with \Cref{2f_tc_x1}, we know that \(TC(\mathbf{x}_1,f(e,\mathbf{x}_1))\) must be \(4d+2+2\epsilon\) or \(4d+4\). Then it holds that $f(e,\mathbf{x}_{1})$ locates one of the two facilities at \(1-L\) or \(-1\). Assume, w.l.o.g., that \(-1\in f(e,\mathbf{x}_{1})\).
    Similarly, by our assumption \Cref{2f_tc_asp_apx}, we get $f(e, \mathbf{x}_{2})$ locates one of the two facilities at \(1-L\) or \(1\). Against, w.l.o.g., we assume that \(1\in f(e, \mathbf{x}_{2})\).

    Now consider profile $\mathbf{x}_{3}$.
    We thus have
    \begin{equation} \label{2f_tc_x3}
        TC(\mathbf{x}_{3}, (\ell_1,\ell_2)) \left\{
        \begin{aligned}
          &= 4d+3+\epsilon, &&\text{if } \ell_1=1-L, \ell_2\in \{-1,1\}\\
          &= 4d+5-\epsilon, &&\text{if } \ell_1=-1-L, \ell_2\in \{-1,1\}\\
          &\ge 2(d+D)+3+\epsilon, &&\text{if } \ell_1=1-L, \ell_2\notin \{-1,1\}\\
          &\ge 2(d+D)+5-\epsilon, &&\text{if } \ell_1=-1-L, \ell_2\notin \{-1,1\} \\
          &\ge 4D+3+\epsilon, &&\text{otherwise}\\
        \end{aligned}
        \right..
    \end{equation}
    Thus, the optimal total cost for \(\mathbf{x}_3\) is \(OPT_{tc}(e,\mathbf{x}_3)=4d+3+\epsilon\) when the facilities are located at \((1-L,-1)\) or \((1-L,1)\).
    By our assumption \Cref{2f_tc_asp_apx} and the definition of \(D\), we have \(\gamma(f(e, x_{3})) < \frac{2d + 3}{2d + 1}= \frac{2(d+D)+3}{4d+3}.\) 
    Together with \Cref{2f_tc_x3}, we know that \(TC(\mathbf{x}_3,f(e,\mathbf{x}_1))\) must be \(4d+3+\epsilon\) or \(4d+5-\epsilon\). Then it holds that $f(e,\mathbf{x}_{1})$ locates one of the two facilities at \(-1\) or \(1\). If $-1\in f(e,\mathbf{x}_{3})$, then the agent at $-\epsilon$ in $\mathbf{x}_{2}$ can deviate to $-1$ and decrease her cost from $d + 1 + \epsilon$ to $d + 1 - \epsilon$, a contradiction.
    If $1\in f(e,\mathbf{x}_{3})$, then the agent at $\epsilon$ in $\mathbf{x}_{1}$ can deviate to $1$ and decrease her cost from $d + 1 + \epsilon$ to $d + 1 - \epsilon$, again, a contradiction.
    Due to the arbitrariness of $\epsilon > 0$,
    we have \(\gamma(f(e, \cdot))\ge \frac{2d + 3}{2d + 1}\).
    Observing that $r_{e} = 1 + \frac{4d + 3}{2d^2 + d}$, we can get the desired result after simple computation.

\end{proof}

Note that \citet{fotakis2014power} has proved that when \(e(\cdot)=0\) (or \(r_e=1\)), no deterministic strategyproof mechanisms for two-facility games can achieve an approximation ratio smaller than \(n-2\) for the total cost.
However, to the best of our knowledge, this lower bound cannot be trivially generalized to the case where \(r_e>1\).
In this sense, the lower bound results for \(r_e>1\) in the above proposition can be regarded as complements to the low bound by \citet{fotakis2014power}.

\noindent
\smallskip
\textbf{Lower bounds for randomized mechanisms.} In the classical model where \(r_e=1\), the only known non-trivial (greater than \(1\)) approximation ratio low bound against all randomized strategyproof mechanisms for the total cost is \(1.045\) by \citet{lu2009tighter}.
This lower bound immediately gives the same lower bound of \(1.045\) for any randomized strategyproof mechanism  \(f(\cdot, \cdot):\mathcal{E} \times \mathbb R^n \rightarrow \Delta(\mathbb R^2)\).
In the following proposition, we show that for \(r_e=+\infty\), we can achieve a low bound of \(2\) by extending the proof of Theorem \ref{thm_1f_tc_r_lb_2} to two-facility game. Thus we get a better lower bound against any strategyproof randomized mechanism \(f(\cdot, \cdot):\mathcal{E} \times \mathbb R^n \rightarrow \Delta(\mathbb R^2)\).

\begin{proposition} \label{prop_2f_tc_r_lb_2}
    There exists an entrance fee function \(e\) with \(r_e=+\infty\) such that no randomized strategyproof mechanism \(f(e, \cdot): \mathbb{R}^n\rightarrow \Delta(\mathbb{R}^2)\) can achieve a total cost approximation ratio less than \(2\). Hence, no randomized strategyproof mechanism \(f(\cdot, \cdot):\mathcal{E} \times \mathbb R^n \rightarrow \Delta(\mathbb R^2)\) can achieve a total cost approximation ratio less than $2$.
\end{proposition}

\begin{proof}
    We use the same agent position profiles \(\mathbf{x}_1\), \(\mathbf{x}_2\) and \(\mathbf{x}_3\) as in the proof of Theorem \ref{thm_1f_tc_r_lb_2} for agents \(1\) and \(2\), and  add one more agent (called agent \(0\)) at \(10\).
    We use the same entrance fee function as that in the proof of Theorem \ref{thm_1f_tc_r_lb_2} except that we set the entrance fee at the location of agent \(0\) to \(0\).
    Then to get a total cost approximation ratio less than \(2\) in all three agent position profiles, one of the two facilities must be put at the position of agent \(0\) with probability \(1\), which would incur a cost of \(0\). Then the same arguments in the proof of Theorem \ref{thm_1f_tc_r_lb_2} can be applied to the other facility and agents \(1\) and \(2\) for each agent position profile.

\end{proof}

\subsection{Egalitarian Version with Two Facilities}

Our mechanism for the maximum cost is also \(m_{1,n}(\cdot,\cdot)\). We show that the approximation ratio of \(m_{1,n}(\cdot,\cdot)\) for the maximum cost is the same as \(m_{1}(\cdot,\cdot)\).

\begin{proposition} \label{prop_2f_mc_apx}
    For each entrance fee function $e$ with max-min ratio \(r_e\),
    the approximation ratio of \(m_{1,n}(e,\cdot)\) \ for the maximum cost is
    \begin{align*}
        \gamma(m_{1,n}(e,\cdot)) \le
        \begin{cases}
            2, &\text{ if } r_e \le 2 \\
            3-\frac{2}{r_e}, &\text{ if } r_e > 2 \\
        \end{cases}.
    \end{align*}
    Hence, the approximation ratio of \(m_{1,n}(\cdot,\cdot)\) \  for the maximum cost is at most \ \(3\).
\end{proposition}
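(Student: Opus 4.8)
The plan is to reduce the two-facility analysis to the one-facility bound of Theorem \ref{thm_1f_mc_apx} by cutting the agents along the blocks induced by an optimal two-facility solution. Write $OPT=OPT_{mc}(e,\bm x)$ for the optimal two-facility maximum cost and let $(\ell_1,\ell_2)$ be an optimal location profile. By Lemma \ref{lm_continuous} the clientele of each facility is a contiguous block of agents, so the $n$ agents split into at most two contiguous blocks whose union is $N$. If there is effectively only one block (one facility is unused, or both facilities coincide), then the instance is a one-facility instance in disguise: since every agent under $m_{1,n}$ picks her cheaper facility we have $MC(\bm x,(x_1^*,x_n^*))\le MC(\bm x,x_1^*)$, and $MC(\bm x,x_1^*)\le \rho(r_e)\cdot OPT_{mc}^{1\text{-fac}}(e,\bm x)\le \rho(r_e)\cdot OPT$ by Theorem \ref{thm_1f_mc_apx}, where $\rho(r_e)$ denotes the one-facility bound ($2$ if $r_e\le 2$, and $3-3/(r_e+1)$ if $r_e>2$). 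Hence assume there are exactly two nonempty blocks; being contiguous, one of them, call it $B_1$, contains the leftmost agent $1$ and the other, $B_2$, contains the rightmost agent $n$.

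Next comes the core step: bound $m_{1,n}$ on each block by the one-facility result. In the optimal solution the agents of $B_1$ are all served by a single one of the two facilities, so $OPT_{mc}(e,\bm x_{B_1})\le \max_{i\in B_1}\mathrm{cost}(x_i,\cdot)\le OPT$, where $\bm x_{B_1}$ is the restriction of $\bm x$ to $B_1$ and the entrance fee function is still $e$ (so its max-min ratio is still $r_e$). The mechanism $m_1$ run on $\bm x_{B_1}$ places its facility at the optimal location for the leftmost agent of $B_1$, which is agent $1$; that location is $x_1^*=\argmin_{\ell}(|x_1-\ell|+e(\ell))$, exactly the location $m_{1,n}$ uses on the full instance. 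Theorem \ref{thm_1f_mc_apx} therefore gives $\max_{i\in B_1}\mathrm{cost}(x_i,x_1^*)\le \rho(r_e)\cdot OPT_{mc}(e,\bm x_{B_1})\le \rho(r_e)\cdot OPT$. By the mirror-image version of Theorem \ref{thm_1f_mc_apx}, obtained from the reflection $\ell\mapsto-\ell$ (which preserves $r_e$) and hence valid for the mechanism $m_n$ that places the facility at $x_n^*$, the same argument applied to $B_2$ yields $\max_{i\in B_2}\mathrm{cost}(x_i,x_n^*)\le \rho(r_e)\cdot OPT$.

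Finally I would combine the two estimates. Every agent $i$ lies in $B_1$ or $B_2$, and under $m_{1,n}$ she selects the cheaper of the facilities at $x_1^*$ and $x_n^*$, so $\mathrm{cost}(x_i,(x_1^*,x_n^*))\le \mathrm{cost}(x_i,x_1^*)\le \rho(r_e)\cdot OPT$ when $i\in B_1$, and $\le \mathrm{cost}(x_i,x_n^*)\le \rho(r_e)\cdot OPT$ when $i\in B_2$. Taking the maximum over $i$ gives $MC(\bm x,(x_1^*,x_n^*))\le \rho(r_e)\cdot OPT$, which is the claimed bound; the stated corollary ($\le 3$) follows because $\rho(r_e)<3$ for all $r_e$. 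The main obstacle here is not any hard estimate but the bookkeeping: correctly establishing the block decomposition and disposing of the degenerate single-block case, making sure the sub-instance fed to Theorem \ref{thm_1f_mc_apx} really has the same $x_1^*$ (resp. $x_n^*$) and the same $r_e$ as the original instance, and spelling out the reflection argument, since Theorem \ref{thm_1f_mc_apx} is stated only for $m_1$ but we need its analogue for $m_n$.
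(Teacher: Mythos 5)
Your proof is correct and follows essentially the same route as the paper: split the agents into the two contiguous blocks induced by the optimal two-facility solution (Lemma \ref{lm_continuous}), reduce each block to a one-facility instance whose optimal maximum cost is at most the two-facility optimum, and invoke Theorem \ref{thm_1f_mc_apx}. You are in fact slightly more explicit than the paper, which silently uses the mirror-image ($m_n$) version of Theorem \ref{thm_1f_mc_apx} for the right block and asserts exact optimality of each facility on its block, where your weaker inequality $OPT_{mc}(e,\bm{x}_{B_j})\le OPT$ already suffices.
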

\begin{proof}
  Let \(\bm{\ell}=(\ell_{mc}^1, \ell_{mc}^2)\) be the facility location profile that minimizes the maximum cost of \(\mathbf{x}\).
  Let \(\mathbf{x}_L=(x_1, \dots, x_p)\) and  \(\mathbf{x}_R=(x_{p+1}, \dots, x_n)\)  be the sequences of agents that select the facilities at \(\ell_{mc}^1\) and \(\ell_{mc}^2\), respectively. Note that $\mathbf{x}_L$
  and $\mathbf{x}_R$  are correctly defined by Lemma~\ref{lm_continuous}.  Thus, \(\ell_{mc}^1\) is the location that minimizes the maximum cost of \(\mathbf{x}_L\) and \(\ell_{mc}^2\) is the location that minimizes the maximum cost of \(\mathbf{x}_R\). We have two separate one-facility problems. The approximation ratio follows from Theorem \ref{thm_1f_mc_apx}.
\end{proof}

We next obtain a similar lower bound for all deterministic strategyproof mechanisms as in Theorem  \ref{thm_1f_mc_lb_d_2}.

\begin{proposition} \label{prop_2f_mc_lb_2}
    For any \(\alpha\ge 1\), no deterministic strategyproof mechanism $f(\cdot, \cdot)\!:\!\mathcal{E}(\alpha) \times \mathbb R^n \!\rightarrow\! \mathbb R^2$ can achieve a maximum cost approximation ratio less than \(2\).
\end{proposition}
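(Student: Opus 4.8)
The plan is to reduce the statement to the one-facility lower bound of Theorem~\ref{thm_1f_mc_lb_2}. I reuse the entrance fee function of that theorem, \(e(1-\alpha)=1\) and \(e(\ell)=\alpha\) for all \(\ell\neq 1-\alpha\) (so \(r_e=\alpha=e_{max}\)), and I enrich every instance of the one-facility construction with one additional \emph{anchor} agent placed at a point \(z=-M\) far to the left of \(1-\alpha\) and of all the other agents. Since the anchor is isolated and \(e(z)=e_{max}\), its single-agent optimal cost is exactly \(\alpha\), while the optimal two-facility solution for the enriched profile dedicates one facility to the anchor (cost \(\alpha\)) and the other to the remaining agents; on the long instances that drive Theorem~\ref{thm_1f_mc_lb_2} the latter cost already exceeds \(\alpha\), so \(OPT_{mc}\) of the enriched profile equals the one-facility optimum of the non-anchor agents. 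Hence the anchor changes neither \(r_e\) nor the benchmark we must compete with, but it lets us pin one of the two facilities.

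First I would show that any strategyproof two-facility mechanism \(f\) with max-cost ratio below \(2\) must, on an enriched profile, serve the anchor by a facility within \((2-\epsilon)\,OPT_{mc}\) of \(z\) and serve every non-anchor agent by the other facility, which therefore lies within \((2-\epsilon)\,OPT_{mc}\) of each of them; otherwise the anchor, or some non-anchor agent, would be \(\Omega(M)\) from both facilities, violating the ratio bound once \(M\gg OPT_{mc}\). In particular this rules out the mechanism splitting the non-anchor agents across the two facilities or interchanging the roles of the facilities between nearby profiles. Writing \(\phi\) for the resulting \emph{serving} facility, the anchor-facility is too far to help any non-anchor agent in either the reported or a deviated profile, so each such agent's cost under \(f\) equals her cost with respect to \(\phi\) alone; single-agent strategyproofness of \(f\) thus transfers to \(\phi\), and the approximation ratio of \(f\) bounds that of \(\phi\) on the relevant profiles. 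I would then re-run the proof of Theorem~\ref{thm_1f_mc_lb_2} with \(\phi\) in place of the one-facility mechanism: the ratio bound forces \(\phi\in(0,L_\epsilon]\) on the long profiles, strategyproofness forces \(\phi\) to take a single value \(\delta\le L_\epsilon\) there, and a reported position exceeding \(L_\delta=L_\epsilon+2\delta/\epsilon\) then contradicts the ratio bound. Extending to arbitrary \(n\) proceeds as in Theorem~\ref{thm_1f_mc_lb_2}, by co-locating the surplus agents. (Alternatively, one could first prove the entrance-fee analogue of the characterisation of \cite{fotakis2014power}---that a bounded-ratio strategyproof two-facility mechanism outputs \((x_1^*,x_n^*)\)---and read off the bound from a three-agent instance whose middle agent is equidistant, in the cost metric, from \(x_1^*\) and \(x_3^*\) while the optimum groups that agent with an endpoint.)

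The main obstacle is the first step, and in particular choosing the anchor distance \(M\) so that it works uniformly over all profiles touched by the proof of Theorem~\ref{thm_1f_mc_lb_2}, including the auxiliary profiles in which a non-anchor agent reports a far-away position. This is resolved by the observation that the one-facility argument only invokes such a profile to derive a contradiction from a \emph{fixed} hypothetical report, and, crucially, it establishes the a priori bound \(\delta\in(0,L_\epsilon]\) before \(\delta\) is known; since \(L_\epsilon\) and hence \(L_\delta\le L_\epsilon(1+2/\epsilon)\) depend only on \(\epsilon\) and \(\alpha\), one may fix in advance a single \(M\) dominating every report that arises, after which each strategyproofness comparison---which keeps the anchor fixed---goes through unchanged. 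A minor point is that placing the anchor at \(z=-M\) leaves \(r_e\) untouched, since \(e(z)=e_{max}\).
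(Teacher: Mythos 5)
Your proposal is correct and follows essentially the same route as the paper's proof of Proposition~\ref{prop_2f_mc_lb_2}: add one agent far to the left so that any mechanism with ratio below $2$ must dedicate one facility to her, and then re-run the one-facility argument of Theorem~\ref{thm_1f_mc_lb_2} on the remaining facility and agents. Your additional observations (that $OPT_{mc}$ of the enriched profile coincides with the one-facility optimum, and that all profiles and reports in that argument are bounded in terms of $\epsilon$ and $\alpha$, so a single sufficiently large anchor distance $M$ works uniformly) simply make explicit the details the paper leaves implicit.
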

\begin{proof}
    We use the same entrance fee function as in the proof of Theorem \ref{thm_1f_mc_lb_d_2}.
    Let \(D=10(\alpha+1)\) be a large real number.
    Consider the \(4\)-agent position profile \(\mathbf{x}=(-D-x,-D,0,x)\) where \(x>0\) is a variable. Then we can apply the same arguments in the proof of Theorem \ref{thm_1f_mc_lb_d_2} to the agent profiles \((-D-x, -D)\) and \((0,x)\), respectively.
\end{proof}

Recall that Proposition \ref{prop_2f_tc_d_lb_3re} has extended the total cost lower bounds for one-facility games in Theorem \ref{thm_1f_tc_d_lb_3re} to two-facility games by duplicating agent position profiles.
Similarly, we can also extend the maximum cost low bounds for one-facility games in Theorem \ref{thm_1f_mc_d_lb_3re} to two-facility games in the next proposition.

\begin{proposition} \label{prop_2f_mc_d_lb_3re}
    For any \(\alpha \!\ge\! 1\), there exists an entrance fee function \(e\) with max-min ratio \(r_e=\alpha\) such that no deterministic strategyproof mechanism \(f(e,\cdot)\!:\!\mathbb{R}^n\!\rightarrow\! \mathbb{R}^2\) can achieve a maximum cost approximation ratio less than \(3-\frac{28}{\sqrt{r_e^{2}+20r_e-12}+r_e+10}.\) 
    This implies that no deterministic strategyproof mechanism \(f(\cdot, \cdot)\!:\!\mathcal{E} \times \mathbb R^n \!\rightarrow\! \mathbb R^2\) can achieve a maximum cost approximation ratio less than $3$.
\end{proposition}

\begin{proof}
    Let \(d>0\) and \(D = d + 3 + 4(d+1)^{-1}\).
    Consider the following entrance fee function
    \begin{equation*}
            e(\ell) = \left\{
            \begin{aligned}
                &d, &&\text{if } \ell \in \{-1-L, 1-L,-1, 1\} \\
                &D, &&\text{otherwise}
            \end{aligned}
            \right.,
    \end{equation*}
    where \(L=10(D+1)\) is a large real number.
    Let $f(e,\cdot)\colon \mathbb{R}^n\rightarrow \mathbb{R}^2$ be a deterministic strategyproof mechanism.
    Let $\mathbf{x}_{1} = (-\epsilon-L, 2-L, -2, \epsilon)$, $\mathbf{x}_{2} = (-\epsilon-L, 2-L, -\epsilon, 2)$, and $\mathbf{x}_{3} = (-\epsilon-L, 2-L, -2, 2)$ be three agent position profiles where $\epsilon$ is a small positive.
    Let  \(\bm{\ell}=(\ell_1, \ell_2)\) be the location profile of the two facilities and assume w.l.o.g. that \(\ell_1\le \ell_2\).

    First, we consider profile $\mathbf{x}_{1}$. We have
    \begin{equation} \label{2f_mc_x1}
          MC(\mathbf{x}_{1}, \bm{\ell}) \left\{
          \begin{aligned}
            &= d+1+\epsilon, &&\text{if } \bm{\ell} = (1-L, -1) \\
            &= d+3, &&\text{if } \bm{\ell}\in \{(-1-L,-1),(1-L,  1), (-1-L,1)\} \\
            &> D + 1, &&\text{otherwise} \\
          \end{aligned}
          \right.,
    \end{equation}
    where we have \(D+1>d+4\) by our definition of \(D\).
    Thus, the optimal maximum cost for \(\mathbf{x}_1\) is \(d+1+\epsilon\), achieving by \((\ell_1, \ell_2)=(1-L,-1)\).
    Assume, for contradiction, that
    \begin{align} \label{2f_mc_asp_apx}
        \gamma(f(e, \cdot)) < \frac{d + 3}{d + 1 + \epsilon}
    \end{align}
    Together with \Cref{2f_mc_x1}, we know that \(MC(\mathbf{x}_1,f(e,\mathbf{x}_1))\) must be \(d+1+\epsilon\) and \(f(e,\mathbf{x}_1)=(1-L,-1)\).
    Similarly, by our assumption \Cref{2f_mc_asp_apx}, we get $f(e, \mathbf{x}_{2})=(1-L,1)$.

    Now consider profile $\mathbf{x}_{3}$.
    We thus have
    \begin{equation} \label{2f_mc_x3}
        TC(\mathbf{x}_{3}, \bm{\ell}) \left\{
        \begin{aligned}
          &= d+3, &&\text{if } \ell_1\in \{1-L,-1-L\}\}, \ell_2\in \{-1,1\}\\
          &\ge D+2, &&\text{otherwise}\\
        \end{aligned}
        \right..
    \end{equation}
    Thus, the optimal maximum cost for \(\mathbf{x}_3\) is \(d+3\).
    By our assumption \Cref{2f_mc_asp_apx} and the definition of \(D\), we have \(\gamma(f(e, x_{3})) < \frac{d + 3}{d + 1}= \frac{D+2}{d+3}.\) 
    Together with \Cref{2f_mc_x3}, we know that \(TC(\mathbf{x}_3,f(e,\mathbf{x}_1))\) must be \(d+3\). Then it holds that $f(e,\mathbf{x}_{1})$ locates one of the two facilities at \(-1\) or \(1\). If $-1\in f(e,\mathbf{x}_{3})$, then the agent at $-\epsilon$ in $\mathbf{x}_{2}$ can deviate to $-2$ and decrease her cost from $d + 1 + \epsilon$ to $d + 1 - \epsilon$, a contradiction.
    If $1\in f(e,\mathbf{x}_{3})$, then the agent at $\epsilon$ in $\mathbf{x}_{1}$ can deviate to $2$ and decrease her cost from $d + 1 + \epsilon$ to $d + 1 - \epsilon$, again, a contradiction.
    Due to the arbitrariness of $\epsilon > 0$,
    we have \(\gamma(f(e, \cdot))\ge \frac{d + 3}{d + 1}\).
    Observing that $r_{e} = 1 + \frac{3d + 7}{d^2 + d}$, we can get the desired result after simple computation.

\end{proof}

\noindent
\textbf{Lower bounds for randomized mechanisms.} Next, we show that the lower bound of \(2\) in Theorem \ref{thm_1f_mc_r_lb_2} for the one-facility game also holds for the two-facility game.

\begin{proposition} \label{prop_2f_mc_r_lb}
    There exists an entrance fee function \(e\) with \(r_e=+\infty\) such that no randomized strategyproof mechanism \(f(e, \cdot): \mathbb{R}^n \rightarrow \Delta(\mathbb{R}^2)\) can achieve a maximum cost approximation ratio less than \(2\).
    Hence, no randomized strategyproof mechanism \(f(\cdot, \cdot):\mathcal{E} \times \mathbb R^n \rightarrow \Delta(\mathbb R^2)\) can achieve a maximum cost approximation ratio less than $2$.
\end{proposition}
\begin{proof}
    We use the agent position profiles \(\mathbf{x}_1,\mathbf{x}_2\) and \(\mathbf{x}_3\) as that in the proof of Theorem \ref{thm_1f_mc_r_lb_2} for agents \(1\) and \(2\),
    and add one more agent (called agent \(0\)) far away on the left side of agent $1$ for each of the three agent position profiles.
    We use the same entrance fee function as that in the proof of Theorem \ref{thm_1f_mc_r_lb_2} except that we set the entrance fee at the location of agent \(0\) to \(0\).
    Then in order to get an approximation ratio less or equal to \(3\), one of the two facilities must always be located in the position of the new agent \(0\) with probability \(1\). Then the same arguments in Theorem \ref{thm_1f_mc_r_lb_2} can be applied to the second facility and agents \(1\) and \(2\) for each agent position profile.
\end{proof}

\section{Conclusions and Future Work} \label{sect_con}

This paper extends the classical facility location game on the real line by incorporating entrance fee functions, adding versatility to the model. The extension prompts a reevaluation of existing facility location games, like capacitated and heterogeneous facilities, opening avenues for broader applications.
Our arbitrary entrance fee function accommodates non-single-peaked agent preferences, posing new challenges in mechanism design. We address these challenges with straightforward yet robust mechanisms, supported by nearly-tight impossibility results. The proofs for upper and lower bounds employ novel techniques with potential independent interest.

A notable open problem is to narrow the gaps between our bounds in \Cref{tb_our_results}. In the classical model, randomized mechanisms such as the left-right-middle and proportional mechanisms achieve better ratios than deterministic mechanisms. However, these do not extend to our models while remaining strategyproof. Designing improved randomized mechanisms for our model, particularly to close the gap in the two-facility game's total cost, is an intriguing avenue for further exploration.
Another direction for exploration involves proving general structural results, specifically understanding the conditions under which Moulin's characterization of strategyproof mechanisms still holds, with a focus on preferences.

\bibliography{mybibfile}
\newpage
\appendix

\section{Proof of Lemma \ref{lm_ltc_bounded}}
\begin{proof}
    Suppose for contradiction that \(\ell_{tc} < x_1^*\) or \(\ell_{tc} > x_n^*\). We will only disprove \(\ell_{tc} < x_1^*\) since \(\ell_{tc} > x_n^*\) can be disproved by symmetry.

    \noindent
    \textbf{Case 1}: \(x_1^* < x_1\). We have \(\mathrm{cost}(x_1,\ell_{tc}) \geq \mathrm{cost}(x_1,x_1^*)\). Thus \(x_1^*\) dominates \(\ell_{tc}\). This implies \(TC(\bm{x},\ell_{tc}) \geq TC(\bm{x}, x_1^*)\). If \(TC(\bm{x},\ell_{tc}) = TC(\bm{x}, x_1^*)\), by the tie-breaking rule, we have \(\ell_{tc}=x_1^*\), a contradiction. Thus, \(TC(\bm{x},\ell_{tc}) > TC(\bm{x}, x_1^*)\), also a contradiction.

    \noindent
    \textbf{Case 2}: \(x_1^* \ge x_1\). Then we must have  \(\ell_{tc} < x_1\) because of Lemma  \ref{lm_xs_dominate_Ix}. Thus, \(x_1^*\) dominates \(\ell_{tc}\). This implies \(TC(\bm{x},\ell_{tc}) \geq TC(\bm{x}, x_1^*)\). Similarly, this will lead to a contradiction.

    In a similar way, we can prove \(\ell_{mc}\in [x_1^*, x_n^*]\).
\end{proof}

\section{Proof of Lemma \ref{lm_1f_alg}}
\begin{proof}
	Let \(\ell_{tc}^{(k)}\) be the location in \((x_k, x_{k+1}]\) that minimizes the total cost of all agents in \(N\) when the facility location is restricted in \((x_k, x_{k+1}]\). Then
		\begin{align*}
			\ell_{tc}^{(k)} &= \argmin_{\ell\in (x_k, x_{k+1}]} TC(\mathbf{x},\ell) = \argmin_{\ell\in (x_k, x_{k+1}]} (\sum_{i=1}^{i=n} |x_i-\ell|+n\cdot e(\ell)) \\
			& = \argmin_{\ell\in (x_k, x_{k+1}]} (\sum_{i=1}^{i=k} (\ell-x_i) + \sum_{i=k+1}^{i=n} (x_i-\ell) + n\cdot e(\ell))\\
			& = \argmin_{\ell\in (x_k, x_{k+1}]} (n\cdot e(\ell)+(2k-n)\ell-\sum_{i=1}^{i=k}x_i+\sum_{i=k+1}^{i=n}x_i).
		\end{align*}
	By Lemma \ref{lm_ltc_bounded}, it is sufficient to optimize over at most \(n+1\) intervals and we get \(\ell_{tc}\). To obtain \(\ell_{mc}\), we only need to optimize over agents \(1\) and \(n\).
\end{proof}

\section{Proof of Lemma \ref{lm_continuous}}
\begin{proof}
	Let \(\bm{\ell}=\{\ell_1, \dots, \ell_n\}\) be the facility location profile, \(p,q,r\in N\) be three agents such that \(x_p<x_q<x_r\). For agent \(i\in N\), let \(s_i\) be the location of the facility selected by \(i\).
	Assume \(p,r\) select the same facility, say \(\ell_1\), that is, \(s_p=s_r=\ell_1\).
	Let \(M\) be a large positive, say \(n\!\cdot\! \mathrm{TC}(\mathbf{x},\bm{\ell})\).
	Then we construct a new entrance fee function \(e':\mathbb{R}^n\rightarrow \mathbb{R}\) such that \(e'(\ell)=e(\ell_i)\) if \(\ell=\ell_i\) for a certain \(i\in N\) and \(e'(\ell)= M\) if \(\ell \ne \ell_i\) for any \(i\in N\). Then we have that the optimal location for an agent \(i\) under \(e'\) is \(s_i\).
	Denote by \(x_i^*(e')\) be the optimal location for agent \(i\) under entrance fee function \(e'(\cdot)\).
	Thus we have \(x_p^*(e')=x_r^*(e')=\ell_1\).
	Since \(x_p<x_q<x_r\), by \Cref{lm_xs_monotone}, we have \(x_p^*(e')\le x_q^*(e') \le x_r^*(e')\) and then \(x_q^*(e')=\ell_1\).
\end{proof}
        
\section{Tight examples for Theorem \ref{thm_1f_mc_apx}}
\begin{proof}
    When \(r_e\le 2\), the example in Figure \ref{fig_mc_tight_apx_re} shows that the approximation ratio of \(2\) is tight for our mechanism. Let \(e_{\mathrm{max}}\) and \(e_{\mathrm{min}}\) be two positives such that \(e_{\mathrm{max}}\le 2e_{\mathrm{min}}\). Let \(e_{\mathrm{max}}\le 2e_{\mathrm{min}}\). Let \(L> e_{\mathrm{max}}\) be a parameter.
    We consider the following entrance fee function: \(e(\frac{L}{2})=e_{\mathrm{min}}\) and \(e(\ell)=e_{\mathrm{max}}\) for any \(\ell\ne \frac{L}{2}\). There are \(n=2\) agents, and the agent position profile is \(\mathbf{x}=(0,2L)\). We stipulate that \(e_{\mathrm{min}}=e_{\mathrm{max}}-\frac{L}{2}+\epsilon\), where \(\epsilon\) is a small positive.
    In this example, the optimal facility location for the maximum cost is \(\frac{L}{2}\), and the optimal maximum cost is \(L+e_{\mathrm{max}}\). The facility location returned by our mechanism is \(0\) and the maximum cost is \(\frac{L}{2}+e_{\mathrm{min}}\). The approximation ratio is \(\frac{L+e_{\mathrm{max}}}{\frac{L}{2}+e_{\mathrm{min}}}.\) 
    Since \(e_{\mathrm{max}}\le 2e_{\mathrm{min}}\),  this ratio is no greater than \(2\) and approaches to \(2\) when \(L\rightarrow +\infty\).
    
    \begin{figure} [t]
        \centering
        \includegraphics[width=\myscale\linewidth]{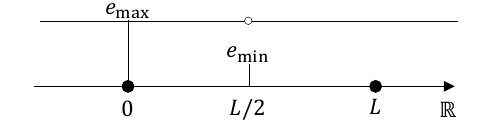}
        \caption{A tight example for the approximation ratio in Theorem \ref{thm_1f_mc_apx}. The black dots are agents. The upper horizontal line represents the entrance fee function \(e\). The height of the vertical line segment represents the entrance fee of the location.} 
        \label{fig_mc_tight_apx_re}
    \end{figure}
    
    When \(r_e>2\), we set \(e_{\mathrm{max}}\) and \(e_{\mathrm{min}}\) in \Cref{fig_mc_tight_apx_re} be such that \(e_{\mathrm{max}}> 2e_{\mathrm{min}}\). 
    Let \(e_{\mathrm{max}}<L<2e_{\mathrm{max}}\).
    We consider the following entrance fee function: \(e(\frac{L}{2})=e_{\mathrm{min}}\) and \(e(\ell)=e_{\mathrm{max}}\) for any \(\ell\ne \frac{L}{2}\). There are \(n=2\) agents, and the agent position profile is \(\mathbf{x}=(0,2L)\). We stipulate that \(e_{\mathrm{min}}=e_{\mathrm{max}}-\frac{L}{2}+\epsilon\), where \(\epsilon\) is a small positive.
    Note that \(e_{\mathrm{max}}<\frac{L}{2}+e_{\mathrm{min}}\) must hold since otherwise the optimal location for agent at \(0\) would be \(\frac{L}{2}\).
    In this example, the optimal facility location for the maximum cost is \(\frac{L}{2}\), and the optimal maximum cost is \(L+e_{\mathrm{max}}\). The facility location returned by our mechanism is \(0\) and the maximum cost is \(\frac{L}{2}+e_{\mathrm{min}}\).
    Recall that \(e_{\mathrm{min}}=e_{\mathrm{max}}-\frac{L}{2}+\epsilon\). The approximation ratio can be verified easily.
\end{proof}

\section{Proof of Theorem \ref{thm_1f_mc_r_lb_re}}
\begin{proof}
    Consider the following entrance fee function
    \begin{align*}
            e(\ell) = \begin{cases}
            \begin{aligned}
            &d, &&\text{if } \ell \in \{ -2, 2 \} \\
            &D = d + 2 - \frac{a}{2}, &&\text{otherwise} \\
            \end{aligned}
        \end{cases},
    \end{align*}
    where $a  = 2\sqrt{2} - 2 \approx 0.828$ is a constant number.
    Let $r_{e} = D/d$ and
    \begin{align*}
        \gamma^*=\frac{24\sqrt{2}+27}{47} - \frac{4}{(60\sqrt{2} + 97)r_{e} - (54\sqrt{2} + 92)}.
    \end{align*}
    Suppose for contradiction that $f(e,\cdot) \colon \mathbb{R}^n\rightarrow \mathbb{R}$ is a randomized strategyproof mechanism with maximum cost approximation ratio less than $\gamma^{*}$. Consider three agent position profiles $\mathbf{x}_{1} = (-2, a)$, $\mathbf{x}_{2} = (-a, 2)$ and $\mathbf{x}_{3} = (-2, 2)$.
    We get \(OPT_{mc}(e,\mathbf{x}_1)=MC(\mathbf{x}_1,-1)=d+1+a\), \(OPT_{mc}(e,\mathbf{x}_2)=MC(\mathbf{x}_2,1)=d+1+a\), and \(OPT_{mc}(e,\mathbf{x}_3)=MC(\mathbf{x}_3,\pm 1)=d+3\).
    Then, we have
    \begin{equation*}
        \begin{cases}
        \begin{aligned}
            \gamma(f(e,\mathbf{x}_{1}))
            &= \frac{\mathbb{E}_{\ell \sim f(e,\mathbf{x}_1)} [\max \bigl(\mathrm{cost}(-2,\ell), \mathrm{cost}(a,\ell)\bigr)]}{d + 1 + a};  \\
            \gamma(f(e,\mathbf{x}_{2}))
            &= \frac{\mathbb{E}_{\ell \sim f(e,\mathbf{x}_2)} [\max \bigl(\mathrm{cost}(-a,\ell), \mathrm{cost}(2,\ell)\bigr)]}{d + 1 + a};  \\
            \gamma(f(e,\mathbf{x}_{3}))
            &= \frac{\mathbb{E}_{\ell \sim f(e,\mathbf{x}_3)} [\max \bigl(\mathrm{cost}(-2,\ell), \mathrm{cost}(2,\ell)\bigr)]}{d + 3}.  \\
        \end{aligned}
    \end{cases}
    \end{equation*}
    Since \(f(\cdot, \cdot)\) is a strategyproof mechanism with approximation ratio less than \(\gamma^*\), it holds that
    \begin{equation} \label{mc_f_conditions}
        \begin{cases}
        \begin{aligned}
            &\gamma(f(e,\mathbf{x}_i)) < \gamma^*, ~\forall i\in \{1,2,3\};\\
            &\mathrm{cost}(a, f(e, \mathbf{x}_{1})))
            \leq \mathrm{cost}(a, f(e, \mathbf{x}_{3})); \\
            &\mathrm{cost}(-a, f(e, \mathbf{x}_{2}))
            \leq \mathrm{cost}(-a, f(e, \mathbf{x}_{3})) .\\
        \end{aligned}
    \end{cases}
    \end{equation}

    Next, we construct a randomized mechanism \(\hat{f}(e,\cdot)\) based on \(f(e,\cdot)\).
    For any agent position profile \(\mathbf{x}\notin \{\mathbf{x}_1, \mathbf{x}_2,\mathbf{x}_3\}\), let \(\hat{f}(e,\mathbf{x})=f(e,\mathbf{x})\).
    For \(i\in \{1,2,3\}\), \(\hat{f}(e,\mathbf{x}_i)\) is a discrete probability distribution over finitely many locations, and meanwhile, \(\hat{f}(e,\mathbf{x}_i)\) satisfies \Cref{mc_f_conditions} as \(f(e,\mathbf{x}_i)\). Intuitively, \(\hat{f}(e,\mathbf{x}_1)\) ``concentrates'' the probability of \(f(e,\mathbf{x}_1)\) over \(\mathbb{R}\setminus \{-1\}\) in \(\{1\}\) and  \(\hat{f}(e,\mathbf{x}_2)\) ``concentrates'' the probability of \(f(e,\mathbf{x}_2)\) over \(\mathbb{R}\setminus \{1\}\) in \(\{-1\}\).
    Formally, for \(\mathbf{x}_1\) and \(\mathbf{x}_2\), define
    \begin{equation*}
        \Pr[\hat{f}(e, \mathbf{x}_{1}) = \ell] =
        \begin{cases}
        \begin{aligned}
            &\Pr[f(e, \mathbf{x}_{1}) = -1], && \text{if}~\ell = -1 \\
            &1 - \Pr[f(e, \mathbf{x}_{1}) = -1], && \text{if}~\ell = 1 \\
            &0, && \text{otherwise}
        \end{aligned}
    \end{cases}
    \end{equation*}
    and
    \begin{equation*}
        \Pr[\hat{f}(e, \mathbf{x}_{2}) = \ell] =
        \begin{cases}
        \begin{aligned}
            &\Pr[f(e, \mathbf{x}_{2}) = 1], && \text{if}~\ell = 1 \\
            &1 - \Pr[f(e, \mathbf{x}_{2}) = 1], && \text{if}~\ell = -1 \\
            &0, && \text{otherwise}
        \end{aligned}.
    \end{cases}
    \end{equation*}
    Thus, \(\hat{f}(e,\mathbf{x}_1)\) and \(\hat{f}(e,\mathbf{x}_2)\) are two discrete distributions over \(\{-1,1\}\).
    Observe that when \(\ell \ne -1\), \(MC(\mathbf{x}_1,\ell)\) and \(\mathrm{cost}(a,\ell)\) are both minimized at \(\ell=1\), and when \(\ell \ne 1\), \(MC(\mathbf{x}_2,\ell)\) and \(\mathrm{cost}(-a,\ell)\) are both minimized at \(\ell=-1\).
    Thus, for \(i\in \{1,2\}\), \(\hat{f}(e,\mathbf{x}_i)\) performs better than \(f(e,\mathbf{x}_i)\).
    That is, it holds
    \begin{equation} \label{f_hatf_x1_x2}
        \begin{cases}
        \begin{aligned}
            &\gamma(\hat{f}(e, \mathbf{x}_{i})) \leq \gamma(f(e,\mathbf{x}_{i})) < \gamma^{*}; \\
            &\mathrm{cost}(a, \hat{f}(e, \mathbf{x}_{1}))
            \leq \mathrm{cost}(a, f(e, \mathbf{x}_{1}))
            \leq \mathrm{cost}(a, f(e, \mathbf{x}_{3})); \\
            &\mathrm{cost}(-a, \hat{f}(e, \mathbf{x}_{2}))
            \leq \mathrm{cost}(-a, f(e, \mathbf{x}_{2}))
            \leq \mathrm{cost}(-a, f(e, \mathbf{x}_{3})). \\
        \end{aligned}
    \end{cases}
    \end{equation}

    Now we construct \(\hat{f}(e,\mathbf{x}_3)\).
    Let $U_{-} = (-\infty, -a) \setminus \{-1\}$, $U_{+} = (a, +\infty) \setminus \{1\}$.
    According to the first mean value theorem for definite integrals, for each index $j \in \{-, +\}$, there exists $\xi'_{j} \in U_{j}$ such that
    \begin{equation*}
        \int_{U_{j}} \Pr[f(e,\mathbf{x}_3)=\ell]\cdot |\ell| \,\mathrm{d} \ell =
        |\xi'_{j}| \cdot \int_{U_{j}} \Pr[f(e,\mathbf{x}_3)=\ell] \,\mathrm{d} \ell =
        |\xi'_{j}| \cdot \Pr[\hat{f}(e, \mathbf{x}_{3}) \in U_{j}].
    \end{equation*}
    Thus, for any $\tau_{+}\in \{a, 2 \}$ and any $\tau_{-} \in \{-2, -a\}$, we have
    \begin{equation} \label{mc_int_U}
        \begin{cases}
            \int_{U_{-}} \Pr[f(e,\mathbf{x}_3)=\ell]\cdot |\ell - \tau_{+}| \,\mathrm{d} \ell
            = |\xi_{-}' - \tau_{+}| \Pr[\hat{f}(e, \mathbf{x}_{3}) \in U_{-}] \\
            \int_{U_{+}} \Pr[f(e,\mathbf{x}_3)=\ell]\cdot |\ell - \tau_{-}| \,\mathrm{d} \ell
            = |\xi_{+}' - \tau_{-}| \Pr[\hat{f}(e, \mathbf{x}_{3}) \in U_{+}]
        \end{cases}.
    \end{equation}
    Similarly, let $W_{-} = [-a, 0]$ and $W_{+} = (0, a]$, for $j \in \{-, +\}$ there exists $\eta_{j} \in W_{j}$ satisfying that
    \begin{align} \label{mc_int_W}
            \int_{W_{j}} \Pr[f(e,\mathbf{x}_3)=\ell]\cdot |\ell| \,\mathrm{d} \ell
            = \eta_{j}\cdot \Pr[f(e,\mathbf{x}_3)\in W_j].
    \end{align}
    Let $\delta > 0$ be a small positive such that $\xi_{-}' - \delta \neq -1$ and $\xi_{+}' + \delta \neq 1$, and let $\xi_{-} = \xi'_{-} - \delta$ and $\xi_{+} = \xi'_{+} + \delta$.
    Then we define \(\hat{f}(e,\mathbf{x}_3)\) as
    \begin{equation*}
        \Pr[\hat{f}(e, \mathbf{x}_{3}) = \ell] =
        \begin{cases}
        \begin{aligned}
            &\Pr[f(e, \mathbf{x}_{3}) \in U_{+}], && \text{if}~\ell = \xi_{+} \\
            &\Pr[f(e, \mathbf{x}_{3}) \in U_{-}], && \text{if}~\ell = \xi_{-} \\
            &\Pr[f(e, \mathbf{x}_{3}) = \ell], && \text{if}~\ell = \pm 1 \\
            &\Pr[f(e, \mathbf{x}_{3}) \in [-a, 0] ], && \text{if}~\ell = \eta_{-} \\
            &\Pr[f(e, \mathbf{x}_{3}) \in (0, a]], && \text{if}~\ell = \eta_{+} \\
            &0, && \text{otherwise}
        \end{aligned}
        \end{cases}.
    \end{equation*}
    Thus, \(\hat{f}(e,\mathbf{x}_3)\) is a discrete distribution over \(\{\xi_{-}, -1, \eta_{-}, \eta_{+}, 1, \xi_{+}\}\). According to \Cref{mc_int_U,mc_int_W}, we have that for any \(\tau\in \{-a,a\}\) it holds that \(\int_{-\infty}^{+\infty} |\tau-\ell|\cdot \Pr[\hat{f}(e,\mathbf{x}_3)=\ell]\mathrm{d}\ell \le \delta+\int_{-\infty}^{+\infty} |\tau-\ell|\cdot \Pr[f(e,\mathbf{x}_3)=\ell]\mathrm{d}\ell.\) 
    Considering the entrance fees, we have
    \begin{align}\label{f_hatf_x3_cost}
        \mathrm{cost}(\tau, f(e,\mathbf{x}_3)) < \mathrm{cost}(\tau, \hat{f}(e,\mathbf{x}_3)) \le \delta+\mathrm{cost}(\tau, f(e,\mathbf{x}_3)).
    \end{align}
    Similarly, by \Cref{mc_int_U,mc_int_W}, we have that
    \begin{align*}
        &\int_{-\infty}^{+\infty} \max(|2-\ell|,|2+\ell|)\cdot \Pr[\hat{f}(e,\mathbf{x}_3)=\ell] \, \mathrm{d}\ell \\
        &= \int_{-\infty}^{0}|2-\ell|\cdot \Pr[\hat{f}(e,\mathbf{x}_3)=\ell]\, \mathrm{d}\ell+\int_{0}^{+\infty}|2+\ell|\cdot \Pr[\hat{f}(e,\mathbf{x}_3)=\ell]\, \mathrm{d}\ell\\
        &\le \delta + \int_{-\infty}^{0}|2-\ell|\cdot \Pr[f(e,\mathbf{x}_3)=\ell]\, \mathrm{d}\ell+\int_{0}^{+\infty}|2+\ell|\cdot \Pr[f(e,\mathbf{x}_3)=\ell]\, \mathrm{d}\ell\\
        &= \delta + \int_{-\infty}^{+\infty} \max(|2-\ell|,|2+\ell|)\cdot \Pr[f(e,\mathbf{x}_3)=\ell] \, \mathrm{d}\ell.
    \end{align*}
    Again, considering the entrance fees, we have
    \begin{align} \label{f_hatf_x3_tc}
        MC(\mathbf{x}_3, f(e,\mathbf{x}_3)) < MC(\mathbf{x}_3, \hat{f}(e,\mathbf{x}_3)) \le \delta + MC(\mathbf{x}_3, f(e,\mathbf{x}_3)).
    \end{align}
    By \Cref{f_hatf_x1_x2,f_hatf_x3_cost,f_hatf_x3_tc}, for small enough \(\delta\), we have
    \begin{equation} \label{f_hatf_x3}
            \begin{cases}
                \begin{aligned}
                    &\gamma(f(e,\mathbf{x}_3) < \gamma(\hat{f}(e,\mathbf{x}_3) < \gamma^*; \\
                    &\mathrm{cost}(a,\hat{f}(e,\mathbf{x}_1)\le \mathrm{cost}(a,\hat{f}(e,\mathbf{x}_3)); \\
                    &\mathrm{cost}(-a,\hat{f}(e,\mathbf{x}_2)\le \mathrm{cost}(-a,\hat{f}(e,\mathbf{x}_3)).
                \end{aligned}
            \end{cases}
    \end{equation}
    By \Cref{f_hatf_x1_x2,f_hatf_x3}, we know that \(\hat{f}(\cdot,\cdot)\) satisfies the conditions in \Cref{mc_f_conditions} as \(f(\cdot,\cdot)\), i.e.,
    \begin{numcases}{}
        \gamma(\hat{f}(e,\mathbf{x}_{i})) < \gamma^{*},~\forall i \in \{ 1, 2, 3 \} \nonumber; \\
        \mathrm{cost}(a, \hat{f}(e, \mathbf{x}_{1})) \leq \mathrm{cost}(a, \hat{f}(e, \mathbf{x}_{3}))  \label{mc_c1_ge_c3};\\
        \mathrm{cost}(-a, \hat{f}(e, \mathbf{x}_{2})) \leq \mathrm{cost}(-a, \hat{f}(e, \mathbf{x}_{3})) \label{mc_c2_ge_c3}.
    \end{numcases}
    
    With the above setup, we can now proceed to derive a lower bound of $\gamma(\hat{f}(e,\cdot))$. For the sake of presentation, let $p_{i}(\ell) = \Pr[\hat{f}(e, \mathbf{x}_{i}) = \ell]$ for $i \in \{ 1,2,3 \}$.  Thus, for agent at $a$ in $\mathbf{x}_1$ and agent at $-a$ in $\mathbf{x}_2$, 
        \begin{equation*}
        \begin{aligned}
            &\mathrm{cost}(a, \hat{f}(e, \mathbf{x}_{1})) = p_{1}(-1) \cdot (d + 1 + a) + p_{1}(1) \cdot (d + 1 - a) 
            = d + 1 + a - p_{1}(1) \cdot 2a; \\
            &\mathrm{cost}(-a, \hat{f}(e, \mathbf{x}_{2})) = p_{2}(1) \cdot (d + 1 + a) + p_{2}(-1) \cdot (d + 1 - a) 
            = d + 1 + a - p_{2}(-1) \cdot 2a.
        \end{aligned}
        \end{equation*}
    If the agent at $a$ in $\mathbf{x}_1$ deviates to $1$, her cost would change to
    \begin{align*}
        \mathrm{cost}(a,\hat{f}(e,\mathbf{x}_3))&= p_3(\xi_{-})(a-\xi_{-}+d+2-\frac{a}{2}) + p_3(\xi_{+})(a-\xi_{+}+d+2-\frac{a}{2})+\\
        &\hspace{0.45cm}p_3(-1)(a+1+d) + p_3(1)(1-a+d)+\\
        &\hspace{0.45cm}p(\eta_{-})(a-\eta_{-}+d+2-\frac{a}{2}) + p(\eta_{+}) (a-\eta_{+}+d+2-\frac{a}{2})\\
        &=d+1+a+p_3(\xi_{-})(1-\frac{a}{2}-\xi_{-})+p_3(\xi_{+})(1-\frac{5a}{2}-\xi_{+})+\\
        &\hspace{0.45cm}p_3(\eta_{-})(1-\frac{a}{2}-\eta_{-})+p_3(\eta_{+})(1-\frac{a}{2}-\eta_{+})+p_3(1)\cdot (-2a).
    \end{align*}
    By \Cref{mc_c1_ge_c3}, we have
    \begin{align} \label{mc_c3_c1_ge_0}
        \begin{split}
            \mathrm{cost}(a,\hat{f}(e,\mathbf{x}_3))\!-\!\mathrm{cost}(a,\hat{f}(e,\mathbf{x}_1))
            &= p_3(\xi_{-})(1\!-\!\frac{a}{2}\!-\!\xi_{-})+p_3(\xi_{+})(1\!-\!\frac{5a}{2}\!-\!\xi_{+})+\\
            &\hspace{.45cm}p_3(\eta_{-})(1\!-\!\frac{a}{2}\!-\!\eta_{-})+p_3(\eta_{+})(1\!-\!\frac{a}{2}\!-\!\eta_{+})+\\
            &\hspace{.45cm}p_3(1)\cdot(-2a)+p_1(1)\cdot 2a\\
            &\ge 0.
        \end{split}
    \end{align}
    Similarly, for the agent at \(-a\) in \(\mathbf{x}_2\), we have
    \begin{align} \label{mc_c3_c2_ge_0}
        \begin{split}
            \mathrm{cost}(-a,\hat{f}(e,\mathbf{x}_3))\!-\!\mathrm{cost}(-a,\hat{f}(e,\mathbf{x}_2))
            &\!=\! p_3(\xi_{\!-})(1\!-\!\frac{5a}{2}\!-\!\xi_{\!-})\!+\!p_3(\xi_{\!+})(1\!-\!\frac{a}{2}\!+\!\xi_{\!+})+\!\\
            &\hspace{.35cm}p_3(\eta_{\!-})(1\!-\!\frac{a}{2}\!+\!\eta_{\!-})\!+\!p_3(\eta_{\!+})(1\!-\!\frac{a}{2}\!+\!\eta_{\!+})+\!\\
            &\hspace{.35cm}p_3(-1)\cdot(-2a)+p_2(-1)\cdot 2a\\
            &\!\ge\! 0.
        \end{split}
    \end{align}
    
    Next we consider the approximation ratio of \(\hat{f}(e, \cdot)\) for \(\mathbf{x}_1, \mathbf{x}_2\) and \(\mathbf{x}_3\). We have
    \begin{align}
        &\gamma(\hat{f}(e, \mathrm{x}_{1})) = p_{1}(-1) + p_{1}(1) \cdot \frac{d + 3}{d + 1 + a} = 1 + p_{1}(1) \cdot \frac{2 - a}{d + 1 + a} < \gamma^{*}; \label{mc_r1_rs}\\
        &\gamma(\hat{f}(e, \mathrm{x}_{2})) = p_{2}(1) + p_{2}(-1) \cdot \frac{d + 3}{d + 1 + a} = 1 + p_{2}(-1) \cdot \frac{2 - a}{d + 1 + a} < \gamma^{*} \label{mc_r2_rs};\\
        \begin{split} \nonumber 
            \gamma(\hat{f}(e, \mathrm{x}_{3})) 
            &= p_3(\xi_{-})\cdot \frac{d+2-\frac{a}{2}+2-\xi_{-}}{d+3}+p_3(\xi_{+})\cdot \frac{d+2-\frac{a}{2}+2+\xi_{+}}{d+3}+\\
            &\hspace{.45cm}p_3(\eta_{-})\cdot \frac{d+2-\frac{a}{2}+2-\eta_{-}}{d+3}+p_3(\eta_{+})\cdot \frac{d+2-\frac{a}{2}+2+\eta_{+}}{d+3}+
            p_3(-1)+p_3(1) \\
            &= 1+\frac{p_3(\xi_{-})\cdot (1-\frac{a}{2}-\xi_{-})+p_3(\xi_{+})\cdot(1-\frac{a}{2}+\xi_{+})}{d+3}+\\
            &\hspace{.45cm} \frac{p_3(\eta_{-})\cdot (1-\frac{a}{2}-\eta_{-})+p_3(\eta_{+})\cdot (1-\frac{a}{2}+\eta_{+})}{d+3}\\
            &< \gamma^{*}.\\
        \end{split}
    \end{align}
    Thus, \Cref{mc_r1_rs} \(+\) \Cref{mc_r2_rs} gives
    \begin{equation} \label{mc_rs_lb_1}
        \gamma^* > 1+\frac{(2-a)\cdot (p_1(1)+p_2(-1))}{2(d+1+a)}.
    \end{equation}
    Then, by \Cref{mc_c3_c1_ge_0,mc_r1_rs}, we can derive
    \begin{equation}\label{mc_rs_lb_2_1}
        \gamma^* > 1+\frac{p_3(\xi_{+})\cdot 2a + p_3(\eta_{+})\cdot 2\eta_{+}-p_1(1)\cdot 2a + p_3(1)\cdot 2a}{d+3}.
    \end{equation}
    Similarly, by \Cref{mc_c3_c2_ge_0,mc_r2_rs}, we have
    \begin{equation}\label{mc_rs_lb_2_2}
        \gamma^* > 1+\frac{p_3(\xi_{-})\cdot 2a - p_3(\eta_{-})\cdot 2\eta_{-}-p_2(-1)\cdot 2a + p_3(-1)\cdot 2a}{d+3}.
    \end{equation}
    By adding \Cref{mc_rs_lb_2_1,mc_rs_lb_2_2}, we get
    \begin{align*}
        \gamma^* > 1 + \frac{p_3(\xi_{+}) a + p_3(\xi_{-}) a+p_3(\eta_{+}) \eta_{+} - p_3(\eta_{-}) \eta_{-}+p_3(1) a + p_3(-1)  a}{d+1}
        - \frac{(p_1(1)+p_2(-1)) a}{d+1}.
    \end{align*}
    Since \(p_3(\xi_{-})+p_3(\xi_{+})+p_3(\eta_{-})+p_3(\eta_{+})+p_3(-1)+p_3(1)=1\), we have
    \begin{equation} \label{mc_rs_lb_2}
        \gamma^* > 1 + \frac{a\cdot(1-(p_1(1)+p_2(-1)))}{d+1}.
    \end{equation}
    Combining \Cref{mc_rs_lb_1,mc_rs_lb_2}, we establish a lower bound for \(\gamma^*\):
    \begin{equation} \label{mc_rs_lb_3}
        \gamma^* > 1+\max\Bigl(\frac{(2-a)\cdot (p_1(1)+p_2(-1))}{2(d+1+a)}, \frac{a\cdot(1-(p_1(1)+p_2(-1)))}{d+1}\Bigr).
    \end{equation}
    The right hand side of \Cref{mc_rs_lb_3} is minimized when \(\frac{(2-a)\cdot (p_1(1)+p_2(-1))}{2(d+1+a)} = \frac{a\cdot(1-(p_1(1)+p_2(-1)))}{d+1},\) 
    which  gives us \(p_1(1)+p_2(-1)=\frac{2 a (a + d + 1)}{2 a^{2} + a (d - 1) + 2 (d + 3)}\in [0,1].\) 
    Recall that \(r_e=1+\frac{4-a}{2d}\) and \(a=2\sqrt{2}-2\). 
    Simple computation shows that the right hand side of \Cref{mc_rs_lb_3} is minimized to \(\frac{24\sqrt{2}+27}{47} - \frac{4}{(60\sqrt{2} + 97)r_{e} - (54\sqrt{2} + 92)}\),
    which contradicts the assumption in the beginning of the proof.
\end{proof}
\end{document}